\newif\iftwocolumnversion
\twocolumnversionfalse
\iftwocolumnversion
  \documentclass[10pt,twocolumn]{article}
\else
  \documentclass[11pt]{article}
\fi

\usepackage[utf8]{inputenc}

\usepackage{dcolumn}
\usepackage{bm}

\usepackage{xcolor}
\definecolor{blueviolet}{rgb}{0.2, 0.2, 0.6}
\definecolor{webgreen}{rgb}{0,.5,0}
\definecolor{webbrown}{rgb}{.6,0,0}
\usepackage[utf8]{inputenc}
\usepackage[T1]{fontenc}
\usepackage[margin=1in]{geometry}
\usepackage{microtype}
\usepackage{amsmath,amssymb,amsfonts,amsthm,mathtools,bm}
\usepackage{graphicx}
\usepackage{subcaption}
\usepackage{booktabs}
\usepackage{tabularx}
\usepackage{xcolor}
\definecolor{blueviolet}{rgb}{0.2,0.2,0.6}
\definecolor{webgreen}{rgb}{0,.5,0}
\definecolor{webbrown}{rgb}{.6,0,0}
\definecolor{softboxblue}{RGB}{232,241,249}
\definecolor{softboxblueframe}{RGB}{79,119,158}
\usepackage[colorlinks=true,urlcolor=webbrown,linkcolor=blueviolet,citecolor=webgreen]{hyperref}
\usepackage[nameinlink,capitalize]{cleveref}
\usepackage[square,sort&compress,numbers]{natbib}
\usepackage[most]{tcolorbox}
\usepackage{cleveref}

\usepackage{titletoc}

\usepackage{enumerate}
\usepackage{comment}
\usepackage{xpatch}
\usepackage{braket}
\usepackage{physics}
\usepackage{amsthm}
\usepackage{tikz}
\usepackage{commath}
\usepackage{mathtools}
\usepackage{qcircuit}
\usepackage{algorithm}
\usepackage{algorithmicx}
\usepackage{algpseudocode}
\usepackage{tabto}
\usepackage{pgf-umlsd}
\usepackage{bm}
\usepackage{makecell}
\usepackage{subcaption}
\usepackage{pgfplots}
\pgfplotsset{compat=1.18}
\usepgfplotslibrary{fillbetween}

\usepackage{caption}
\usepackage{graphicx}
\usepackage{tabularx}
\usepackage{booktabs}
\usepackage{tabularray}

\let\Pr\relax
\DeclareMathOperator*{\Pr}{\mathbf{Pr}}

\newcommand{\eps}{\varepsilon}

\newtheorem{theorem}{Theorem}
\newtheorem{prop}{Proposition}
\newtheorem{lemma}{Lemma}
\newtheorem{corollary}{Corollary}

\theoremstyle{remark}

\theoremstyle{definition}
\newtheorem{definition}{Definition}

\newtheorem{example}{Example}

\numberwithin{equation}{section}

\newcommand{\nocontentsline}[3]{}
\let\origcontentsline\addcontentsline
\newcommand\stoptoc{\let\addcontentsline\nocontentsline}
\newcommand\resumetoc{\let\addcontentsline\origcontentsline}

\crefname{prop}{Proposition}{Propositions}
\Crefname{prop}{Proposition}{Propositions}
\crefname{theorem}{Theorem}{Theorems}
\Crefname{theorem}{Theorem}{Theorems}
\crefname{lemma}{Lemma}{Lemmas}
\Crefname{lemma}{Lemma}{Lemmas}
\crefname{definition}{Definition}{Definitions}
\Crefname{definition}{Definition}{Definitions}

\newenvironment{widefigure}{\iftwocolumnversion\begin{figure*}[t]\else\begin{figure}[t]\fi}{\iftwocolumnversion\end{figure*}\else\end{figure}\fi}
 
\definecolor{todocol}{rgb}{1,0,0.0}
\definecolor{fixmecol}{rgb}{1.0,0.3,0.3}
\definecolor{ideacol}{rgb}{1.0,0.75,0.0}
\definecolor{probcol}{rgb}{1.0,0.1,0.0}
\definecolor{green-munsell}{rgb}{0.0, 0.66, 0.47}
\definecolor{navyblue}{rgb}{0.0, 0.0, 0.5}

\usepackage[framemethod=TikZ]{mdframed}
\usepackage{xcolor}
\definecolor{softboxblueframe}{RGB}{79,119,158}
\newmdenv[
  linecolor=softboxblueframe,
  linewidth=0.8pt,
  roundcorner=5pt,
  innertopmargin=10pt,
  innerbottommargin=10pt,
  innerrightmargin=10pt,
  innerleftmargin=10pt,
  backgroundcolor=white,
  frametitleaboveskip=-0.5\ht\strutbox,
  frametitlealignment=\raggedright,
  nobreak=true,
]{styledbox_base}
\newenvironment{informal}[1]{
  \begin{styledbox_base}[frametitle={\colorbox{white}{\space#1\space}}]
}{
  \end{styledbox_base}
}

\title{Routing Anonymity and Identifiability \\ of Noisy Quantum Hardware}

\author{
	Ben Priestley\\
	\small \textit{Centre for Quantum Information and Foundations, DAMTP, University of Cambridge} \\
	\small \textit{Quantum Software Lab, School of Informatics, University of Edinburgh}
	\and
	Mina Doosti\\
	\small \textit{Quantum Software Lab, School of Informatics, University of Edinburgh}
}

\date{}

\begin{document}
\maketitle

\begin{abstract}

    Present-day quantum computing---and its very likely future---is cloud-based, where a user submits a circuit to be executed by a service provider on proprietary backend hardware. While providers may wish to hide implementation details, scheduling choices, or even which physical device was used, noisy finite-shot outputs can carry backend-specific “fingerprints”---information imprinted in the classical output distribution that can reveal the backend identity. So far, such fingerprints have mostly been studied from a benchmarking perspective, for example as a tool for verification, with only limited attention to the privacy considerations for both users and providers in these scenarios.
    
    This work develops the first formal framework for backend identifiability and the corresponding privacy notion. We introduce an operational backend-identifiability game and use it to formalise \textit{routing anonymity} as a security notion for quantum cloud services. We show that backend identifiability is exactly a hypothesis-testing problem and prove that, under passive i.i.d. access to a single backend, routing anonymity decays exponentially at the Chernoff rate. We also establish a utility-anonymity trade-off, imposing fundamental limits on how much backend-specific information can be removed from classical outputs without degrading their usefulness. In addition, we observe that, for noisy quantum hardware, identifying fingerprints are inherently an intermediate-depth phenomenon, and establish a formal depth principle using Pauli-transfer-matrix tools.
    
    We complement the theory with experiments on a platform that reflects this real-life scenario, namely Amazon Braket on AWS, and run our experiments on different hardware platforms, including ion-trap and superconducting quantum processors. We observe 87--90\% classification between superconducting backends and 96--100\% classification across physical platforms, and find that identifiability can survive several natural forms of post-processing. Overall, these results establish routing anonymity as a distinct security requirement for quantum cloud computing, and provide a framework for quantifying and controlling the resulting utility-anonymity trade-off.
\end{abstract}

\newpage

\newpage
\section{Introduction}

The prospect of quantum computation has long transitioned from a theoretical novelty to an active practical pursuit of useful quantum advantages across many different domains: cryptography and cryptanalysis \cite{Shor-1994, proos2003shor, grassl2016applying, roetteler2017quantum, kaplan2015quantum, kaplan2016breaking, laarhoven2015finding, Priestley_2026, chailloux2021lattice}; quantum chemistry and material science \cite{aspuru2005simulated, peruzzo2014variational, kandala2017hardware, reiher2017elucidating, google2020hartree, von2021quantum, clinton2024towards}; many-body, condensed matter, and high-energy physics \cite{bernien2017probing, lloyd1996universal, martinez2016real, ebadi2021quantum, atas20212, meth2025simulating}, fundamental physics and quantum gravity \cite{Zhu-2025a, li2019quantum, east2021spin, priestley2025lqg, mielczarek2019spin, van2023experimental}, combinatorial optimisation \cite{farhi2014quantum, lucas2014ising, ebadi2022quantum}, quantum machine learning \cite{rebentrost2014quantum, schuld2019quantum, havlivcek2019supervised, liu2021rigorous, huang2021power, huang2022quantum, yin2025experimental}, etc. Unlike the traditional (classical) computing paradigm of individual hardware ownership,\footnote{At least prior to the unprecedented demand for incredibly large-scale compute that has followed the development of \textit{large} large language models; even in the classical computing world, remote/cloud services are becoming standard practice, and increasingly so!} large-scale quantum technologies are inevitably more suited to cloud-based access and ``quantum-as-a-service (QaaS)'' settings \cite{nguyen2024quantumcloudcomputingreview, Golec_2024}. In this paradigm, a user writes a description for a quantum circuit---implementing some task that she is interested in---and submits it to a service provider, who then goes off and executes this circuit for her on some quantum processing unit (QPU) and later returns a finite-shot sample output.

As far as the user is concerned, cloud-based quantum computing offers an interface through which implementation details are abstracted to such an extent that the hardware itself can feel almost interchangeable. The same interface may expose superconducting, trapped-ion, neutral-atom, simulator backends, etc. often from several hardware providers all through a single high-level account \cite{AWSBraketDocs, AWSBraketService}. Physically, of course, the QPUs are decidedly \textit{not} interchangeable to the service provider, who must care a great deal about each QPU's calibration history, native gates, topology, crosstalk, readout asymmetries, compilation path, slowly drifting environment, and such related hardware-specific details. Consequently, in the noisy near-term of quantum computation, while the provider can feign some privacy about his choice of backend by hiding the label on his execution routing, it is has been observed that a `fingerprint' specific to the chosen backend will be left in the classical outputs that he returns to the user \cite{Martina2021learningNoiseFingerprint, RoyGhoshGhosh2025forensics, smith2023fast, mi2021short, mutolo2025quantumcomputerfingerprintingusing, Wu-2024}. There are two perspectives to take about this observation: (\textit{user-side}) the user herself may want to be able to verify that her circuit has actually be run on the claimed backend hardware without having to rely on a promise alone; and (\textit{provider-side}) the provider may want to keep secret his choices of backend routing, with some guarantee about the user's inability to violate that privacy. In either case, the operational question is equivalent:
\begin{center}
	\emph{How much route information is leaked by classical outputs of noisy quantum hardware?}
\end{center}

There is a large literature on security notions and verifiability for delegated quantum computation, but its dominant emphasis is on \textit{user-side} privacy and correctness. Blind quantum computation and verifiable blind quantum computation protect the user's input, output, and computation from an untrusted quantum server, while also allowing the user to detect incorrect behaviour in the verifiable variants \cite{Childs2005secure, Broadbent2009UBQC, FitzsimonsKashefi2017VUBQC, Dunjko2014composable, badertscher2020security, poshtvan2025selectively}. Quantum verification protocols ask whether a classical verifier, or a verifier with limited quantum capabilities, can certify the correctness of a quantum computation performed by a quantum prover \cite{Gheorghiu2019verification, Aharonov2017interactive, Mahadev2018verification, chia2020classical, bartusek2022succinct, zhang2021succinct, kalai2026classically, leichtle2021verifying}. Related approaches based on remote or oblivious state preparation, quantum homomorphic encryption, and hardware-assisted secure execution similarly aim to protect the user's computation or to certify the service outcome under additional cryptographic or hardware assumptions \cite{GheorghiuVidick2019RSP, Mahadev2018homomorphic, Alagic2018vQFHE, Ma2022QEnclave}. Indeed, the aforementioned observations about backend-specific fingerprints \cite{Martina2021learningNoiseFingerprint, RoyGhoshGhosh2025forensics, smith2023fast, mi2021short, mutolo2025quantumcomputerfingerprintingusing, Wu-2024} largely take the user-side perspective in their experimental or analytical approaches. Even where provider-side concerns are discussed, the protected mathematical object is never formally given as the anonymity of the provider's route \cite{coupel2025securityvulnerabilitiesquantumcloud}.

In this work, we take the much-neglected \textit{provider-side} perspective and ask the complementary security questions; namely, to what extent can the provider's routing choice be anonymised without invalidating the promised service? It is worth emphasising the distinction of this question from the standard user-side perspective, in that we reverse the goals to describe mechanisms through which information is removed/obscured rather than inferred (either experimentally or by analysis of known/assumed noisy behaviour). Our novelty is then the proposal of a unified framework for backend identification which establishes, to our knowledge, the first formalisation of \textit{routing anonymity} for cloud quantum computing, with supporting statements about how this privacy notion interacts with the information implicit in backend-specific noise fingerprints. We describe precisely how privacy depends jointly on the workload of circuits submitted by the user, the finite-shot resolution of the executing hardware, repeated access, any post-processing of the measured output, and the promised utility. In fact, for the latter, it is interesting in and of itself to describe his privacy as a property of the very service he intends to provide, which we note is explicitly missing from the current (user-side) literature.

We propose a game of backend identifiability in which the provider secretly selects a backend, executes a user-chosen circuit from some agreed ensemble, and returns a classically post-processed outcome distribution from which the user should guess the backend label (\cref{sec:routing_game}). Our framework formalises how this game can be reduced exactly to a hypothesis testing question, and thus can be modeled statistically and sufficiently described as combinations of binary identification games (\cref{sec:statistical_reduction}). Under a `persistent routing' extension of the game (which assumes a fixed backend label and passive i.i.d. probing from the user), we can observe a similar reduction to see that anonymity decays at the Chernoff rate. Following this, we describe how post-processing can act as a  suppression mechanism for route-specific noise, and then incorporate the promised utility of the service to place fundamental bounds on the degree of anonymity that can be obtained (\cref{sec:post_processing_and_utility}). This is formalised as a utility-anonymity no-free-lunch theorem that informs the provider how he should mathematically model his utility with respect to the anonymity he desires. We can understand backend identifiability as existing in an `intermediate-depth' window, and prove this characteristic of distinguishability in a Pauli-transfer-matrix model (\cref{sec:intermediate_depth}). Finally, we also proposes a new channel (psuedo-)distance which is designed to act as a tight measure on backend distinguishabilty tailored to the user's workload, which can then be used to provide some simple sufficient conditions on anonymity (\cref{sec:noise_channels}).

To support our framework and theoretical results, we design a series of experiments using Amazon Web Services (AWS) Braket, running on Rigetti's \textit{Ankaa-3}, IQM's \textit{Garnet}, and IonQ's \textit{Aria-1}, to demonstrate how finite-shot transcripts carry route-specific information in practice (\cref{sec:experiments}). These are similar in spirit to existing fingerprinting work \cite{Martina2021learningNoiseFingerprint, RoyGhoshGhosh2025forensics, smith2023fast, mi2021short, mutolo2025quantumcomputerfingerprintingusing, Wu-2024}, although we distinguish our experimentation in a few key ways to better suit our unique perspective. Firstly, we design multiple suites of experiments to understand different mechanisms of distinguishability; our `depth-varied' experiments use random circuits of varying depth---a far more challenging classification task than we see in most existing literature---while our `time-varied' experiments more closely match those of other works. Secondly, we investigate multiple forms of post-processing on classical outcomes to understand the utility-based arguments of our work, and observe the intermediate-depth principle's reaction over different transcript forms. Thirdly, we make an explicit and unique effort to visualise how noise fingerprints are learned by simple classifiers, giving an intuition for how separability between backends presents with respect to the workload circuits. We also offer some preliminary ideas and exploration of fingerprint forecasting, in which we indicate how we may be able to predict the evolution of the fingerprint in time, without explicit assumptions on the noise model.

\subsection{Security Motivations and Applications} \label{sec:example_applications}

Motivations for a notion of `routing anonymity' can be relatively direct, particularly in analogy to the classical case, in which routing metadata is ubiquitously security-sensitive. For example, mix networks and onion routing aim to hide communication paths, unlink senders and receivers, or limit what observers can infer from traffic metadata \cite{Chaum1981mix, Reed1998onion, Dingledine2004Tor}. More generally, the provider may have any number of valid reasons for wanting to keep his routing choices hidden; e.g. keeping private his scheduling policies, confidentiality agreements that he may have with his hardware providers which require particular implementation details to be kept private from competing parties, preventing users from interpreting QPU loads or inferring times of increased stress, etc. 

If the reader would only humour us for a moment, we will now describe a select few toy examples in which a motivation for provider-side security is required immediately by design. The motivation need not be so contrived in typical cases of quantum cloud computation, as we note above, but it is nevertheless interesting to acknowledge specialised settings in which anonymity of routing choice is not only desirable but also application critical. The reader who is convinced about the relevance of routing anonymity can skip this section, humourlessly.

In the following examples, we use the backend as a private \textit{verifier/oracle} (\cref{example:quantum_lock_and_key}) of a user-supplied object; as a private \textit{issuer} whose signatures must be resistant to imitation (\cref{example:noisy_tokens}); and as a hidden \textit{target} whose identity reveals vulnerabilities (\cref{example:probing}).

\begin{example}[Quantum Lock-and-Key] \label{example:quantum_lock_and_key}	
	Suppose that Alice guards a collection of locked vaults, whose contents are so valuable that even she herself is not allowed to know the secret keys which open them. She is only permitted to blindly try presented keys on the requested vault's lock and observe whether it opens. We can model this setting as follows: define some quantum circuit to be the `key' and associate with each vault a noisy quantum backend to be the `lock', and allow Alice to observe the backend-specific output of the circuit prior to locking the vault. Now, when Bob comes along and presents a circuit and asks to open a particular vault, Alice routes it to the appropriate backend (assuming a private mapping between the vault and backend label), then passes the output through a decision function to decide whether the key fits the lock to open the vault.
	
	If Bob can collect decisions over an ensemble of candidate key circuits, then we better guarantee that such a collection does not provide him with enough information to identify the lock! Knowing the lock makes designing the key somewhat trivial. Hence, in this setting, the anonymity of Alice's routing choice is critical and should be preserved, even after (finitely-)many attempts.
\end{example}

The above quantum lock-and-key example can be viewed more generally as a private oracle whose hidden physical backend defines a decision rule, and whose output is then a backend-specific accept/reject bit. This specific set up is interesting not only for its overt requirements for routing anonymity, but also because it demonstrates a problem setting in which multiple layers of security are required. Notably, Alice is essentially asked to be a middle-woman routing a given key to a privately-known lock, thereby setting up privacy of the key for Bob (assuming that Alice cannot look at the given key, she herself is not able to unlock the vault, except maybe after many attempts) and privacy of the lock for Alice. A `double-blind' security setting, if you like.

\begin{example}[Noisy-Issued Tokens] \label{example:noisy_tokens}
	Suppose that Alice is a bank who secretly chooses one of several noisy backends to act as the active mint for each time period. To issue a token, she picks a public classical serial number $s$ which generates a corresponding quantum circuit $C_s$, then executes $C_s$ on the active backend and compresses its output into a short classical signature of the token to give to a customer, Bob. To later verify a presented signature, Alice can use her privileged knowledge of the then-active backend to test whether its particular fingerprint is sufficiently present in the signature.
	
	Now suppose that Bob collects a genuine token and attempts to produce a counterfeit with a new serial number $s'$. If he is able to identity the active mint from his valid signature, he can tailor his counterfeit to imitate its fingerprint on $C_{s'}$. But, if the routing can be anonymised and he cannot infer the active mint, then his counterfeit signature will fail to carry the appropriate fingerprint and Alice will deem it to be invalid.
\end{example}

\begin{example}[Selective Probing] \label{example:probing}
	Suppose that Alice has a collection of $N$ backends which are each very slow on a disjoint family of circuits, and we assume that they are arbitrarily fast at everything else. We might also assume that the union of these families covers all possible circuits; i.e. for any circuit that a user, Bob, can define, exactly one backend will execute it slowly. She allocates each backend to a distinct, non-overlapping time slot lasting $(1/N)$-th of the day, and routes any of Bob's requests to the corresponding backend whenever he happens to ask. She waits until the end of the time slot before returning the classical outcome distributions of any executions within that slot.
	
	Assume that Bob knows exactly which family each backend is slow on, but that he is only allowed to prepare $M\ll N^2$ probe circuits to be submitted whenever he likes. Upon receiving a response from the backend at the end of slot $t$, he can look at the output and design new probes for $t+1$. His task is to overload the $N$-th and final backend such that it takes longer than its prescribed $1/N$ time and Alice cannot go home at the end of the day. 
\end{example}

In the above example, the condition on the number of probes that Bob can submit prevents him from brute-force finding a slow circuit for each backend in turn; instead he must learn something about how his probe was routed. After $N-1$ time slots, he will ideally have guessed at which backend each slot had been assigned to, and be left at the final slot with an idea about which backend remains. He can then design a probe which he knows will be slow for this final backend, and if his guess is right, he will win the game and make Alice late for bed. Clearly then, it is in Alice's interest to ensure that Bob cannot learn how she routes his requests in each slot. This example could be viewed in analogy to a distributed denial-of-service (DDoS) attack, tailored to the quantum cloud.

As a final applications remark, note that here we are \textit{not} interested in quantum process tomography, due to its demanding resource requirements. Full generic process tomography scales exponentially with system size, gate-set tomography is deliberately invasive, and scalable benchmarking or noise-learning methods make structural choices about which figures of merit to estimate \cite{Magesan2011scalable, BlumeKohout2017GST, Hamilton2020scalable, Harper2020efficient, VandenBerg2023pec, eisert2020quantum}. System-level benchmarks, such as quantum volume and cycle benchmarking, compare hardware capabilities and error rates, but they are not designed from a security or privacy perspective~\cite{Cross2019quantumvolume, Erhard2019cyclebenchmarking, Resch2021benchmarking}. In practice, a user trying to identify a backend may succeed using a much cheaper-to-extract notion.

\subsection{Contributions}

We summarise our primary contributions as follows:

\begin{itemize}
	\item \textbf{Unified framework and formal privacy notion.} We introduce the backend identifiability game and formally define routing anonymity as an operational privacy notion for hiding the service provider's routing choices in the quantum cloud setting. We extend to a multi-round version of the game under assumed i.i.d. passive access and describe the rate of anonymity decay via Chernoff information.
	
	\item \textbf{Statistical characterisation of backend identifiability.} We reduce optimal backend identification to classical hypothesis testing over route-induced transcript laws, allowing us to use known statistical results to describe distinguishing bias via total variation distance, and give a sufficient condition for general routing anonymity with respect to constituent binary identification games.
	
	\item \textbf{Utility-anonymity trade-off.} Formalising post-processing as the provider's principal mechanism for suppressing route-specific information, we introduce ideas about utility preservation and prove a corresponding no-free-lunch result about its relationship with routing anonymity.
	
	\item \textbf{Workload-relative conditions for anonymity.} We introduce a workload-probed channel (pseudo-)distance that measures average channel separation over the permitted probe workloads, yielding a tighter practical bound on routing anonymity than worst-case diamond norm.
	
	\item \textbf{Intermediate-depth principle.} In a contractive Pauli-transfer-matrix model with a common mixing component and small backend-specific perturbations, we prove that route-specific noise signals initially accumulate in depth before rapidly decaying. This formalises an intermediate-depth window in which backend identification is most effective.
	
	\item \textbf{Experimental implementation.} Using both random and structured workloads, several forms of transcript post-processing, and designing temporal experiments, we demonstrate our framework in practice to show that finite-shot outputs from real QPUs can expose substantial route information. Routing anonymity is then of considerable practical interest.
\end{itemize}

\section{Framework Overview and Informal Theoretical Results} \label{sec:overview_framework_and_theoretical_results}

In this section, we present our theoretical framework intuitively and informally to give a technical overview of our main results. The formal framework is deferred to Appendix \ref{appendix:framework}, and precise statements and theoretical results to Appendix \ref{appendix:theoretical_results}. 

\subsection{A Game of Routing} \label{sec:routing_game}

    We formalise the discussed security notion as a game played between a \textit{user} with a desire to learn backends, and a \textit{provider} with the converse desire to keep secret his routing choice. In this context, a \textit{`backend'} is any machinery with the capacity to execute a quantum circuit and return measurements thereof; e.g. a single quantum hardware device, a cohort of devices, a portion of a device, or some subset of registers, etc. We call the provider's choice of backend a \textit{`routing'}---he selects a \textit{route} through which to execute a given circuit. Generally, we use the terms `backend' and `route' interchangeably, but the rule of thumb is that the user wants to \textit{identify the backend} that runs her computation while the provider wants to \textit{anonymise his route} (to a backend) from the user.

    The interaction between these parties looks like the following: the user probes the provider with a \textit{workload (quantum) circuit} of her choosing, and the provider then dutifully executes that circuit through a route of his choosing over a finite precision of \textit{shots} to produce an empirical probability distribution, which is thus naturally associated with both the backend and circuit. Before handing this distribution back to the user, the provider passes it through some \textit{(deterministic) post-processing map} in accordance with the service that he is promising to provide. The idea is that the user is not generally interested in the raw outcome sequences of her circuit, but rather some property; e.g. expectations of observables, energies of Hamiltonians, ground states, correlator functions, decision outcomes, etc.; hence it is prudent on the provider's part to reveal only the property of interest when his privacy is of concern.
    
    Following this interaction, the user has received a \textit{transcript} in association with her known circuit and the unknown backend. The question---and the unmistakable fun in this game---is whether there is enough information accessible in the transcript to identify this backend. Other questions linger on the periphery; e.g. which ensemble of workload circuits makes accessible the most amount of usable information for this identification; which classes of post-processing maps are most hiding of information in the exposed transcript; what relationship does the capacity for identification have with the number of shots? All very interesting and natural aspects of the game.

    The security game can be described as follows:

    \begin{informal}{Game 1: Backend/Route Identification Game}
        The user fixes a collection of $n$-qubit workload circuit ensembles $\{\mu_d\}_{d\geq 0}$, and a finite number $m\in\mathbb{N}$ of shots. The provider fixes a known (deterministic) post-processing map $\phi:\mathcal{Y}\to\mathcal{X}$, and a prior $\pi\in\Delta(k)$ over a backend set $\mathcal{D}=\{D_1,\dots,D_k\}$. The game is then played like:
        \begin{enumerate}
            \item Provider randomly samples a hidden route $I\sim \pi$.
            \item User chooses a circuit depth $d\in\mathbb{N}$ and randomly samples a depth-$d$ workload circuit $C\sim\mu_d$, then submits it to the provider.
            \item Provider executes $C$ over $m$ independent shots on backend $D_I$, producing a raw empirical distribution $\hat{p}_{I,C}\in\Delta_m(\{0,1\}^n)$.
            \item Provider produces a transcript $X=\phi(C,\hat{p}_{I,C})$, then returns it to the user.
            \item User observes $X$ and outputs a guess $\tilde{I}$, winning the game if $\tilde{I}=I$.
        \end{enumerate}
    \end{informal}
    
    We can extend this game into a more realistic setting by allowing the user to probe the backend over multiple rounds. There are many plausible constructions for a multi-round version of the above identification game; we consider a \textit{persistent routing} setting in which the provider fixes a backend prior to playing, allows the user to submit multiple workload circuits to the backend, and then executes them in bulk before returning a lengthened transcript. Importantly, this is a passive and i.i.d. access model as opposed to e.g. adaptive access models that permit the user to sample her $t$-th circuit after having observed the $(t-1)$-th transcript. This assumption simplifies our later observations, but is certainly worth developing in future work.

    \begin{informal}{Game 2: Persistent Routing Game}
        The backend/route identification game can be extended over $T$ rounds as follows:
        \begin{enumerate}
            \item The user samples a sequence $(C_1,\dots,C_T)$ of workload circuits, with each $C_t\sim\mu_{d}$ sampled from the same depth-$d$ ensemble, and submits them \textit{all} the provider; 
            \item The provider then produces a sequence $(X_1,\dots,X_T)$ of independent transcripts in turn by executing and post-processing each circuit on the same fixed backend $D_I$; and
            \item The user finally receives the lengthened transcript and again guesses $\tilde{I}$.
        \end{enumerate}
    \end{informal}

    At this point, we can explicitly state the provider's goal in this game. In contrast to the user's goal of identifying the backend, the provider succeeds if the user cannot identify the backend with probability substantially higher than random guessing. A central contribution of this work is to show how routing anonymity emerges through different instantiations of the game and its parameters: which classes of post-processing maps, which circuit ensembles, how many shots, and related choices preserve anonymity of the route, and to what degree? First, we define the routing anonymity with respect to the described games as follows.

    \begin{informal}{(Informal \cref{def:anonymity}) Routing Anonymity}
        We say that the (persistent) backend/route identification game has \textit{$\varepsilon$-anonymity} if the user's probability to guess the chosen route is bounded to within $\varepsilon>0$ of the baseline random guess.
    \end{informal}

\subsection{Identification and Anonymity via Statistical Tests} \label{sec:statistical_reduction}

    Any ensemble $\mu$ of workload circuits, executed on a particular backend $D_i$ over $m$ shots, has an associated \textit{raw transcript law} $Q_i(\mu,m)$ by which the empirical distribution is sampled. After being passed through a post-processing map $\phi$, we can instead refer to an \textit{induced transcript law} $P_i(\mu,m,\phi)=\phi_\# Q_i(\mu,m)$. For brevity, write $Q_i$ and $P_i$ when the context is clear.

    \begin{informal}{(Informal \cref{thm:backend_identifiability_reduces_to_hypothesis_testing}) Backend Identifiability Reduces to Hypothesis Testing}
        Identifying the backend producing a received transcript $X$ is exactly a statistical hypothesis test with hypotheses of the form $H_i:X\sim P_i$; accepting $H_i$ implies guessing backend $D_i$. 
        
        In particular, between two equally-likely backends $D_1$ and $D_2$, the probability $p_s^\star$ to correctly identify the backend is given by,
        \[
        p^\star_s = \frac{1}{2}\Big( 1 + \mathrm{TV}(P_1, P_2) \Big) \quad ,
        \]
        where $\mathrm{TV}(\cdot,\cdot)$ denotes the usual total variation (TV) distance between probability distributions.
    \end{informal}

    Our first result, \cref{thm:backend_identifiability_reduces_to_hypothesis_testing}, makes the conceptually simple connection between classical discrimination of probability distributions from an observed sample and the backend identifiability game, as a security notion. The observation that bounds guessing probability is then imported immediately from the abundance of known classical literature in hypothesis testing, and allows us to obtain a clean relation for \textit{distinguishing bias} $\beta_{D_1,D_2}$ in the uniform-binary case as exactly the TV distance between the respective induced laws; i.e. we have $\beta_{D_1,D_2}\equiv \mathrm{TV}(P_1, P_2)$ bias, beyond random guessing.

    Throughout this work, as in \cref{thm:backend_identifiability_reduces_to_hypothesis_testing}, it will be convenient to restrict our attention to this `uniform-binary' setting between a pair of equally-likely backends. In fact, the following \cref{prop:pairwise_indistinguishability_is_sufficient_for_global_anonymity} should settle our nerves about this restriction by showing that pairwise indistinguishability is sufficient to obtain a global anonymity in the fully general setting among an entire set of finitely-many backends with arbitrary priors. Hence, we can reason about anonymity via simple pairwise arguments, at least in that we can provide sufficient conditions for the general setting.

    \begin{informal}{(Informal \cref{prop:pairwise_indistinguishability_is_sufficient_for_global_anonymity}) Pairwise Indistinguishability is Sufficient}
        The optimal probability to correctly identify among a set of finitely-many backends with arbitrary priors can be decomposed as a sum of pairwise distinguishing biases. 
        
        In particular, the excess advantage $\mathrm{Adv}$ (i.e. beyond random guessing) in the fully general setting can be bounded by the worse-case distinguishing bias among pairs of backends;
        \[
        \mathrm{Adv} \leq \max_{i\neq j} \mathrm{TV}(P_i, P_j) \quad .
        \]
    \end{informal}

    Our second important result, \cref{thm:persistent_routing_reduces_to_chernoff_testing}, then extends \cref{thm:backend_identifiability_reduces_to_hypothesis_testing} into the persistent routing setting to observe that, under our passive i.i.d. assumptions, the provider's anonymity decays exponentially at the Chernoff rate between the relevant induced laws. 

    \begin{informal}{(Informal \cref{thm:persistent_routing_reduces_to_chernoff_testing}) Persistent Routing Reduces to Chernoff Testing}
        In the passive i.i.d. access model of persistent routing over $T$ rounds, the hypotheses of \cref{thm:backend_identifiability_reduces_to_hypothesis_testing} become $H_i:(X_1,\dots,X_T)\sim P_i^{\otimes T}$, again with acceptance of $H_i$ corresponding to guessing $D_i$. 

        Then, between two equally-likely backends $D_1$ and $D_2$, the probability $p_s^\star(T)$ to correctly identify the backend after $T$ rounds satisfies,
        \[
        1-p_s^\star(T) = \exp\Big( -T \cdot D_\mathrm{Ch}(P_1,P_2) + o(T) \Big) \quad ,
        \]
        where $D_\mathrm{Ch}(\cdot,\cdot)$ denotes the usual Chernoff information between probability distributions.
    \end{informal}

\subsection{Post-Processing and a Utility-Anonymity Trade-Off} \label{sec:post_processing_and_utility}

    A fundamental observation about post-processing is that there do not exist deterministic maps which can \textit{increase} information. Generically speaking, the information contained within a raw observation, relevant to identifying the backend, tends to be stripped out by the post-processing. A natural corollary is then that raw observations are informally maximal for backend identifiability.

    This reveals a very useful role for post-processing: it is perhaps the principal means by which the provider can reduce the amount of information leaked to the user in the returned transcripts, and thus improve routing anonymity. But to what end? Trivially, choosing a deliberately destructive map, say the constant map $\phi(\cdots)=1$, clearly reduces both the TV distance and Chernoff information to zero and so preserves \textit{anonymity} indefinitely, but there is a clear lack of \textit{utility} for the user. Unless the promised service is similarly trivial, this is an abuse of power by the provider in the backend/route identification game that abandons the practical motivation for such a setting.

    Introduce a \textit{utility map} $u$ representing the service that user is actually looking for. Subsequently, call the induced law $P_i(\dots,u)$ under this map the \textit{utility law}. Now, we can say that the provider's post-processing map map $\phi$ \textit{exactly preserves} the utility $u$ if the user is able to decode the output of $\phi$ to obtain her desired utility output (as if from $u$). We should note that $u$ is a quite abstract object, in much the same way that $\phi$ itself is left abstract, and indeed it may be natural to discuss entire classes of utility and post-processing maps, for example if the service is left reasonably flexible. We now have our third important result: that anonymity and utility have a trade-off relationship. 

    \begin{informal}{(Informal \cref{thm:utility_preserving_no_free_lunch}) Utility-Preserving No-Free-Lunch}
        If a deterministic post-processing map $\phi$ exactly preserves a utility $u$, then the induced laws $P_i(\dots,\phi)$ under $\phi$ must contain at least as much information as the utility laws $P_i(\dots,u)$. Any further loss of information necessarily degrades the utility.

        Hence, if the provider is restricted to choose only utility-preserving $\phi$, then he is only able to remove route-specific information which is extraneous to the promised utility $u$; he cannot remove route-specific information already encoded in the utility output under this restriction.
    \end{informal}

    To be clear, by `degrades the utility', we mean that the transcript that the user ultimately receives contains only partial information about the property of interest represented by the utility, and hence she can only partially reconstruct the property. For example, the provider may only be providing an estimate about the ground-state energy, with an error proportional to the degree to which the utility is not preserved. 

    We can further note approximate versions of utility preservation and the accompanying no-free-lunch argument of \cref{thm:utility_preserving_no_free_lunch}. The definitions of such versions should be delicately crafted with respect to the utility, as the approximation measure should yield a suitable interpretation for how the utility is degraded. We further discuss this problem and suggest solutions in Appendix \ref{appendix:post_processing_and_utility}.

\subsection{Identifiability as an Intermediate-Depth Phenomenon} \label{sec:intermediate_depth}

    An important observation can at this point be made about the role of depth $d$ in the user's workload circuits. In writing her ensemble of circuits $\mu_d$ with respect to a chosen depth $d$, we should ask what interval she should consider choosing $d$ to lie in. We have the following intuition:
    \begin{itemize}
    	\item At very low depths (or in  the noiseless setting), backends are indistinguishable because they implement very close to (or exactly) the same ideal circuit; and
    	\item Beyond some large depth, if noise becomes dominated by a common strongly mixing component, then any route-specific information is typically washed out, yielding indistinguishability; but
    	\item In the `intermediate-depth' window, enough noise will have accumulated to reveal backend-specific structure, yet not so much that everything has collapsed to universal fixed-point behaviour---this is the interesting regime.
    \end{itemize}

    To formalise this idea, in Appendix \ref{appendix:depth_window_principle} we record a pair of idealised theorems in a simplified Pauli-transfer/noisy-channel model. We prove these results by moving to Pauli-transfer-matrix (PTM) coordinates, wherein we can represent the effect of each noisy layer of a circuit as a linear contraction of the traceless Pauli components of the state. We assume that this contraction can be suitably decomposed into a common depolarising-like mixing component and a small backend-specific perturbation, then look at how the difference between two backends evolves with depth.
    
    \begin{informal}{(Informal \cref{thm:identifiability_degrades_at_large_depths} and \ref{thm:identifiability_grows_at_small_depths}) Identifiability Lies in Intermediate Depths}
        Denote circuit depth by $d$, and let $\lambda,\varepsilon\in(0,1)$ be noise-controlling parameters such that $\lambda+\varepsilon<1$. At shallow depths, backend-specific noise fingerprints initially grow at a rate $\Omega(d\lambda^{d-1})$. Over all depths, they are bounded by $\mathcal{O}(d(\lambda+\varepsilon)^{d-1})$, vanishing exponentially at large depths. Hence, optimal backend identifiability lies in some `intermediate' depth window.
    \end{informal}
    
    \begin{figure}[h]
    	\centering
    	\begin{tikzpicture}
    		\begin{axis}[
    			width=0.8\textwidth,
    			height=0.5\textwidth,
    			xmin=0, xmax=30,
    			ymin=0, ymax=0.9,
    			xlabel={Circuit Depth $d$},
    			ylabel={Expected Distinguishing Bias $\beta_{D_i,D_j}(d)$},
    			yticklabels={},
    			xticklabels={},
    			axis lines=left,
    			domain=0:30,
    			samples=100,
    			smooth,
    			legend style={
    				at={(0.95,0.7)},
    				anchor=east,
    				draw=none,
    				fill=none
    			},
    			legend cell align=left,
    			legend image post style={line width=1.2pt},
    			clip=false
    			]
    			
    			\pgfmathsetmacro{\lam}{0.75}      
    			\pgfmathsetmacro{\eps}{0.05}      
    			\pgfmathsetmacro{\clow}{0.15}     
    			\pgfmathsetmacro{\cup}{0.35}      
    			\pgfmathsetmacro{\dzero}{8}       
    			
    			\addplot[name path=upperfull, very thick, red]
    			{\cup*x*(\lam+\eps)^(x-1)};
    			\addlegendentry{\cref{thm:identifiability_degrades_at_large_depths}; $\mathcal{O}(d(\lambda+\varepsilon)^{d-1})$}
    			
    			\addplot[name path=upperpert, draw=none, domain=0:\dzero]
    			{\cup*x*(\lam+\eps)^(x-1)};
    			
    			\addplot[name path=lowerpert, very thick, black, domain=0:\dzero]
    			{\clow*x*(\lam)^(x-1)};
    			\addlegendentry{\cref{thm:identifiability_grows_at_small_depths}; $\Omega(d\lambda^{d-1})$}
    			
    			\addplot[very thick, black, dashed, domain=\dzero:30]
    			{\clow*x*(\lam)^(x-1)};
    			
    			\addplot[
    			fill=gray!25,
    			draw=none
    			] fill between[
    			of=lowerpert and upperpert
    			];
    			
    			\addplot[
    			black,
    			dashed,
    			thick
    			] coordinates {(\dzero,0) (\dzero,0.9)};
    			
    			\node[anchor=south] at (axis cs:\dzero,-0.09) {$d_0$};
    			
    		\end{axis}
    	\end{tikzpicture}
    	\caption{Illustration of the lower and upper bounds on expected distinguishing bias by circuit depth, as given by \cref{thm:identifiability_degrades_at_large_depths} and \cref{thm:identifiability_grows_at_small_depths}. The shaded region between the curves covers the perturbative range $0\leq d\leq d_0$ of depths satisfying \cref{thm:identifiability_grows_at_small_depths}'s additional assumptions.}
    	\label{fig:intermedaite_depth_phenomenon_illustration}
    \end{figure}
    
\subsection{Noise Channel Bounds and Relation to Noise Characterisation} \label{sec:noise_channels}

    We observe (in \cref{prop:distinguishing_bias_is_an_average_case_notion}) that the distinguishability of backends in our framework is governed by average-case notions of TV distance on only the states actually induced by the user's probing protocol, rather than any worst-case separation of the full channels over all possible inputs and ancillas. The natural channel distance to consider is thus not e.g. diamond norm but instead a particular \textit{workload-probed channel (pseudo-)distance}, denoted by $\delta_\mu(\cdot,\cdot)$ and defined with respect to the user's chosen ensemble $\mu$ of workload circuits.

    Usefully, the following \cref{prop:workload_probed_distance_tighter_than_diamon_norm} shows that the workload-probed channel (pseudo-)distance serves as a tighter bound on the distinguishing bias than the worst-case diamond norm, and is thus the more practical sufficient condition to target in the backend/route identification game. This is very much in agreement with recent ideas that these kinds of average-case distances are the more relevant quantities for NISQ-era applications than worst-case norms like diamond norm \cite{Maciejewski2023operational, Maciejewski2023exploring}.  

    \begin{informal}{(Informal \cref{prop:workload_probed_distance_tighter_than_diamon_norm})}
        Between any two backends, $D_i$ and $D_j$, the workload-probed (pseudo-)distance $\delta_\mu(\mathcal{N}_i,\mathcal{N}_j)$ between their associated noisy channels, $\mathcal{N}_i$ and $\mathcal{N}_j$, is a tighter bound than the diamond norm;
        \[
        \beta_{D_i,D_j}(\mu,m) \leq m\delta_\mu(\mathcal{N}_i,\mathcal{N}_j) \leq \frac{m}{2}\Big\| \mathcal{N}_i-\mathcal{N}_j \Big\|_\diamond \quad .
        \]
    \end{informal}

    A natural corollary, following \cref{prop:workload_probed_distance_tighter_than_diamon_norm}, is that two backends can be made indistinguishable by ensuring that the workload-probed (pseudo-)distance between their associated noisy channels is not too large (more immediately than ensuring the diamond norm is not too large). Recalling the pairwise argument of \cref{prop:pairwise_indistinguishability_is_sufficient_for_global_anonymity}, we can then give a sufficient condition for routing anonymity in the fully general setting---that no two pair of backends among the full set have large workload-probed channel (pseudo-)distance. Again, this is an experimentally-achievable condition in the NISQ setting that often cannot be said about diamond norm.

\section{Experimental Results} \label{sec:experiments}

In this section, we complement the above framework with experimental evidence that real finite-shot output distributions, from QPUs offered via current quantum cloud services, carry learnable route information. It is not the intent to reconstruct device noise models, or perform full tomography, but rather to instantiate the backend identification game of \cref{def:formal_identification_game} (and its extension to persistent routing via \cref{def:persistent_routing}) in a realistic setting and exemplify the risk to privacy.

Our experiments are run via AWS Braket on three QPUs: Rigetti's \textit{Ankaa-3} and IQM's \textit{Garnet}, both superconducting devices, and IonQ's \textit{Aria-1}, an ion-trap device. We categorise classification into two qualitatively different settings: the \textit{like-type} setting between the two superconducting devices, and the \textit{differing-type} setting generally between Ankaa-3 and Garnet for consistency. We also categorise experiments into two families of (5-qubit) probe circuits: varying depth circuits composed of Haar-random two-qubit gates arranged in an alternating brickwork pattern (``depth-varied''); and fixed GHZ preparation circuits repeated in time (``time-varied''). Broadly speaking, the former understands the backend identification problem in the case where the user cannot rely on hand-picked outcomes to probe deliberately, while the latter understands the extension to persistent routing and the temporal structure of noise signals.

For each circuit, we obtain a finite-shot histogram over computational-basis bitstrings. We then train simple classifiers on datasets of transcript representations (``features'' in the machine learning lexicon), including both for raw outcomes and several forms of problem-independent post-processing. Comparing how classification performance depends on feature type gleans empirical insight into how route-specific information may survive post-processing. Detailed experimental setup and a full account of our results is given in Appendices \ref{appendix:experiment_circuits}--\ref{appendix:classification_results}.

\begin{widefigure}
	\centering
	\begin{subfigure}[t]{0.48\linewidth}
		\centering
		\includegraphics[width=\columnwidth]{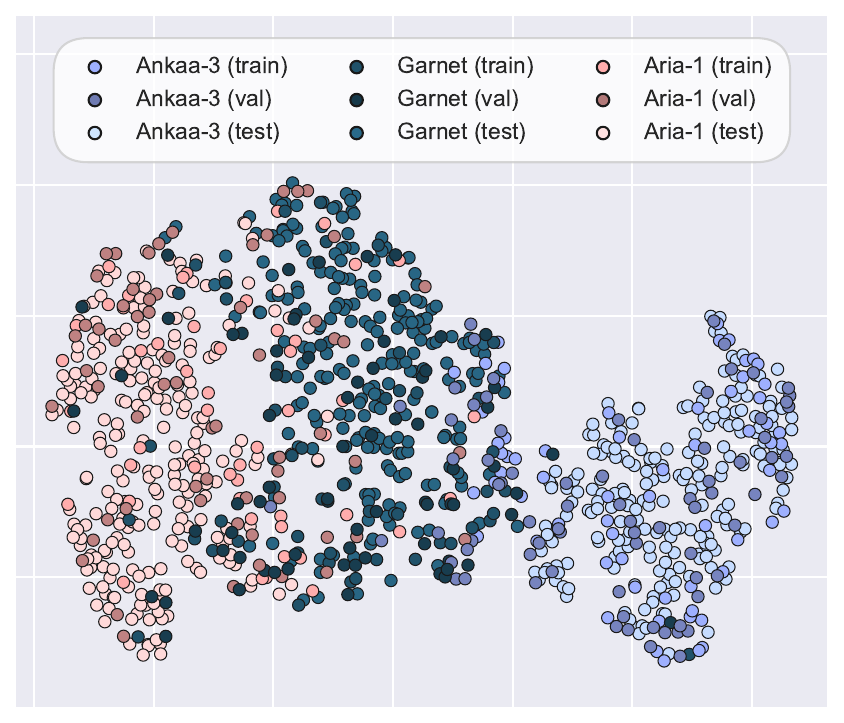}
		\caption{All-type classifier embeddings.}
		\label{fig:all_devices_raw_5}
	\end{subfigure}\hfill
	\begin{subfigure}[t]{0.48\linewidth}
		\centering
		\includegraphics[width=\columnwidth]{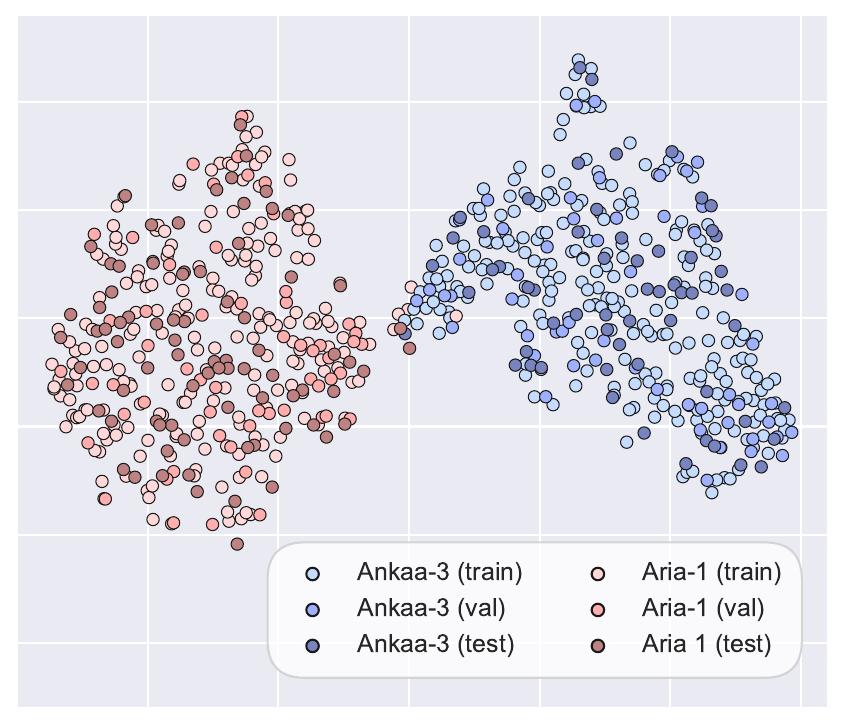}
		\caption{Differing-type classifier embeddings.}
		\label{fig:differ_type_raw_10}
	\end{subfigure}
	\caption{$t$-SNE dimensionality reduction plot visualising the 16-dimensional features extracted from the penultimate layer of the classifier, given raw depth-10 data. We plot the entire dataset, across all train-val-test splits to indicate the generality with which latent representations have been learned.}
	\label{fig:latent_representations}
\end{widefigure}

We find that, in the depth-varied experiments, backend labels can be learned well above random guessing, with like-type test classification accuracies around 87--90\%, and differing-type essentially perfect at around 96--98\% (each for raw features). High classification performance can be observed to survive several forms of post-processing; e.g. log-ratio features yield like-type accuracies in the range 66--77\% when pooling over depths. \cref{fig:latent_representations} visualises the latent representations learned by our classifiers in depth-varied experiments, indicating the separability that gives rise to the ease of classification on raw features. This latent space can be understood (loosely) as a subspace manifold of the full bitstring space in which intra-backend distances are minimised while inter-backend distances are maximised, giving a representation for noise signals \textit{with respect to the workload and backend set}.

Backend identification over a dataset of GHZ circuit data is a perfect 100\% across all features, as we expect from such structured circuits persistently probed very many times. This highlights the importance of the workload choice in the backend identification game---routing anonymity can be broken far more easily with carefully selected, highly structured circuits, so service providers should be conscious to the workload circuits they release transcript about, and how often they do so.

Further to backend classification, we also study temporal identifiability by grouping our time-varied GHZ data into batches and classifying by batch. A strong time-dependent fingerprint of devices can be detected among the superconducting devices at the batch-level, although classification within a batch---early vs late samples---is significantly more difficult. This could be interpreted practically as a low risk of leaking route-specific information over short-term persistent routing, particularly when coupled with the Haar-random observations that independent, random circuits probing in succession can mitigate the exponential decay of anonymity. At the very least, short-term temporal structure is dominated by long-term signal fluctuation in the QPUs we work on. 

\section{Discussion and Future Works}

We have introduced routing anonymity as the provider-side security notion for cloud quantum computation, together with the more familiar user-side counterpart in backend identifiability. It is understood that, in the NISQ setting, classical output distributions from quantum computations are not merely noisy approximations to the ideal distribution, but in fact laced with route-specific information that can be leveraged to violate anonymity of a service provider's routing choice if not sufficiently obscured by post-processing. Our framework packages this intuition into an operational game, which immediately reduces backend identification to classical hypothesis testing over transcript laws, and then elevate this into a persistent-routing setting, which shows that the rate of anonymity decay via independent, repeated probing is described by the Chernoff information between laws.

Post-processing emerges as the provider's principal mechanism for obscuring route-specific information to preserve the anonymity of their routing choices. We further formalise the degree to which a provider may remove this route-specific information from their released outputs via a no-free-lunch theorem attached to the promised utility of the service; anonymity may only be preserved up to the level of the law induced by the promised utility, and any further gains in anonymity necessarily degrade the service. All of this leads to an important practical question: which workloads are most identifying? From the user's perspective, this becomes a problem of designing workloads of probe circuits; from the provider's perspective, it becomes a problem of identifying exposing workloads, or of designing service interfaces that carefully choose how transcripts are released to remove route-specific information while preserving the promised utility. Such questions deserve incredible attention and care from cloud service providers for the foreseeable future, and we hope that this work provides a useful framework for uncovering answers in concrete applications.

We have also outlined a notable dependence on depth, proving that backend identifiability is naturally an intermediate-depth phenomenon: at shallow depths, too little backend-specific noise has accumulated, while at sufficiently large depths, common mixing can obscure route-specific structure. We do not design our experiments to explore this phenomenon directly, though nevertheless observe behaviour consistent with the principle and statements that we outline.

Our experimental observations support the theoretical results, finding that finite-shot classical outputs from real cloud quantum computing platforms can reliably identify routing choices well above random guessing. This identifiability exists even between backends implemented under the same physical platform architecture (namely, between superconducting devices), indicating the granularity and uniqueness of noise in current QPUs. While post-processing can be observed to improve routing anonymity in these experiments, we see that even workloads comprised of independent, random circuits can retain good identifiability, given enough persistence of routing.

Further time-dependent experimentation reveals aspects of the anonymity problem beyond those directly understood in the theoretical considerations of this work. Namely, strong classification between well-spaced batches of transcripts points to long-range temporal structure which itself may be learned effectively. This leads to practical questions about the duration of user-provider interactions, possibly indicating a preference for short-term bursts of probing to retain anonymity. 

Our blackbox model has a useful distinction from classical shadows \cite{Aaronson2018shadow, Huang2020classicalshadows}, which require a randomised measurement toolbox, e.g. random local Paulis or global Clifford measurements, and an inverse measurement channel. Our setting, and our experiments, take only computational-basis measurements and so should not, by themselves, construct a good classical shadow of the output state. The output object is thus closer to a \textit{circuit-conditioned `fingerprint'}. A clean diagnostic is to compare a classifier using only the unconditioned output object with one using the circuit-conditioned pair; if the former is near chance while the latter identifies the backend, then the signal is not shadow-like state reconstruction but rather a simpler input-output hardware `fingerprinting'.

\subsection{Extensions} 

Appendix \ref{appendix:forecasting} provides some preliminary exploration of noise \textit{forecasting}; i.e. predicting the presentation of noise in outcome distributions in future time steps by learning its evolution over a short prior. A sufficiently good forecasting model could have interesting applications, such as improving near-term error mitigation, retrospectively estimating ideal distributions, inferring the operational lifetime of a backend, detecting calibration changes, etc. However, such a model also introduces new privacy risks, such as giving users an improved capacity to pre-emptively adapt their workload when given repeated access (even without an explicitly adaptive access model). Security measures against such forecasting may look like, for example, randomly permuting the order of probe circuits.

An obvious progression of this work is to formalise concrete, provide-side, routing anonymity schemes; e.g. restricting the class of allowed workloads, reducing the number of shots or interactions, coarsening output histograms via post-processing, adding controlled classical noise, etc. Perhaps the main technical challenge is to make certifications of anonymity practical; worst-case channel distances are too strong and often experimentally inaccessible, so a workload-probed distinguishability metric seems most appropriate, but estimating or bounding it from finite data remains an open problem.

The suggested `quantum lock-and-key' setting of \cref{example:quantum_lock_and_key} may be a nice concrete application in which to develop ideas about such a scheme. In this setting, we have `locks' represented by backends, and `keys' by specific circuits about which the provider only releases decision bits indicating whether they open the requested lock. We can model this under our framework as a routing-anonymity game in which the utility is a precise decision function, and the security question is whether there exist workloads which leak the identity of any lock. A correctness condition should require valid keys to be accepted with high probability, and a soundness condition that invalid keys are rejected with high probability \textit{without leaking lock information}. We strongly encourage future work to develop the quantum lock-and-key idea into a complete scheme, including a concrete key distribution, acceptance rule, finite-query security bounds, and a better understanding for how the utility-anonymity trade-off can be modelled and utilised to obtain strong notions of completeness and soundness.

A possible restriction of our framework as proposed, which we have not yet discussed, is the ordering of events. Currently, we allow the provider to choose his route first and foremost, before seeing the workload submitted to him. This is perfectly acceptable in many circumstances (e.g. in all three of our examples in \cref{sec:example_applications}), but it necessarily forbids an adaptive scheduling policy on the provider's part that may well be extremely common in future quantum cloud practices. For example, if a provider has only one backend capable of running over some threshold number of qubits, then it could be impractical to turn away users submitting wide workload circuits simply because he has chosen the `wrong' backend. Instead, we generally foresee a setting wherein the user submits a circuit, and the provider then samples a backend according to some prior \textit{conditioned on the workload}. However, such a setting opens up the possibility for adversarial users to dictate the provider's backend choice to her possible identifying advantage (e.g. she might choose the widest circuit that she can imagine to be confident that the provider is forced to use his one capable backend). This kind of complication is conceptually challenging to overcome in the provider-side privacy perspective, and a key reason that we take the agnostic approach in selecting the backend label first, but is nonetheless important to understand for the security of quantum cloud computing.

\section*{Acknowledgements}
The authors thank James Mills for helpful discussions at the early stages of the project. BP acknowledges the financial support provided by the Jesus College Embiricos Trust Scholarship. MD acknowledges the support of the Quantum Advantage Pathfinder (QAP), with grant reference EP/X026167/1, and the
UK Engineering and Physical Sciences Research Council.  

\bibliographystyle{unsrtnat}
\bibliography{refs}

@PREAMBLE{
 "\providecommand{\noopsort}[1]{}" 
 # "\providecommand{\singleletter}[1]{#1}%" 
}

@article{Huang2020classicalshadows,
  author  = {Huang, Hsin-Yuan and Kueng, Richard and Preskill, John},
  title   = {Predicting many properties of a quantum system from very few measurements},
  journal = {Nature Physics},
  volume  = {16},
  pages   = {1050--1057},
  year    = {2020},
  doi     = {10.1038/s41567-020-0932-7}
}

@article{Aaronson2018shadow,
  author  = {Aaronson, Scott},
  title   = {Shadow Tomography of Quantum States},
  journal = {SIAM Journal on Computing},
  volume  = {49},
  number  = {5},
  pages   = {STOC18-368--STOC18-394},
  year    = {2020},
  doi     = {10.1137/18M120275X},
  note    = {Preliminary version in STOC 2018}
}

@article{Maciejewski2023operational,
  author  = {Maciejewski, Filip B. and Pucha{\l}a, Zbigniew and Oszmaniec, Micha{\l}},
  title   = {Operational Quantum Average-Case Distances},
  journal = {Quantum},
  volume  = {7},
  pages   = {1106},
  year    = {2023},
  doi     = {10.22331/q-2023-09-11-1106},
  eprint  = {2112.14283},
  archivePrefix = {arXiv},
  primaryClass = {quant-ph}
}

@article{Maciejewski2023exploring,
  author  = {Maciejewski, Filip B. and Pucha{\l}a, Zbigniew and Oszmaniec, Micha{\l}},
  title   = {Exploring Quantum Average-Case Distances: Proofs, Properties, and Examples},
  journal = {IEEE Transactions on Information Theory},
  year    = {2024},
  note    = {See also arXiv:2112.14284},
  eprint  = {2112.14284},
  archivePrefix = {arXiv},
  primaryClass = {quant-ph}
}

@article{Hamilton2020scalable,
  author  = {Hamilton, Kathleen E. and Kharazi, Tyler and Morris, Titus and McCaskey, Alexander J. and Bennink, Ryan S. and Pooser, Raphael C.},
  title   = {Scalable quantum processor noise characterization},
  journal = {Proceedings of the IEEE International Conference on Quantum Computing and Engineering},
  pages   = {430--440},
  year    = {2020},
  eprint  = {2006.01805},
  archivePrefix = {arXiv},
  primaryClass = {quant-ph}
}

@article{Magesan2011scalable,
  author  = {Magesan, Easwar and Gambetta, Jay M. and Emerson, Joseph},
  title   = {Scalable and Robust Randomized Benchmarking of Quantum Processes},
  journal = {Physical Review Letters},
  volume  = {106},
  pages   = {180504},
  year    = {2011},
  doi     = {10.1103/PhysRevLett.106.180504}
}

@article{BlumeKohout2017GST,
  author  = {Blume-Kohout, Robin and Gamble, John King and Nielsen, Erik and Rudinger, Kenneth and Mizrahi, Jonathan and Fortier, Kevin and Maunz, Peter},
  title   = {Demonstration of qubit operations below a rigorous fault tolerance threshold with gate set tomography},
  journal = {Nature Communications},
  volume  = {8},
  pages   = {14485},
  year    = {2017},
  doi     = {10.1038/ncomms14485}
}

@article{Harper2020efficient,
  author  = {Harper, Robin and Flammia, Steven T. and Wallman, Joel J.},
  title   = {Efficient learning of quantum noise},
  journal = {Nature Physics},
  volume  = {16},
  pages   = {1184--1188},
  year    = {2020},
  doi     = {10.1038/s41567-020-0992-8}
}

@article{VandenBerg2023pec,
  author  = {van den Berg, Ewout and Minev, Zlatko K. and Kandala, Abhinav and Temme, Kristan},
  title   = {Probabilistic error cancellation with sparse {Pauli}--{Lindblad} models on noisy quantum processors},
  journal = {Nature Physics},
  volume  = {19},
  pages   = {1116--1121},
  year    = {2023},
  doi     = {10.1038/s41567-023-02042-2}
}

@book{CoverThomas2006,
  author    = {Cover, Thomas M. and Thomas, Joy A.},
  title     = {Elements of Information Theory},
  publisher = {Wiley},
  year      = {2006},
  edition   = {2}
}

@misc{AWSBraketDocs,
  author       = {{Amazon Web Services}},
  title        = {{Amazon Braket}: Supported devices},
  howpublished = {\url{https://docs.aws.amazon.com/braket/latest/developerguide/braket-devices.html}},
  note         = {Accessed May 2026},
  year         = {2026}
}

@misc{AWSBraketService,
  author       = {{Amazon Web Services}},
  title        = {{Amazon Braket}: Quantum cloud computing service},
  howpublished = {\url{https://aws.amazon.com/braket/}},
  note         = {Accessed May 2026},
  year         = {2026}
}

@article{Resch2021benchmarking,
  author  = {Resch, S. and Gutierrez, A. and Smith, J. and Rech, P. and Tannu, S. S. and Han, S. and Tiwari, D. and Diniz, P. and Chong, F. T.},
  title   = {Benchmarking Quantum Computers and the Impact of Quantum Noise},
  journal = {ACM Computing Surveys},
  volume  = {54},
  number  = {7},
  pages   = {1--35},
  year    = {2021},
  doi     = {10.1145/3464420}
}

@article{Childs2005secure,
  author  = {Childs, Andrew M.},
  title   = {Secure Assisted Quantum Computation},
  journal = {Quantum Information and Computation},
  volume  = {5},
  number  = {6},
  pages   = {456--466},
  year    = {2005},
  eprint  = {quant-ph/0111046},
  archivePrefix = {arXiv}
}

@inproceedings{Broadbent2009UBQC,
  author    = {Broadbent, Anne and Fitzsimons, Joseph and Kashefi, Elham},
  title     = {Universal Blind Quantum Computation},
  booktitle = {Proceedings of the 50th Annual {IEEE} Symposium on Foundations of Computer Science},
  pages     = {517--526},
  year      = {2009},
  doi       = {10.1109/FOCS.2009.36},
  eprint    = {0807.4154},
  archivePrefix = {arXiv},
  primaryClass = {quant-ph}
}

@article{FitzsimonsKashefi2017VUBQC,
  author  = {Fitzsimons, Joseph F. and Kashefi, Elham},
  title   = {Unconditionally Verifiable Blind Quantum Computation},
  journal = {Physical Review A},
  volume  = {96},
  number  = {1},
  pages   = {012303},
  year    = {2017},
  doi     = {10.1103/PhysRevA.96.012303},
  eprint  = {1203.5217},
  archivePrefix = {arXiv},
  primaryClass = {quant-ph}
}

@article{Gheorghiu2019verification,
  author  = {Gheorghiu, Alexandru and Kapourniotis, Theodoros and Kashefi, Elham},
  title   = {Verification of Quantum Computation: An Overview of Existing Approaches},
  journal = {Theory of Computing Systems},
  volume  = {63},
  pages   = {715--808},
  year    = {2019},
  doi     = {10.1007/s00224-018-9872-3},
  eprint  = {1709.06984},
  archivePrefix = {arXiv},
  primaryClass = {quant-ph}
}

@inproceedings{Mahadev2018verification,
  author    = {Mahadev, Urmila},
  title     = {Classical Verification of Quantum Computations},
  booktitle = {Proceedings of the 59th Annual {IEEE} Symposium on Foundations of Computer Science},
  pages     = {259--267},
  year      = {2018},
  doi       = {10.1109/FOCS.2018.00033},
  eprint    = {1804.01082},
  archivePrefix = {arXiv},
  primaryClass = {quant-ph}
}

@article{Ma2022QEnclave,
  author  = {Ma, Yuhang and Huang, Yipeng and Ferracin, Samuele and Harper, Robin and Tannu, Swamit S.},
  title   = {{QEnclave}: A Practical Solution for Secure Quantum Cloud Computing},
  journal = {npj Quantum Information},
  volume  = {8},
  pages   = {128},
  year    = {2022},
  doi     = {10.1038/s41534-022-00612-5}
}

@article{Chaum1981mix,
  author  = {Chaum, David L.},
  title   = {Untraceable Electronic Mail, Return Addresses, and Digital Pseudonyms},
  journal = {Communications of the ACM},
  volume  = {24},
  number  = {2},
  pages   = {84--90},
  year    = {1981},
  doi     = {10.1145/358549.358563}
}

@article{Reed1998onion,
  author  = {Reed, Michael G. and Syverson, Paul F. and Goldschlag, David M.},
  title   = {Anonymous Connections and Onion Routing},
  journal = {{IEEE} Journal on Selected Areas in Communications},
  volume  = {16},
  number  = {4},
  pages   = {482--494},
  year    = {1998},
  doi     = {10.1109/49.668972}
}

@inproceedings{Dingledine2004Tor,
  author    = {Dingledine, Roger and Mathewson, Nick and Syverson, Paul},
  title     = {{Tor}: The Second-Generation Onion Router},
  booktitle = {Proceedings of the 13th {USENIX} Security Symposium},
  year      = {2004}
}

@article{Cross2019quantumvolume,
  author  = {Cross, Andrew W. and Bishop, Lev S. and Sheldon, Sarah and Nation, Paul D. and Gambetta, Jay M.},
  title   = {Validating Quantum Computers Using Randomized Model Circuits},
  journal = {Physical Review A},
  volume  = {100},
  pages   = {032328},
  year    = {2019},
  doi     = {10.1103/PhysRevA.100.032328},
  eprint  = {1811.12926},
  archivePrefix = {arXiv},
  primaryClass = {quant-ph}
}

@article{Erhard2019cyclebenchmarking,
  author  = {Erhard, Alexander and Wallman, Joel J. and Postler, Lukas and Meth, Michael and Stricker, Roman and Martinez, Esteban A. and Schindler, Philipp and Monz, Thomas and Emerson, Joseph and Blatt, Rainer},
  title   = {Characterizing Large-Scale Quantum Computers via Cycle Benchmarking},
  journal = {Nature Communications},
  volume  = {10},
  pages   = {5347},
  year    = {2019},
  doi     = {10.1038/s41467-019-13068-7}
}

@misc{nguyen2024quantumcloudcomputingreview,
	title={Quantum Cloud Computing: A Review, Open Problems, and Future Directions}, 
	author={Hoa T. Nguyen and Prabhakar Krishnan and Dilip Krishnaswamy and Muhammad Usman and Rajkumar Buyya},
	year={2024},
	eprint={2404.11420},
	archivePrefix={arXiv},
	primaryClass={cs.ET},
	url={https://arxiv.org/abs/2404.11420}, 
}

@misc{caro2022learning,
  author       = {Caro, Matthias C.},
  title        = {Learning Quantum Processes and Hamiltonians via the Pauli Transfer Matrix},
  year         = {2022},
  eprint       = {2212.04471},
  archivePrefix= {arXiv},
  primaryClass = {quant-ph}
}

@misc{greenbaum2015introduction,
  author       = {Greenbaum, Daniel},
  title        = {Introduction to Quantum Gate Set Tomography},
  year         = {2015},
  eprint       = {1509.02921},
  archivePrefix= {arXiv},
  primaryClass = {quant-ph}
}

@article{helsen2019spectral,
  author       = {Helsen, Jonas and Battistel, Francesco and Terhal, Barbara M.},
  title        = {Spectral quantum tomography},
  journal      = {npj Quantum Information},
  volume       = {5},
  number       = {1},
  pages        = {74},
  year         = {2019},
  doi          = {10.1038/s41534-019-0189-0},
  eprint       = {1904.00177},
  archivePrefix= {arXiv},
  primaryClass = {quant-ph}
}

@article{raginsky2002strictly,
  author       = {Raginsky, Maxim},
  title        = {Strictly contractive quantum channels and physically realizable quantum computers},
  journal      = {Physical Review A},
  volume       = {65},
  number       = {3},
  pages        = {032306},
  year         = {2002},
  doi          = {10.1103/PhysRevA.65.032306},
  eprint       = {quant-ph/0105141},
  archivePrefix= {arXiv}
}

@article{hiai2016contraction,
  author       = {Hiai, Fumio and Ruskai, Mary Beth},
  title        = {Contraction coefficients for noisy quantum channels},
  journal      = {Journal of Mathematical Physics},
  volume       = {57},
  number       = {1},
  pages        = {015211},
  year         = {2016},
  doi          = {10.1063/1.4936215},
  eprint       = {1508.03551},
  archivePrefix= {arXiv},
  primaryClass = {quant-ph}
}

@article{wallman2016noise,
  author       = {Wallman, Joel J. and Emerson, Joseph},
  title        = {Noise tailoring for scalable quantum computation via randomized compiling},
  journal      = {Physical Review A},
  volume       = {94},
  number       = {5},
  pages        = {052325},
  year         = {2016},
  doi          = {10.1103/PhysRevA.94.052325},
  eprint       = {1512.01098},
  archivePrefix = {arXiv},
  primaryClass = {quant-ph}
}

@article{ware2021experimental,
  author       = {Ware, Matthew and Ribeill, Guilhem and Rist{\`e}, Diego and Ryan, Colm A. and Johnson, Blake and da Silva, Marcus P.},
  title        = {Experimental Pauli-frame randomization on a superconducting qubit},
  journal      = {Physical Review A},
  volume       = {103},
  number       = {4},
  pages        = {042604},
  year         = {2021},
  doi          = {10.1103/PhysRevA.103.042604},
  eprint       = {1803.01818},
  archivePrefix = {arXiv},
  primaryClass = {quant-ph}
}

@article{schumann2024emergence,
  author       = {Schumann, Marco and Wilhelm, Frank K. and Ciani, Alessandro},
  title        = {Emergence of noise-induced barren plateaus in arbitrary layered noise models},
  journal      = {Quantum Science and Technology},
  volume       = {9},
  number       = {4},
  pages        = {045019},
  year         = {2024},
  doi          = {10.1088/2058-9565/ad6285},
  eprint       = {2310.08405},
  archivePrefix = {arXiv},
  primaryClass = {quant-ph}
}

@inproceedings{chia2020classical,
  title={Classical verification of quantum computations with efficient verifier},
  author={Chia, Nai-Hui and Chung, Kai-Min and Yamakawa, Takashi},
  booktitle={Theory of Cryptography Conference},
  pages={181--206},
  year={2020},
  organization={Springer}
}

@inproceedings{bartusek2022succinct,
  title={Succinct classical verification of quantum computation},
  author={Bartusek, James and Kalai, Yael Tauman and Lombardi, Alex and Ma, Fermi and Malavolta, Giulio and Vaikuntanathan, Vinod and Vidick, Thomas and Yang, Lisa},
  booktitle={Annual International Cryptology Conference},
  pages={195--211},
  year={2022},
  organization={Springer}
}

@inproceedings{zhang2021succinct,
  title={Succinct blind quantum computation using a random oracle},
  author={Zhang, Jiayu},
  booktitle={Proceedings of the 53rd Annual ACM SIGACT Symposium on Theory of Computing},
  pages={1370--1383},
  year={2021}
}

@article{kalai2026classically,
  title={How to Classically Verify a Quantum Cat without Killing It},
  author={Kalai, Yael Tauman and Khurana, Dakshita and Raizes, Justin},
  journal={arXiv preprint arXiv:2602.09282},
  year={2026}
}

@article{leichtle2021verifying,
  title={Verifying BQP computations on noisy devices with minimal overhead},
  author={Leichtle, Dominik and Music, Luka and Kashefi, Elham and Ollivier, Harold},
  journal={PRX Quantum},
  volume={2},
  number={4},
  pages={040302},
  year={2021},
  publisher={APS}
}

@inproceedings{badertscher2020security,
  title={Security limitations of classical-client delegated quantum computing},
  author={Badertscher, Christian and Cojocaru, Alexandru and Colisson, L{\'e}o and Kashefi, Elham and Leichtle, Dominik and Mantri, Atul and Wallden, Petros},
  booktitle={International Conference on the Theory and Application of Cryptology and Information Security},
  pages={667--696},
  year={2020},
  organization={Springer}
}

@article{poshtvan2025selectively,
  title={Selectively blind quantum computation},
  author={Poshtvan, Abbas and Lapiha, Oleksandra and Doosti, Mina and Leichtle, Dominik and Music, Luka and Kashefi, Elham},
  journal={arXiv preprint arXiv:2504.17612},
  year={2025}
}

@inproceedings{Dunjko2014composable,
  author    = {Vedran Dunjko and Joseph F. Fitzsimons and Christopher Portmann and Renato Renner},
  title     = {Composable Security of Delegated Quantum Computation},
  booktitle = {Advances in Cryptology -- ASIACRYPT 2014},
  series    = {Lecture Notes in Computer Science},
  volume    = {8874},
  pages     = {406--425},
  publisher = {Springer},
  year      = {2014},
  doi       = {10.1007/978-3-662-45608-8_22},
  eprint    = {1301.3662},
  archivePrefix = {arXiv},
  primaryClass  = {quant-ph}
}

@misc{Aharonov2017interactive,
  author        = {Dorit Aharonov and Michael Ben-Or and Elad Eban and Urmila Mahadev},
  title         = {Interactive Proofs for Quantum Computations},
  year          = {2017},
  eprint        = {1704.04487},
  archivePrefix = {arXiv},
  primaryClass  = {quant-ph}
}

@inproceedings{GheorghiuVidick2019RSP,
  author    = {Alexandru Gheorghiu and Thomas Vidick},
  title     = {Computationally-Secure and Composable Remote State Preparation},
  booktitle = {Proceedings of the 60th IEEE Annual Symposium on Foundations of Computer Science},
  series    = {FOCS 2019},
  pages     = {1024--1033},
  publisher = {IEEE Computer Society},
  year      = {2019},
  doi       = {10.1109/FOCS.2019.00066},
  eprint    = {1904.06320},
  archivePrefix = {arXiv},
  primaryClass  = {quant-ph}
}

@inproceedings{Mahadev2018homomorphic,
  author    = {Urmila Mahadev},
  title     = {Classical Homomorphic Encryption for Quantum Circuits},
  booktitle = {Proceedings of the 59th IEEE Annual Symposium on Foundations of Computer Science},
  series    = {FOCS 2018},
  pages     = {332--338},
  publisher = {IEEE Computer Society},
  year      = {2018},
  doi       = {10.1109/FOCS.2018.00039},
  eprint    = {1708.02130},
  archivePrefix = {arXiv},
  primaryClass  = {quant-ph}
}

@inproceedings{Alagic2018vQFHE,
  author    = {Gorjan Alagic and Yfke Dulek and Christian Schaffner and Florian Speelman},
  title     = {Quantum Fully Homomorphic Encryption with Verification},
  booktitle = {Advances in Cryptology -- EUROCRYPT 2018},
  series    = {Lecture Notes in Computer Science},
  publisher = {Springer},
  pages     = {438--467},
  year      = {2018},
  doi       = {10.1007/978-3-319-70694-8_16},
  eprint    = {1708.09156},
  archivePrefix = {arXiv},
  primaryClass  = {quant-ph}
}

@misc{Martina2021learningNoiseFingerprint,
  author        = {Stefano Martina and Lorenzo Buffoni and Stefano Gherardini and Filippo Caruso},
  title         = {Learning the Noise Fingerprint of Quantum Devices},
  year          = {2021},
  eprint        = {2109.11405},
  archivePrefix = {arXiv},
  primaryClass  = {quant-ph}
}

@inproceedings{RoyGhoshGhosh2025forensics,
  author    = {Rupshali Roy and Archisman Ghosh and Swaroop Ghosh},
  title     = {Forensics of Transpiled Quantum Circuits},
  booktitle = {Proceedings of the Great Lakes Symposium on VLSI 2025},
  series    = {GLSVLSI 2025},
  pages     = {354--359},
  publisher = {Association for Computing Machinery},
  year      = {2025},
  doi       = {10.1145/3716368.3735240},
  eprint    = {2412.18939},
  archivePrefix = {arXiv},
  primaryClass  = {quant-ph}
}

@article{eisert2020quantum,
  title={Quantum certification and benchmarking},
  author={Eisert, Jens and Hangleiter, Dominik and Walk, Nathan and Roth, Ingo and Markham, Damian and Parekh, Rhea and Chabaud, Ulysse and Kashefi, Elham},
  journal={Nature Reviews Physics},
  volume={2},
  number={7},
  pages={382--390},
  year={2020},
  publisher={Nature Publishing Group UK London}
}

@article{Priestley_2026,
	doi = {10.1088/2058-9565/ae4cc6},
	url = {https://doi.org/10.1088/2058-9565/ae4cc6},
	year = {2026},
	month = {mar},
	publisher = {IOP Publishing},
	volume = {11},
	number = {2},
	pages = {025018},
	author = {Priestley, Ben and Wallden, Petros},
	title = {A practically scalable approach to the closest vector problem for sieving via QAOA with fixed angles},
	journal = {Quantum Science and Technology}
}

@misc{priestley2025lqg,
	title={Finite-Dimensional ZX-Calculus for Loop Quantum Gravity}, 
	author={Ben Priestley},
	year={2025},
	eprint={2511.15966},
	archivePrefix={arXiv},
	primaryClass={gr-qc},
	url={https://arxiv.org/abs/2511.15966}, 
}

@inproceedings{Shor-1994,
	author={Shor, P.W.},
	booktitle={Proceedings 35th Annual Symposium on Foundations of Computer Science}, 
	title={Algorithms for quantum computation: discrete logarithms and factoring}, 
	year={1994},
	volume={},
	number={},
	pages={124-134},
	doi={10.1109/SFCS.1994.365700}
}

@article{proos2003shor,
	title={Shor's discrete logarithm quantum algorithm for elliptic curves},
	author={Proos, John and Zalka, Christof},
	journal={arXiv preprint quant-ph/0301141},
	year={2003}
}

@inproceedings{roetteler2017quantum,
	title={Quantum resource estimates for computing elliptic curve discrete logarithms},
	author={Roetteler, Martin and Naehrig, Michael and Svore, Krysta M and Lauter, Kristin},
	booktitle={International Conference on the Theory and Application of Cryptology and Information Security},
	pages={241--270},
	year={2017},
	organization={Springer}
}

@inproceedings{grassl2016applying,
	title={Applying Grover’s algorithm to AES: quantum resource estimates},
	author={Grassl, Markus and Langenberg, Brandon and Roetteler, Martin and Steinwandt, Rainer},
	booktitle={International Workshop on Post-Quantum Cryptography},
	pages={29--43},
	year={2016},
	organization={Springer}
}

@article{kaplan2015quantum,
	title={Quantum differential and linear cryptanalysis},
	author={Kaplan, Marc and Leurent, Ga{\"e}tan and Leverrier, Anthony and Naya-Plasencia, Mar{\'\i}a},
	journal={arXiv preprint arXiv:1510.05836},
	year={2015}
}

@inproceedings{kaplan2016breaking,
	title={Breaking symmetric cryptosystems using quantum period finding},
	author={Kaplan, Marc and Leurent, Ga{\"e}tan and Leverrier, Anthony and Naya-Plasencia, Mar{\'\i}a},
	booktitle={Annual international cryptology conference},
	pages={207--237},
	year={2016},
	organization={Springer}
}

@inproceedings{chailloux2021lattice,
	title={Lattice sieving via quantum random walks},
	author={Chailloux, Andr{\'e} and Loyer, Johanna},
	booktitle={International Conference on the Theory and Application of Cryptology and Information Security},
	pages={63--91},
	year={2021},
	organization={Springer}
}

@article{laarhoven2015finding,
	title={Finding shortest lattice vectors faster using quantum search},
	author={Laarhoven, Thijs and Mosca, Michele and Van De Pol, Joop},
	journal={Designs, Codes and Cryptography},
	volume={77},
	number={2},
	pages={375--400},
	year={2015},
	publisher={Springer}
}

@article{aspuru2005simulated,
	title={Simulated quantum computation of molecular energies},
	author={Aspuru-Guzik, Al{\'a}n and Dutoi, Anthony D and Love, Peter J and Head-Gordon, Martin},
	journal={Science},
	volume={309},
	number={5741},
	pages={1704--1707},
	year={2005},
	publisher={American Association for the Advancement of Science}
}

@article{peruzzo2014variational,
	title={A variational eigenvalue solver on a photonic quantum processor},
	author={Peruzzo, Alberto and McClean, Jarrod and Shadbolt, Peter and Yung, Man-Hong and Zhou, Xiao-Qi and Love, Peter J and Aspuru-Guzik, Al{\'a}n and O’brien, Jeremy L},
	journal={Nature communications},
	volume={5},
	number={1},
	pages={4213},
	year={2014},
	publisher={Nature Publishing Group UK London}
}

@article{kandala2017hardware,
	title={Hardware-efficient variational quantum eigensolver for small molecules and quantum magnets},
	author={Kandala, Abhinav and Mezzacapo, Antonio and Temme, Kristan and Takita, Maika and Brink, Markus and Chow, Jerry M and Gambetta, Jay M},
	journal={nature},
	volume={549},
	number={7671},
	pages={242--246},
	year={2017},
	publisher={Nature Publishing Group}
}

@article{reiher2017elucidating,
	title={Elucidating reaction mechanisms on quantum computers},
	author={Reiher, Markus and Wiebe, Nathan and Svore, Krysta M and Wecker, Dave and Troyer, Matthias},
	journal={Proceedings of the national academy of sciences},
	volume={114},
	number={29},
	pages={7555--7560},
	year={2017},
	publisher={National Academy of Sciences}
}

@article{google2020hartree,
	title={Hartree-Fock on a superconducting qubit quantum computer},
	author={Google AI Quantum and Collaborators*† and Arute, Frank and Arya, Kunal and Babbush, Ryan and Bacon, Dave and Bardin, Joseph C and Barends, Rami and Boixo, Sergio and Broughton, Michael and Buckley, Bob B and others},
	journal={Science},
	volume={369},
	number={6507},
	pages={1084--1089},
	year={2020},
	publisher={American Association for the Advancement of Science}
}

@article{von2021quantum,
	title={Quantum computing enhanced computational catalysis},
	author={von Burg, Vera and Low, Guang Hao and H{\"a}ner, Thomas and Steiger, Damian S and Reiher, Markus and Roetteler, Martin and Troyer, Matthias},
	journal={Physical Review Research},
	volume={3},
	number={3},
	pages={033055},
	year={2021},
	publisher={APS}
}

@article{clinton2024towards,
	title={Towards near-term quantum simulation of materials},
	author={Clinton, Laura and Cubitt, Toby and Flynn, Brian and Gambetta, Filippo Maria and Klassen, Joel and Montanaro, Ashley and Piddock, Stephen and Santos, Raul A and Sheridan, Evan},
	journal={Nature Communications},
	volume={15},
	number={1},
	pages={211},
	year={2024},
	publisher={Nature Publishing Group UK London}
}

@article{bernien2017probing,
	title={Probing many-body dynamics on a 51-atom quantum simulator},
	author={Bernien, Hannes and Schwartz, Sylvain and Keesling, Alexander and Levine, Harry and Omran, Ahmed and Pichler, Hannes and Choi, Soonwon and Zibrov, Alexander S and Endres, Manuel and Greiner, Markus and others},
	journal={Nature},
	volume={551},
	number={7682},
	pages={579--584},
	year={2017},
	publisher={Nature Publishing Group UK London}
}

@article{lloyd1996universal,
	title={Universal quantum simulators},
	author={Lloyd, Seth},
	journal={Science},
	volume={273},
	number={5278},
	pages={1073--1078},
	year={1996},
	publisher={American Association for the Advancement of Science}
}

@article{martinez2016real,
	title={Real-time dynamics of lattice gauge theories with a few-qubit quantum computer},
	author={Martinez, Esteban A and Muschik, Christine A and Schindler, Philipp and Nigg, Daniel and Erhard, Alexander and Heyl, Markus and Hauke, Philipp and Dalmonte, Marcello and Monz, Thomas and Zoller, Peter and others},
	journal={Nature},
	volume={534},
	number={7608},
	pages={516--519},
	year={2016},
	publisher={Nature Publishing Group UK London}
}

@article{ebadi2021quantum,
	title={Quantum phases of matter on a 256-atom programmable quantum simulator},
	author={Ebadi, Sepehr and Wang, Tout T and Levine, Harry and Keesling, Alexander and Semeghini, Giulia and Omran, Ahmed and Bluvstein, Dolev and Samajdar, Rhine and Pichler, Hannes and Ho, Wen Wei and others},
	journal={Nature},
	volume={595},
	number={7866},
	pages={227--232},
	year={2021},
	publisher={Nature Publishing Group UK London}
}

@article{atas20212,
	title={SU (2) hadrons on a quantum computer via a variational approach},
	author={Atas, Yasar Y and Zhang, Jinglei and Lewis, Randy and Jahanpour, Amin and Haase, Jan F and Muschik, Christine A},
	journal={Nature communications},
	volume={12},
	number={1},
	pages={6499},
	year={2021},
	publisher={Nature Publishing Group UK London}
}

@article{meth2025simulating,
	title={Simulating two-dimensional lattice gauge theories on a qudit quantum computer},
	author={Meth, Michael and Zhang, Jinglei and Haase, Jan F and Edmunds, Claire and Postler, Lukas and Jena, Andrew J and Steiner, Alex and Dellantonio, Luca and Blatt, Rainer and Zoller, Peter and others},
	journal={Nature Physics},
	volume={21},
	number={4},
	pages={570--576},
	year={2025},
	publisher={Nature Publishing Group UK London}
}

@article{li2019quantum,
	title={Quantum spacetime on a quantum simulator},
	author={Li, Keren and Li, Youning and Han, Muxin and Lu, Sirui and Zhou, Jie and Ruan, Dong and Long, Guilu and Wan, Yidun and Lu, Dawei and Zeng, Bei and others},
	journal={Communications Physics},
	volume={2},
	number={1},
	pages={122},
	year={2019},
	publisher={Nature Publishing Group UK London}
}

@article{east2021spin,
	title={Spin-networks in the ZX-calculus},
	author={East, Richard DP and Martin-Dussaud, Pierre and Van de Wetering, John},
	journal={arXiv preprint arXiv:2111.03114},
	year={2021}
}

@article{mielczarek2019spin,
	title={Spin foam vertex amplitudes on quantum computer—preliminary results},
	author={Mielczarek, Jakub},
	journal={Universe},
	volume={5},
	number={8},
	pages={179},
	year={2019},
	publisher={MDPI}
}

@article{van2023experimental,
	title={Experimental simulation of loop quantum gravity on a photonic chip},
	author={van der Meer, Reinier and Huang, Zichang and Anguita, Malaquias Correa and Qu, Dongxue and Hooijschuur, Peter and Liu, Hongguang and Han, Muxin and Renema, Jelmer J and Cohen, Lior},
	journal={npj Quantum Information},
	volume={9},
	number={1},
	pages={32},
	year={2023},
	publisher={Nature Publishing Group UK London}
}

@article{Zhu-2025a,
	title={Observation of quantum Darwinism and the origin of classicality with superconducting circuits},
	author={Zhu, Zitian and Salice, Kiera and Touil, Akram and Bao, Zehang and Song, Zixuan and Zhang, Pengfei and Li, Hekang and Wang, Zhen and Song, Chao and Guo, Qiujiang and others},
	journal={Science Advances},
	volume={11},
	number={31},
	pages={eadx6857},
	year={2025},
	publisher={American Association for the Advancement of Science}
}

@article{farhi2014quantum,
	title={A quantum approximate optimization algorithm},
	author={Farhi, Edward and Goldstone, Jeffrey and Gutmann, Sam},
	journal={arXiv preprint arXiv:1411.4028},
	year={2014}
}

@article{lucas2014ising,
	title={Ising formulations of many NP problems},
	author={Lucas, Andrew},
	journal={Frontiers in physics},
	volume={2},
	pages={74887},
	year={2014},
	publisher={Frontiers}
}

@article{ebadi2022quantum,
	title={Quantum optimization of maximum independent set using Rydberg atom arrays},
	author={Ebadi, Sepehr and Keesling, Alexander and Cain, Madelyn and Wang, Tout T and Levine, Harry and Bluvstein, Dolev and Semeghini, Giulia and Omran, Ahmed and Liu, J-G and Samajdar, Rhine and others},
	journal={Science},
	volume={376},
	number={6598},
	pages={1209--1215},
	year={2022},
	publisher={American Association for the Advancement of Science}
}

@article{rebentrost2014quantum,
	title={Quantum support vector machine for big data classification},
	author={Rebentrost, Patrick and Mohseni, Masoud and Lloyd, Seth},
	journal={Physical review letters},
	volume={113},
	number={13},
	pages={130503},
	year={2014},
	publisher={APS}
}

@article{schuld2019quantum,
	title={Quantum machine learning in feature Hilbert spaces},
	author={Schuld, Maria and Killoran, Nathan},
	journal={Physical review letters},
	volume={122},
	number={4},
	pages={040504},
	year={2019},
	publisher={APS}
}

@article{havlivcek2019supervised,
	title={Supervised learning with quantum-enhanced feature spaces},
	author={Havl{\'\i}{\v{c}}ek, Vojt{\v{e}}ch and C{\'o}rcoles, Antonio D and Temme, Kristan and Harrow, Aram W and Kandala, Abhinav and Chow, Jerry M and Gambetta, Jay M},
	journal={Nature},
	volume={567},
	number={7747},
	pages={209--212},
	year={2019},
	publisher={Nature Publishing Group UK London}
}

@article{liu2021rigorous,
	title={A rigorous and robust quantum speed-up in supervised machine learning},
	author={Liu, Yunchao and Arunachalam, Srinivasan and Temme, Kristan},
	journal={Nature physics},
	volume={17},
	number={9},
	pages={1013--1017},
	year={2021},
	publisher={Nature Publishing Group UK London}
}

@article{huang2021power,
	title={Power of data in quantum machine learning},
	author={Huang, Hsin-Yuan and Broughton, Michael and Mohseni, Masoud and Babbush, Ryan and Boixo, Sergio and Neven, Hartmut and McClean, Jarrod R},
	journal={Nature communications},
	volume={12},
	number={1},
	pages={2631},
	year={2021},
	publisher={Nature Publishing Group UK London}
}

@article{huang2022quantum,
	title={Quantum advantage in learning from experiments},
	author={Huang, Hsin-Yuan and Broughton, Michael and Cotler, Jordan and Chen, Sitan and Li, Jerry and Mohseni, Masoud and Neven, Hartmut and Babbush, Ryan and Kueng, Richard and Preskill, John and others},
	journal={Science},
	volume={376},
	number={6598},
	pages={1182--1186},
	year={2022},
	publisher={American Association for the Advancement of Science}
}

@article{yin2025experimental,
	title={Experimental quantum-enhanced kernel-based machine learning on a photonic processor},
	author={Yin, Zhenghao and Agresti, Iris and De Felice, Giovanni and Brown, Douglas and Toumi, Alexis and Pentangelo, Ciro and Piacentini, Simone and Crespi, Andrea and Ceccarelli, Francesco and Osellame, Roberto and others},
	journal={Nature Photonics},
	volume={19},
	number={9},
	pages={1020--1027},
	year={2025},
	publisher={Nature Publishing Group UK London}
}

@article{Golec_2024,
	title={Quantum cloud computing: Trends and challenges},
	volume={2},
	ISSN={2949-9488},
	url={http://dx.doi.org/10.1016/j.ject.2024.05.001},
	DOI={10.1016/j.ject.2024.05.001},
	journal={Journal of Economy and Technology},
	publisher={Elsevier BV},
	author={Golec, Muhammed and Hatay, Emir Sahin and Golec, Mustafa and Uyar, Murat and Golec, Merve and Gill, Sukhpal Singh},
	year={2024},
	month=Nov, pages={190–199} 
}

@inproceedings{smith2023fast,
	title={Fast fingerprinting of cloud-based nisq quantum computers},
	author={Smith, Kaitlin N and Viszlai, Joshua and Seifert, Lennart Maximilian and Baker, Jonathan M and Szefer, Jakub and Chong, Frederic T},
	booktitle={2023 IEEE International Symposium on Hardware Oriented Security and Trust (HOST)},
	pages={1--12},
	year={2023},
	organization={IEEE}
}

@inproceedings{mi2021short,
	title={Short paper: Device-and locality-specific fingerprinting of shared nisq quantum computers},
	author={Mi, Allen and Deng, Shuwen and Szefer, Jakub},
	booktitle={Proceedings of the 10th International Workshop on Hardware and Architectural Support for Security and Privacy},
	pages={1--6},
	year={2021}
}

@misc{mutolo2025quantumcomputerfingerprintingusing,
	title={Quantum Computer Fingerprinting using Error Syndromes}, 
	author={Vincent Mutolo and Devon Campbell and Quinn Manning and Henri Witold Dubourg and Ruibin Lyu and Simha Sethumadhavan and Daniel Rubenstein and Salvatore Stolfo},
	year={2025},
	eprint={2506.16614},
	archivePrefix={arXiv},
	primaryClass={quant-ph},
	url={https://arxiv.org/abs/2506.16614}, 
}

@article{Wu-2024,
	author = {Wu, Jindi and Hu, Tianjie and Li, Qun},
	title = {Q-ID: Lightweight Quantum Network Server Identification Through Fingerprinting},
	year = {2024},
	issue_date = {Sept. 2024},
	publisher = {IEEE Press},
	volume = {38},
	number = {5},
	issn = {0890-8044},
	url = {https://doi.org/10.1109/MNET.2024.3400893},
	doi = {10.1109/MNET.2024.3400893},
	journal = {Netwrk. Mag. of Global Internetwkg.},
	month = sep,
	pages = {146–152},
	numpages = {7}
}

@misc{coupel2025securityvulnerabilitiesquantumcloud,
	title={Security Vulnerabilities in Quantum Cloud Systems: A Survey on Emerging Threats}, 
	author={Justin Coupel and Tasnuva Farheen},
	year={2025},
	eprint={2504.19064},
	archivePrefix={arXiv},
	primaryClass={cs.CR},
	url={https://arxiv.org/abs/2504.19064}, 
}

\newcommand{\appendixpart}[1]{
  \clearpage
  {\noindent\huge\bfseries #1\par}
  \noindent\rule{\textwidth}{0.4pt}\par
  \vspace{3ex}
  \addcontentsline{toc}{part}{#1}
}

\newpage
\appendix

\part*{Appendix}

\startcontents[appendix]
\begingroup
\makeatletter
\let\oldaddvspace\addvspace
\renewcommand{\addvspace}[1]{\oldaddvspace{0.25em}}
\let\oldlpart\l@part
\renewcommand{\l@part}[2]{%
  \noindent\rule{\linewidth}{0.4pt}\par
  {\large\bfseries\oldlpart{#1}{#2}}%
  \vspace{1em}
}
\makeatother
\printcontents[appendix]{}{0}{\setcounter{tocdepth}{2}}
\endgroup

\appendixpart{Framework and Theoretical Results}

\section{Formal Framework} \label{appendix:framework}
Fix $n\in\mathbb{N}$ qubits, and let $\Omega_n:=\{0,1\}^n$ denote the set of computational-basis $n$-bit string outcomes. We use the convention $\Delta(\cdot)$ to denote sets of probability distributions; so write $\Delta(\Omega_n)$ to denote the set of probability distributions on $n$-bit strings, or $\Delta_m(\Omega_n)$ to denote the set of empirical histograms on $\Omega_n$ that can be obtained over $m$ draws.

Let $\mathcal{C}_d$ be a family of depth-$d$ circuits over $n$ qubits, and for each $C\in\mathcal{C}_d$, we will write $U_C$ to denote its ideal unitary action, producing an ideal output state $\rho_C:=U_C|0^n\rangle\langle 0^n|U_C^\dagger$. We will denote by $\mathcal{C}:=\bigcup_{d\geq 0}\mathcal{C}_d$ the collection of all executable quantum circuits over $n$ qubits.
	
Consider a finite set of quantum backends (or ``routes'') $\mathcal{D}=\{D_1,\dots,D_k\}$, and for each $D_i\in\mathcal{D}$, associate a family of effective noisy channels $\mathcal{N}_i:=\{\mathcal{N}_{i,C}:C\in\mathcal{C}\}$ where,
\[
\mathcal{N}_{i,C}:\mathcal{D}((\mathbb{C}^2)^{\otimes n})\to\mathcal{D}((\mathbb{C}^2)^{\otimes n})\quad ,
\]
is a completely positive and trace-preserving (CPTP) map. Here, we are taking the quite general perspective that allows effective noise to depend explicitly on the circuit, since real effective noise depends on compilation, layout, native gates, calibration, crosstalk, control context, time, etc. We represent measurement (without loss of generality, in the computational basis) by,
\[
\mathcal{M}:\mathcal{D}((\mathbb{C}^2)^{\otimes n})\to \Delta(\Omega_n)\quad ,
\]
and hence, for a given backend $D_i$ and circuit $C$, the idealised output law is given by,
\[
p_{i,C}:=\mathcal{M}(\mathcal{N}_{i,C}(\rho_C)) \in \Delta(\Omega_n) \quad .
\]

In practice, we will only execute the circuit over $m$ independent shots, producing random outcomes $Y_1,\dots,Y_m\sim p_{i,C}$. We can then compute from these an empirical histogram like,
\[
\hat{p}_{i,C}(x) := \frac{1}{m}\sum_{j=1}^m\mathbf{1}\{Y_j=x\} \in \Delta_m(\Omega_n) \quad .
\]

Now let $\phi:\mathcal{C}\times\Delta_m(\Omega_n)\to\mathcal{X}$ be a deterministic post-processing map taking raw observations $(C,\hat{p}_{i,C})$ into a transcript space $\mathcal{X}$ (which we leave abstract for now). For brevity, we may sometimes refer to the raw observation space simply as $\mathcal{Y}:=\mathcal{C}\times\Delta_m(\Omega_n)$.

\subsection{A Game-based definition for Backend/Route Identifiability}

We now have the necessary objects to define a game of \textit{backend/route identification} between a user and provider. Informally, the game is as follows: the provider privately selects a quantum backend, associated with a noisy channel, and asks the user to choose a probe circuit to be routed through this channel. Dutifully, the provider then executes the circuit on the backend for a finite precision of shots, before returning to the user (a post-processed transcript of) the outcome histogram. The user's task is to guess which backend produced the outcome returned to her---if only she could contain her excitement. 

More formally, we provide the following definition for a single round of the game.

\begin{definition}[Backend/Route Identification Game] \label{def:formal_identification_game}
    Fix publicly known: circuit ensembles $\{\mu_d\}_{d\geq 0}$ indexed by depth $d$, backend set $\mathcal{D}=\{D_1,\dots,D_k\}$ with prior $\pi\in\Delta(k)$, finite shot count $m\in\mathbb{N}$, and post-processing map $\phi:\mathcal{Y}\to\mathcal{X}$. The game is then played as follows:
        \begin{enumerate}
            \item Provider randomly samples a hidden route $I\sim \pi$.
            \item User chooses a circuit depth $d\in\mathbb{N}$ and randomly samples a depth-$d$ workload circuit $C\sim\mu_d$, then submits it to the provider.
            \item Provider executes $C$ over $m$ independent shots on backend $D_I$, producing a raw empirical distribution $\hat{p}_{I,C}\in\Delta_m(\Omega_n)$.
            \item Provider produces a transcript $X=\phi(C,\hat{p}_{I,C})$, then returns it to the user.
            \item User observes $X$ and outputs a guess $\tilde{I}$, winning the game if $\tilde{I}=I$.
        \end{enumerate}
\end{definition}

\begin{definition}[Persistent Routing] \label{def:persistent_routing}
    The backend/route identification game is extended over $T$ rounds by repeating steps 2 and 3; that is, in each round $t=1,\dots,T$: the route $I$ remains fixed (i.e. we have a ``persistent route''), but a new probe circuit $C_t$ may be sampled from the same fixed ensemble $\mu_d$. Ultimately, a $T$-length transcript $(X_1,\dots,X_T)$ is returned and a guess is made.

    This work restricts to \textit{passive} probing, meaning that $C_1,\dots,C_T$ are all sampled and submitted before any execution, so $C_{t+1}$ cannot be biased by knowledge of $X_1,\dots,X_{t}$.
\end{definition}

\subsection{Describing Identifiability and Anonymity}

\begin{definition}[Induced/Raw Laws] \label{def:distributions}
    Denote by $P_i(\mu,m,\phi)$ the \textit{induced law} on the post-processed transcript space $\mathcal{X}$ such that, for every measurable set $A\subseteq\mathcal{X}$,
    \[
    P_i(\mu,m,\phi)(A)
    :=
    \int_{C\in\mathcal C}
    \Pr\!\Big[\phi(C,\hat p_{i,C})\in A\Big]\ \mu(dC)\quad .
    \]
    We can also denote by $Q_i(\mu,m)$ the \textit{raw law} on the raw observation space $\mathcal{Y}=\mathcal{C}\times\Delta_m(\Omega_n)$, and then write $P_i(\mu,m,\phi)=\phi_\# Q_i(\mu,m)$ as the forward pass of $Q_i(\mu,m)$ under the map $\phi$.

    For brevity, we will often omit the function signatures when the context is clear, and write $P_i$ and $Q_i$ respectively for these laws.
\end{definition}

\begin{definition}[Success Probability] \label{def:success}
	Let $g:\mathcal{X}\to[k]$ act as a decision rule that produces a guessed backend label from an observed transcript $X$. The \textit{success probability} for $g$, under a prior $\pi=(\pi_1,\dots,\pi_k)$ on backends, is given by,
	\[
	p_s(g) := \sum_{i=1}^{k}\pi_i\ \mathbb{P}_{X\sim P_i}\Big[ g(X)=i \Big] \quad ,  
	\]
	and the \textit{optimal success probability} over all such $g$ is denoted $p^\star_s:=\sup_g p_s(g)$.
\end{definition}

\begin{definition} [Excess Advantage / Distinguishing Bias]
	In the general setting, with prior $\pi=(\pi_1,\dots,\pi_k)$ over $k$ backends, the best probability we can obtain without the use of a transcript is $p_b:=\max_{i\in[k]}\pi_i$, made by blindly guessing the most likely backend. We can then define an \textit{excess advantage} beyond blind guessing by,
	\[
	\mathrm{Adv} := \frac{p_s^\star-p_b}{1-p_b} \quad .   
	\]
	In the uniform-binary setting, $\mathcal{D}=\{D_1,D_2\}$ and $\pi=(1/2, 1/2)$, so $p_b=1/2$ and the above excess advantage expression recovers the familiar \textit{distinguishing bias};
	\[
	\beta_{D_1,D_2} := \frac{p_s^\star-\frac{1}{2}}{1-\frac{1}{2}} = 2p_s^\star - 1 \quad .
	\]
\end{definition}

We can season the backend/route identification game with cryptographic implications to change its flavour from \textit{identification} to \textit{anonymisation}. Where before we asked how well a user can recover the provider's routing choice, we now ask the equivalent provider-side security question: how small is user's excess advantage, with respect to the kinds of transcript we expose?

\begin{definition}[$\varepsilon$-Anonymity] \label{def:anonymity}
	Begin with the backend identification game played with a set of routes $\mathcal{D}=\{D_1,\dots,D_k\}$. For security parameter $\varepsilon\in(0,1)$, such a setting can be called $\varepsilon$-anonymous with respect to a class $\Phi$ of allowed post-processing maps if,
	\[
	p^\star_s \leq p_b + (1-p_b)\varepsilon \quad ,
	\]
	for all $\phi\in\Phi$. Equivalently, $\sup_{\phi\in\Phi}\mathrm{Adv}\leq\varepsilon$, or $\sup_{\phi\in\Phi}\beta_{D_1,D_2}\leq\varepsilon$ in the uniform-binary setting.
\end{definition}

\section{Formal Theoretical Results} \label{appendix:theoretical_results}
This section will present the formal theoretical results summarised informally in \cref{sec:overview_framework_and_theoretical_results}. We begin by showing that backend identification is exactly the problem of classical statistical hypothesis testing in \cref{appendix:identification_reduces_to_hypothesis}, yielding an exact characterisation of optimal binary identification with total variation distance, a pairwise sufficient condition for general routing anonymity, and describing the rate at which anonymity decays via Chernoff information. Then, in \cref{appendix:post_processing_and_utility}, we focus on the role of post-processing as a means for removing backend-specific information. We then introduce utility maps and state a precise utility-anonymity trade-off relationship that establishes fundamental bounds on anonymity. Finally, we relate statistical distinguishability of routes to the underlying noisy quantum dynamics in \cref{appendix:noise_channel_bounds}, showing that identification relies on average-case channel distances rather than any worst-case distance, and hence defining a new pseudo-distance for (families of) channels within our framework. We then give some simple sufficient conditions for routing anonymity with respect to this average-case distance.

\subsection{Reducing Identification to Classical Statistical Testing} \label{appendix:identification_reduces_to_hypothesis}

\begin{theorem}[Backend Identifiability Reduces to Hypothesis Testing] \label{thm:backend_identifiability_reduces_to_hypothesis_testing}
    Consider the backend identification game, and denote by $X\in\mathcal{X}$ the returned (singleton) transcript about which the user decides on a backend label $i\in[k]$. The user's decision is exactly equivalent to hypothesis testing with hypotheses $H_i : X\sim P_i$, where $P_i$ is the known induced law of $D_i$.
	
	Specifically, in the uniform-binary setting between backends $\mathcal{D}=\{D_1,D_2\}$ with prior $\pi=(1/2,1/2)$, over all binary decision rules $g:\mathcal{X}\to\{1,2\}$, we have that,
	\[
	p^\star_s := \sup_g p_s(g) = \frac{1}{2}\Big(1 + \mathrm{TV}(P_1,P_2)\Big) \quad .
	\]
	Equivalently, we can write the distinguishing bias $\beta_{D_1,D_2}=\mathrm{TV}(P_1,P_2)$.
\end{theorem}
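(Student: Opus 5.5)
The plan has two parts: first show that the game is exactly a Bayesian hypothesis test, then compute the optimal binary test. For the first part, I would use \cref{def:formal_identification_game} and \cref{def:distributions}. Conditional on the hidden label $I=i$, the user's view $X=\phi(C,\hat p_{i,C})$ is distributed exactly as $P_i=\phi_\#Q_i$. The circuit randomness ($C\sim\mu_d$), the shot randomness, and the deterministic map $\phi$ are all integrated into this law, and none of them depends on anything the user controls beyond the public ensemble. The joint law of $(I,X)$ is then $\pi_i P_i(dX)$. Any user strategy is a (possibly randomised) map $g:\mathcal{X}\to[k]$, and its winning probability is exactly $p_s(g)$ from \cref{def:success}. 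So "guess $\tilde I = i$" is the same event as "accept $H_i: X\sim P_i$", and the two problems have identical success functionals. Randomised rules never beat deterministic ones, because $p_s$ is linear in the rule and a randomised rule is a mixture of deterministic ones.

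For the binary formula, I would fix a common dominating measure $\nu = P_1+P_2$ and set densities $f_i = dP_i/d\nu$. A deterministic rule $g$ is the same as a measurable acceptance region $A=g^{-1}(1)$, and then
\[
p_s(g)=\tfrac12 P_1(A)+\tfrac12 P_2(A^c)=\tfrac12+\tfrac12\big(P_1(A)-P_2(A)\big).
\]
Taking the supremum over measurable $A$ gives $\sup_A (P_1(A)-P_2(A)) = \mathrm{TV}(P_1,P_2)$, which is just the variational characterisation of total variation. To make this explicit, I would note two things. The region $A^\star=\{f_1\ge f_2\}$ attains $\int (f_1-f_2)_+\,d\nu = \tfrac12\int|f_1-f_2|\,d\nu$, using that $\int (f_1-f_2)\,d\nu=0$. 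Conversely, for any $A$ we have $\int_A (f_1-f_2)\le \int (f_1-f_2)_+$. This is the likelihood-ratio (Neyman--Pearson/Bayes) test, so the supremum is attained and $p_s^\star=\tfrac12(1+\mathrm{TV})$. Finally, $\beta_{D_1,D_2}=2p_s^\star-1=\mathrm{TV}(P_1,P_2)$ follows directly from the definition.

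The main obstacle is measure-theoretic bookkeeping rather than any inequality. I need $\mathcal{X}$ to carry a $\sigma$-algebra for which $\phi$ is measurable, so that $P_i$ is a genuine probability measure and $A^\star$ is measurable. I also have to handle the user's choice of depth $d$: either treat $d$ as fixed before the test, so that $P_i$ depends on $\mu_d$, or absorb it into the ensemble as a mixture. The cleanest route is to fix $d$ and note that the statement holds for each $d$. Since $\mathcal{Y}$ is countable for a countable circuit family, or standard Borel in general, Radon--Nikodym applies and no further subtlety arises.
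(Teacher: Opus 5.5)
Your proposal is correct and follows the paper's proof: write each rule through its acceptance region $A=g^{-1}(1)$, get $p_s(g)=\tfrac12\bigl(1+P_1(A)-P_2(A)\bigr)$, and take the supremum using the variational characterisation of total variation. Your extra steps add rigour but do not change the route. The explicit likelihood-ratio region $\{f_1\ge f_2\}$ shows the supremum is attained, and the remark that randomised rules cannot help closes the case of randomised strategies; the paper takes both for granted in its one-line identification $\sup_A \bigl(P_1(A)-P_2(A)\bigr)\equiv\mathrm{TV}(P_1,P_2)$.
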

\begin{proof}
    Observe that any binary decision rule $g$ splits the transcript space $\mathcal{X}$ into a region $A\subseteq\mathcal{X}$ (corresponding to guessing $D_1$) and its complement $A^c$ (resp. $D_2$). Then, under a uniform prior,
    \[
    p_s(g) = \frac{1}{2}P_1(A)+\frac{1}{2}P_2(A^c) = \frac{1}{2}\Big(1 + P_1(A) - P_2(A)\Big) \quad ,
    \]
    hence, maximising the success looks like,
    \[
    p_s^\star := \sup_g p_s(g) = \frac{1}{2}\left( 1 + \sup_{A\subseteq\mathcal{X}} P_1(A) - P_2(A) \right) \equiv \frac{1}{2}\Big(1 + \mathrm{TV}(P_1,P_2)\Big) \quad .
    \]
    and the bias $\beta_{D_1,D_2}$ easily follows by definition.
\end{proof}

In the following proposition, we observe that pairwise distinguishability, measured in the familiar total variation distance, can be seen to bound the excess advantage over any arbitrary backend set. Hence, an arbitrary collection of routes and prior can be worst-case bounded by another route identification game in the uniform-binary setting.

\begin{prop}[Pairwise Indistinguishability is Sufficient for Global Anonymity] \label{prop:pairwise_indistinguishability_is_sufficient_for_global_anonymity}
    For an arbitrary route set with prior $\pi$, choose $r^\star\in\arg\max_i\pi_i$ to be a most probable route. Then for any deterministic post-processing map, we have,
	\begin{equation} \label{eq:pairwise_indistinguishability_is_sufficient_for_global_anonymity}
	   p_s^\star\leq \pi_{r^\star} + \sum_{i\neq r^\star} \pi_i \ \mathrm{TV}(P_i, P_{r^\star}) \quad .
	\end{equation}
	In particular, $\mathrm{Adv}\leq\max_{i\neq j}\mathrm{TV}(P_i, P_j)$. Hence, in the general setting, $\varepsilon$-anonymity is guaranteed if pairwise distinguishability between any pair of distinct routes is at most $\varepsilon$ under any deterministic post-processing map in the class of interest.
\end{prop}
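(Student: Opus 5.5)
The plan is a direct decision-region argument in the spirit of the proof of \cref{thm:backend_identifiability_reduces_to_hypothesis_testing}, generalised from two hypotheses to $k$. Fix any deterministic decision rule $g:\mathcal{X}\to[k]$. It induces a measurable partition $\mathcal{X}=A_1\sqcup\cdots\sqcup A_k$, with $A_i:=g^{-1}(i)$. By \cref{def:success},
\[
p_s(g)=\sum_{i=1}^k \pi_i\, P_i(A_i) .
\]
Randomised rules need no separate treatment. A randomised rule gives a success probability that is a convex combination of deterministic ones (or, equivalently, a soft partition $\sum_i f_i=1$ with $f_i\geq 0$), and the argument below goes through verbatim with $P_i(f_i)$ in place of $P_i(A_i)$. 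So it is enough to bound $p_s(g)$ uniformly over partitions and then take the supremum.

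The key step is to compare every hypothesis to the reference law $P_{r^\star}$. For each $i\neq r^\star$, the variational characterisation of total variation gives
\[
P_i(A_i)\leq P_{r^\star}(A_i)+\mathrm{TV}(P_i,P_{r^\star}).
\]
This is the same supremum-over-sets identity used in \cref{thm:backend_identifiability_reduces_to_hypothesis_testing}. Substituting it, I get
\[
p_s(g)\leq \sum_{i}\pi_i P_{r^\star}(A_i)+\sum_{i\neq r^\star}\pi_i\,\mathrm{TV}(P_i,P_{r^\star}).
\]
I then use $\pi_i\leq\pi_{r^\star}$ for every $i$, together with $\sum_i P_{r^\star}(A_i)=1$ because the $A_i$ partition $\mathcal{X}$. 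These bound the first sum by $\pi_{r^\star}$. Taking the supremum over $g$ then yields \eqref{eq:pairwise_indistinguishability_is_sufficient_for_global_anonymity}.

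For the corollary, I subtract $p_b=\pi_{r^\star}$ from both sides and bound each $\mathrm{TV}(P_i,P_{r^\star})$ by $\max_{i\neq j}\mathrm{TV}(P_i,P_j)$. This gives
\[
p_s^\star-p_b\leq (1-\pi_{r^\star})\max_{i\neq j}\mathrm{TV}(P_i,P_j),
\]
and dividing by $1-p_b$ gives $\mathrm{Adv}\leq\max_{i\neq j}\mathrm{TV}(P_i,P_j)$. The degenerate case $p_b=1$, where $\mathrm{Adv}$ is undefined, has to be excluded or treated by convention, and I will remark on this. The anonymity statement then follows from \cref{def:anonymity}: if every pairwise TV is at most $\varepsilon$ for all $\phi\in\Phi$, the bound holds uniformly in $\phi$, so the supremum over $\phi$ is also at most $\varepsilon$.

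The only delicate point is choosing the reference correctly. The trick is to pivot on the most probable route $r^\star$, so that the leftover mass $\sum_i\pi_i P_{r^\star}(A_i)$ collapses to at most $\pi_{r^\star}$. Pivoting on an arbitrary route would not collapse in this way. Beyond that, I need to check measurability of the partition in the abstract transcript space $\mathcal{X}$, which follows from $g$ being measurable.
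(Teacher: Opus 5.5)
Your proposal is correct and follows the paper's proof essentially step for step. It uses the same partition $A_i=g^{-1}(i)$, the same pivot on $r^\star$ via $P_i(A_i)\le P_{r^\star}(A_i)+\mathrm{TV}(P_i,P_{r^\star})$, the same collapse using $\pi_i\le\pi_{r^\star}$ and $\sum_i P_{r^\star}(A_i)=1$, and the same normalisation by $1-p_b$; your remarks on randomised rules and on the degenerate case $p_b=1$ are harmless additions the paper does not make explicitly.
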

\begin{proof}
	Let $g:\mathcal{X}\to[k]$ be any decision rule, with preimages $A_i:=g^{-1}(\{i\})$ for each $i\in[k]$. By \cref{def:success}, we quantify the success probability for $g$ by,
	\[
		p_s(g) = \sum_{i\in[k]}\pi_i P_i(A_i) = \pi_{r^\star}P_{r^\star}(A_{r^\star}) + \sum_{i\neq r^\star}\pi_iP_i(A_i) \quad .
	\]
	For each $i\neq r^\star$, we have by the definition of total variation, for every $A_i$,
	\[
	\begin{split}
		P_i(A_i) - P_{r^\star}(A_i) &\leq \mathrm{TV}(P_i,P_{r^\star}) \quad ,\\
		\therefore\ P_i(A_i) &\leq P_{r^\star}(A_i) + \mathrm{TV}(P_i,P_{r^\star}) \quad ,
	\end{split}
	\]
	and so substituting this gives,
	\[
		p_s(g) \leq \pi_{r^\star}P_{r^\star}(A_{r^\star}) 
		+ \sum_{i\neq r^\star}\pi_i P_{r^\star}(A_i)
		+ \sum_{i\neq r^\star} \pi_i\cdot\mathrm{TV}(P_i,P_{r^\star}) \quad .
	\]
	Then, using $\pi_i\leq\pi_{r^*}$ for every $i$, we can see that,
	\[
	\pi_{r^\star}P_{r^\star}(A_{r^\star}) + \sum_{i\neq r^\star}\pi_i P_{r^\star}(A_i) \leq \pi_{r^\star}\sum_{i\in[k]}P_{r^\star}(A_i)=\pi_{r^\star} \quad ,
	\]
	since the $\{A_i\}_i$ partition $\mathcal{X}$. Hence,
	\[
	p_s(g) \leq \pi_{r^\star} + \sum_{i\neq r^\star} \pi_i\cdot \mathrm{TV}(P_i,P_{r^\star}) \quad . \tag{\ref{eq:pairwise_indistinguishability_is_sufficient_for_global_anonymity}}
	\]
	
	Now we prove the second statement. Note first that taking the supremum over the above to obtain $p_s^\star=\sup_{g}p_s(g)$ preserves the upper bound. Then second observe that $p_b=\max_{i\in[k]}\pi_i=\pi_{r^\star}$. So, using the first statement of Eq. (\ref{eq:pairwise_indistinguishability_is_sufficient_for_global_anonymity}), we have,
	\[
	\mathrm{Adv} := \frac{p_s^\star - p_b}{1 - p_b} \leq \sum_{i\neq r^\star}\frac{\pi_i}{1-\pi_{r^\star}}\cdot \mathrm{TV}(P_i,P_{r^\star}) \quad .
	\]
	Since the coefficients in this expression form a probability distribution on $[k]\backslash\{r^\star\}$, the weighted average is clearly bounded by the maximum term;
	\[
	\sum_{i\neq r^\star}\frac{\pi_i}{1-\pi_{r^\star}}\cdot \mathrm{TV}(P_i,P_{r^\star}) \leq \max_{i\neq r^\star}\ \mathrm{TV}(P_i,P_{r^\star}) \quad .
	\]
	Finally, since $[k]\backslash\{r^\star\}\subset[k]$, we can loosen the bound a little further to yield,
	\[
	\max_{i\neq r^\star}\ \mathrm{TV}(P_i,P_{r^\star}) \leq \max_{i\neq j}\ \mathrm{TV}(P_j,P_j) \quad .
	\]
	Therefore, having $\max_{i\neq j} \mathrm{TV}(P_i,P_j)\leq\varepsilon$ necessarily gives $\mathrm{Adv}\leq\varepsilon$, so if no pair of routes $i,j$ have $\mathrm{TV}(P_i,P_j)>\varepsilon$ under any post-processing map in the class, then $\varepsilon$-anonymity is satisfied.
\end{proof}

The above \cref{prop:pairwise_indistinguishability_is_sufficient_for_global_anonymity} gives a practical sufficient condition: if every pair of routes is hard to distinguish under the transcript maps that the provider returns, then the whole routing system is anonymous by \cref{def:anonymity}. This is particularly useful for certification because it avoids needing to solve the full $k$-ary decision problem.

To extend \cref{thm:backend_identifiability_reduces_to_hypothesis_testing}, we can reduce extended backend identification, played over $T$ rounds, to hypothesis testing between the product distributions over the $T$-length transcript space $\mathcal{X}^{\otimes T}$. Asymptotically, we can then say how quickly anonymity---the converse to distinguishability in this context---breaks down as we prolong the game over more rounds, with a rate of anonymity decay controlled exponentially by the Chernoff information between induced distributions.

\begin{theorem}[Persistent Routing Reduces to Chernoff Testing] \label{thm:persistent_routing_reduces_to_chernoff_testing}
    Extend the backend identification game over $T$ rounds of persistent routing, and denote by $(X_1,\dots,X_T)\in\mathcal{X}^{\otimes T}$ the returned $T$-length transcript about which the user decides on a backend label $i\in[k]$. The user's decision is exactly equivalent to hypothesis testing with hypotheses $H_i : (X_1,\dots,X_T)\sim P_i^{\otimes T}$, where $P_i$ is the known induced distribution of backend $D_i$.
	
	Specifically, in the uniform-binary setting between backends $\mathcal{D}=\{D_1,D_2\}$ with prior $\pi=(1/2,1/2)$, over all binary decision rules $g:\mathcal{X}^{\otimes T}\to\{1,2\}$, we have that,
	\[
	p^\star_{s}(T) := \sup_g p_s(g) = \frac{1}{2}\Big(1 + \mathrm{TV}\Big(P_1^{\otimes T},P_2^{\otimes T}\Big)\Big) \quad , 
	\]
	and the optimal success probability $p^\star_{s}(T)$ over the $T$ rounds satisfies,
	\[
	\lim_{T\to\infty}-\frac{1}{T}\log\big(1-p_{s}^\star(T)\big) = D_{\mathrm{Ch}}(P_1, P_2) \quad ,
	\]
	where $D_{\mathrm{Ch}}(\cdot, \cdot)$ denotes the Chernoff information between probability distributions. In particular, when $D_{\mathrm{Ch}}(P_1, P_2)$ is nonzero and finite, we have,
	\[
	1-p^\star_{s}(T) = \exp\Big( -T\ D_{\mathrm{Ch}}(P_1, P_2) + o(T) \Big) \quad .
	\]
\end{theorem}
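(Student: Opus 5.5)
The proof has three parts: the reduction to product-law testing, the exact formula for $p_s^\star(T)$, and the Chernoff-rate asymptotics. For the reduction, I would show that under persistent routing with passive probing the $T$-length transcript is i.i.d. given $I=i$. The circuits $C_1,\dots,C_T$ are drawn independently from $\mu_d$ before any execution. Each execution uses fresh independent shots on the same channel family $\mathcal{N}_i$. Each $X_t=\phi(C_t,\hat p_{i,C_t})$ is a deterministic function of the pair $(C_t,\hat p_{i,C_t})$. So the pairs are i.i.d. with law $Q_i$, and the transcripts are i.i.d. with law $P_i=\phi_\#Q_i$. Hence the joint law of $(X_1,\dots,X_T)$ under $I=i$ is $P_i^{\otimes T}$. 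The user's decision is then a test of $H_i:(X_1,\dots,X_T)\sim P_i^{\otimes T}$.

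For the exact formula, I would apply \cref{thm:backend_identifiability_reduces_to_hypothesis_testing} verbatim with transcript space $\mathcal{X}^{\otimes T}$ and laws $P_1^{\otimes T},P_2^{\otimes T}$. Its proof only partitions the transcript space into acceptance regions, so it transfers unchanged and gives $p_s^\star(T)=\tfrac12(1+\mathrm{TV}(P_1^{\otimes T},P_2^{\otimes T}))$. This yields the identity
\[
1-p_s^\star(T)=\tfrac12\int \min\big(dP_1^{\otimes T},dP_2^{\otimes T}\big),
\]
with densities taken against the dominating measure $\nu^{\otimes T}$, $\nu=P_1+P_2$. The Bayes error is therefore the minimum overlap of the two product laws.

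For the asymptotics, I would prove matching upper and lower exponential bounds on this overlap.

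The \emph{upper bound} is elementary. Use $\min(a,b)\le a^{s}b^{1-s}$ for every $s\in[0,1]$, together with the product structure. This gives $1-p_s^\star(T)\le\tfrac12\big(\int p_1^{s}p_2^{1-s}d\nu\big)^T$. Optimising over $s$ yields $\liminf_T -\tfrac1T\log(1-p_s^\star(T))\ge D_{\mathrm{Ch}}(P_1,P_2)$, where $D_{\mathrm{Ch}}=-\min_{s\in[0,1]}\log\int p_1^sp_2^{1-s}d\nu$.

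The \emph{lower bound} is the main obstacle. I would either invoke Chernoff's theorem (as in Cover and Thomas) or carry out the standard Cramér/tilting argument:
\begin{itemize}
\item Let $L=\log(p_1/p_2)$ be the per-round log-likelihood ratio and $\Lambda(s)=\log\int p_1^sp_2^{1-s}d\nu$ its cumulant generating function under $P_2$.
\item Take $s^\star$ to be the minimiser of $\Lambda$ on $[0,1]$.
\item Define the tilted law $P_{s^\star}\propto p_1^{s^\star}p_2^{1-s^\star}$. Under it $L$ has mean $\Lambda'(s^\star)=0$ when $s^\star$ is interior.
\item On the event $\{|\sum_t L_t|\le\eta T\}$, which has tilted probability tending to $1$ by the law of large numbers, both $p_1^{\otimes T}$ and $p_2^{\otimes T}$ are within a factor $e^{\pm\eta T}$ of $e^{T\Lambda(s^\star)}\,dP_{s^\star}^{\otimes T}$.
\item Hence the overlap $\int\min(dP_1^{\otimes T},dP_2^{\otimes T})$ is at least $e^{T(\Lambda(s^\star)-2\eta)}(1-o(1))$.
\end{itemize}

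The lower-bound argument needs care at the boundary cases. When $s^\star\in\{0,1\}$, which happens with non-mutually-absolutely-continuous supports, I would restrict to the common support. When the measures are mutually singular, $D_{\mathrm{Ch}}=\infty$ and $p_s^\star(T)=1$ for all $T$, so the limit holds trivially with $\log 0=-\infty$; this is exactly why the final $o(T)$ form is stated only for finite nonzero $D_{\mathrm{Ch}}$. Since $\mathcal{X}$ is abstract, working with densities against $\nu$ avoids needing finiteness of the transcript space. Combining the two bounds gives the limit, and rewriting it gives $1-p_s^\star(T)=\exp(-T D_{\mathrm{Ch}}+o(T))$.
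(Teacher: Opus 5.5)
Your reduction to $P_i^{\otimes T}$ and your formula for $p_s^\star(T)$ match the paper's proof exactly. The paper uses the i.i.d. structure of persistent routing, reuses the acceptance-region argument of \cref{thm:backend_identifiability_reduces_to_hypothesis_testing}, and then simply cites Chernoff's theorem from Cover--Thomas, so your ``invoke Chernoff's theorem'' option is the paper's proof. Your self-contained alternative is worth keeping. It uses the overlap identity with respect to $\nu=P_1+P_2$, the bound $\min(a,b)\le a^sb^{1-s}$, and exponential tilting. The Cover--Thomas theorem is proved for finite alphabets, whereas here $X_t=\phi(C_t,\hat p_{i,C_t})$ typically carries the circuit (e.g.\ Haar-random brickwork circuits), so $\mathcal{X}$ is uncountable. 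The paper's citation glosses over this point and your measure-theoretic route covers it.

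Your handling of the boundary case $s^\star\in\{0,1\}$ does not work as stated. Restricting to the common support $S=\{p_1>0,\,p_2>0\}$ is the right first step: it makes $\Lambda$ continuous on $[0,1]$, so $D_{\mathrm{Ch}}$ is a genuine minimum rather than an infimum over $(0,1)$. But it does not move the minimiser into the interior. So the tilted mean of $L$ need not vanish, and the two-sided window $\{|\sum_t L_t|\le\eta T\}$ can have tilted probability zero.

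Example: take $P_1=(\tfrac12,\tfrac12,0)$ and $P_2=(\tfrac13,\tfrac13,\tfrac13)$. On $S=\{1,2\}$ one gets $\Lambda(s)=\log\tfrac23+s\log\tfrac32$, minimised at $s^\star=0$. The tilted law is $P_0=P_2(\cdot\mid S)$ and $L\equiv\log\tfrac32$, so the window is empty for $\eta<\log\tfrac32$. Yet the true Bayes error is $1-p_s^\star(T)=\tfrac12(\tfrac23)^T$, whose exponent is exactly $D_{\mathrm{Ch}}=\log\tfrac32$.

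The fix is to use a one-sided event. For $s^\star=0$, on $S^T$ you have exactly $dP_2^{\otimes T}=e^{T\Lambda(0)}dP_0^{\otimes T}$ and $dP_1^{\otimes T}=e^{T\Lambda(0)}e^{\sum_tL_t}dP_0^{\otimes T}$. Hence on $\{\sum_t L_t\ge-\eta T\}$ the minimum of the two densities is at least $e^{T(\Lambda(0)-\eta)}dP_0^{\otimes T}$. This event has $P_0^{\otimes T}$-probability tending to $1$ by the weak law of large numbers. The reason is that optimality of $s^\star=0$ for the convex $\Lambda$ gives $\mathbb{E}_{P_0}[L]=\Lambda'(0^+)\ge 0$. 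This mean is finite, because $\mathbb{E}_{P_0}[e^{L}]<\infty$ controls $L^+$. The case $s^\star=1$ is symmetric with the event $\{\sum_tL_t\le\eta T\}$. With this change your lower bound, and hence the limit, holds in full generality.
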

\begin{proof}
    Recall that we have a conditional independence enforced by \cref{def:persistent_routing}, which says that the transcripts $X_1,\dots,X_T$ are drawn i.i.d. from some $P_i$. So, for any measurable region $A_1\times\cdots\times A_T\subseteq\mathcal{X}^{\otimes T}$, we have,
    \[
    \mathbb{P}_i\Big[(X_1,\dots,X_T)\in A_1\times \cdots\times A_T\Big] = \prod_{t=1}^T \mathbb{P}[X_t\in A_t]=\prod_{t=1}^T P_i(A_t) \quad ,    
    \]
    thus the law of the full $T$-length transcript is the product $P_i^{\otimes T}$. Then, by the same argument as in \cref{thm:backend_identifiability_reduces_to_hypothesis_testing}, the reduction to hypothesis testing follows.

    By the standard Chernoff theorem for Bayesian binary hypothesis testing of i.i.d. observations \cite[see e.g.][chapter 11]{CoverThomas2006}, the optimal error probability, given by $1-p_{s,T}^\star$, satisfies, 
    \[
    \lim_{T\to\infty}-\frac{1}{T}\log\big(1-p_{s,T}^\star\big) = D_{\mathrm{Ch}}(P_1, P_2) \quad ,
    \]
    which we can rearrange like,
    \[
    1-p^\star_{s,T} = \exp\Big( -T\cdot D_{\mathrm{Ch}}(P_1, P_2) + o(T) \Big) \quad .
    \]
    for some small term $o(T)$ collecting practically negligible sublinear corrections in the exponent.
\end{proof}

\subsection{Post-Processing and the Utility-Anonymity Trade-Off} \label{appendix:post_processing_and_utility}

In the quantum cloud setting, post-processing plays a very interesting role that deserves careful and deliberate consideration. Naively, we can imagine the provider executing a given circuit on behalf of the user and providing the raw bitstring counts, much like current cloud-based quantum services do in the present day. This represents the most flexible service, with the provider taking somewhat of a `no-questions-asked' stance and allowing the user to get on with whatever she wishes to do with her outcomes. In the near future, however, as the quantum cloud setting expands into an increasingly commercialised and specialised enterprise, providers may wish to perform only a particular `service' for the user. Instead of a raw empirical histogram over bitstrings, providers may prefer to release only a feature about these bitstring outcomes; e.g. if the user is interested in the ground-state energy of a particular Hamiltonian, the provider may perform quantum phase estimation on her behalf, in which case the feature of interest looks like the peak value in the histogram.

In this service-centric paradigm, post-processing acts not only as a means to directly compute features of interest on behalf of the user, but also as the principal mechanism by which the provider can remove backend-specific information from the released transcript. We formalise this with the following \cref{prop:post_processing_cannot_increase_information} in the uniform-binary case, well understood already in classical statistics.

\begin{prop}[Post-Processing Cannot Increase Information] \label{prop:post_processing_cannot_increase_information}
    Let $Q_1,Q_2$ be the raw observation laws associated with backends $D_1,D_2$, and let $\phi:\mathcal{Y}\to\mathcal{X}$ be any (measurable) deterministic post-processing map. Then,
    \[
    \mathrm{TV}(P_1,P_2) := \mathrm{TV}(\phi_\# Q_1,\phi_\# Q_2) \leq \mathrm{TV}(Q_1,Q_2) \quad .
    \]
\end{prop}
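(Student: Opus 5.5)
The plan is to use the variational characterisation of total variation distance that was already employed in the proof of \cref{thm:backend_identifiability_reduces_to_hypothesis_testing}, namely
\[
\mathrm{TV}(P,Q) = \sup_{A} \big( P(A) - Q(A) \big) \quad ,
\]
where the supremum runs over measurable sets. Throughout, $P_i=\phi_\# Q_i$ as in \cref{def:distributions}. The whole argument is a pullback: every test on the transcript space $\mathcal{X}$ becomes a test on the raw space $\mathcal{Y}$ by composing with $\phi$.

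First, fix an arbitrary measurable $A\subseteq\mathcal{X}$. By measurability of $\phi$, the preimage $\phi^{-1}(A)\subseteq\mathcal{Y}$ is measurable. By definition of the pushforward, $P_i(A) = Q_i(\phi^{-1}(A))$ for $i=1,2$. Hence
\[
P_1(A) - P_2(A) = Q_1(\phi^{-1}(A)) - Q_2(\phi^{-1}(A)) \leq \sup_{B\subseteq\mathcal{Y}} \big( Q_1(B) - Q_2(B) \big) = \mathrm{TV}(Q_1,Q_2) \quad .
\]
Taking the supremum over $A$ on the left then gives $\mathrm{TV}(P_1,P_2)\leq \mathrm{TV}(Q_1,Q_2)$.

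An equivalent operational phrasing, which fits the game-based framework, uses the decision-rule picture. Any decision rule $g:\mathcal{X}\to\{1,2\}$ acting on transcripts induces the rule $g\circ\phi$ on raw observations, and the two rules have the same success probability. Therefore the optimum over rules on $\mathcal{Y}$ is at least the optimum over rules on $\mathcal{X}$. Applying \cref{thm:backend_identifiability_reduces_to_hypothesis_testing} to both the raw and the induced laws turns this comparison of success probabilities into the desired TV inequality. I would record this as a remark, since it explains why raw observations are maximally identifying.

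There is no genuine obstacle here. The only point needing care is measurability, namely that $\phi^{-1}(A)$ is a legitimate event for $Q_i$. This is exactly what the assumption that $\phi$ is measurable provides. In the experimental setting it is automatic anyway, because $\Delta_m(\Omega_n)$ is finite and $\mathcal{C}$ can be taken with a suitable (e.g.\ discrete) $\sigma$-algebra.
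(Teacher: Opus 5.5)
Your proposal is correct and is essentially the paper's proof: the paper also writes $P_i(A)=Q_i(\phi^{-1}(A))$, uses measurability of $\phi$ to note that the preimage events form a subcollection of the measurable subsets of $\mathcal{Y}$, and concludes because a supremum over a subcollection is no larger. Your one-sided form $\sup_A\big(P_1(A)-P_2(A)\big)$ in place of the paper's $\sup_A\big|P_1(A)-P_2(A)\big|$ makes no difference, since the two coincide by passing to complements.
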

\begin{proof}
    Recall, by definition, that the total variation distance between the induced distributions $P_1,P_2$ is given by,
    \[
    \mathrm{TV}(P_1,P_2) := \sup_{A\subseteq\mathcal{X}}\Big|P_1(A)-P_2(A)\Big| \quad ,
    \]
    with a supremum over all measurable subsets $A$ of the transcript space $\mathcal{X}$. Now, using that $P_i=\phi_\# Q_i$, we have for every such $A$,
    \[
    P_i(A) = Q_i(\phi^{-1}(A)) \quad .
    \]
    Denote by $\mathcal{Y}_\phi := \{\phi^{-1}(A) : A\subseteq\mathcal{X}\}$ the set of all raw-space events that can be expressed as preimages of transcript-space events under $\phi$. Then we have,
    \[
    \mathrm{TV}(P_1,P_2) = \sup_{B\in\mathcal{Y}_\phi}\Big|Q_1(B)-Q_2(B)\Big| \quad .
    \]
    On the other hand, we also have, by definition,
    \[
    \mathrm{TV}(Q_1,Q_2) = \sup_{B\in\mathcal{Y}}\Big|Q_1(B)-Q_2(B)\Big| \quad ,
    \]
    with a supremum over all measurable subsets $B$ of the raw space $\mathcal{Y}$. Importantly, observe that $\mathcal{Y}_\phi\subseteq\mathcal{Y}$, since the assumption that $\phi$ is measurable tells us that every preimage $\phi^{-1}(\cdot)$ is itself a measurable raw event (in $\mathcal{Y}$). Hence, using the elementary fact that, if $S_1\subseteq S_2$, then $\sup_{x\in S_1}f(x)\leq \sup_{x\in S_2}f(x)$ for any real-valued $f$,
    \[
    \sup_{B\in\mathcal{Y}_\phi}\Big|Q_1(B)-Q_2(B)\Big| \leq \sup_{B\in\mathcal{Y}}\Big|Q_1(B)-Q_2(B)\Big| \quad ,
    \]
    and so, for any measurable $\phi$, we have $\mathrm{TV}(P_1,P_2) \leq \mathrm{TV}(Q_1,Q_2)$ as required.
\end{proof}

More generally, note the following corollary which states that arbitrary post-processing cannot increase distinguishability.

\begin{corollary} \label{cor:post_processing_doesnt_increase_distinguishability}
	If a map $\phi'$ is a `further' post-processing of another map $\phi$, meaning that there exists some $\psi$ such that $\phi'=\psi\circ\phi$, then,
	\[
	\beta_{D_1,D_2}(\mu,m,\phi') \leq \beta_{D_1,D_2}(\mu,m,\phi)\quad .
	\]
\end{corollary}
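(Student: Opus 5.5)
The plan is to reduce the corollary to \cref{prop:post_processing_cannot_increase_information} applied at the level of the transcript space of $\phi$, rather than the raw space. By \cref{thm:backend_identifiability_reduces_to_hypothesis_testing}, in the uniform-binary setting the distinguishing bias under any post-processing map is exactly the total variation distance between the induced laws. So it suffices to show
\[
\mathrm{TV}(\phi'_\# Q_1,\phi'_\# Q_2)\leq \mathrm{TV}(\phi_\# Q_1,\phi_\# Q_2) \quad .
\]

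The key step is functoriality of pushforwards. Since $\phi'=\psi\circ\phi$, for every measurable $A$ in the target space of $\psi$ we have $(\psi\circ\phi)^{-1}(A)=\phi^{-1}(\psi^{-1}(A))$. Hence $\phi'_\# Q_i=\psi_\#(\phi_\# Q_i)=\psi_\# P_i(\mu,m,\phi)$.

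I would then invoke \cref{prop:post_processing_cannot_increase_information} with the following substitutions:
\begin{itemize}
\item the raw laws are replaced by $P_i(\mu,m,\phi)$ on $\mathcal{X}$;
\item the post-processing map is $\psi$.
\end{itemize}
The proof of that proposition never used the specific structure of $\mathcal{Y}$, only that the map is measurable. So it applies verbatim and gives $\mathrm{TV}(\psi_\# P_1,\psi_\# P_2)\leq \mathrm{TV}(P_1,P_2)$. Combined with the identity $\beta=\mathrm{TV}$ from \cref{thm:backend_identifiability_reduces_to_hypothesis_testing}, this yields the claim.

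The only real obstacle is measurability bookkeeping. I must assume (or state explicitly) that $\psi$ is measurable with respect to the $\sigma$-algebra on $\mathcal{X}$, so that $\psi^{-1}(A)$ is a legitimate event there and the proposition applies. If $\psi$ were allowed to be non-measurable, the induced law would not even be defined, so this is an implicit hypothesis of the corollary. An alternative route, avoiding re-invoking the proposition, is the direct observation that every event $\phi'^{-1}(A)$ equals $\phi^{-1}(\psi^{-1}(A))$. The class of raw events visible through $\phi'$ is therefore contained in that visible through $\phi$, and the supremum defining TV over the smaller class is no larger. I would present this one-line containment argument as the proof, citing the earlier proposition for the general principle.
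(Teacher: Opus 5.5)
Your proposal is correct and is essentially the paper's argument. The paper states the corollary without a separate proof, as an immediate consequence of \cref{prop:post_processing_cannot_increase_information} applied to $\psi$ acting on the laws $\phi_\# Q_i$; the identity $(\psi\circ\phi)_\# Q_i=\psi_\#(\phi_\# Q_i)$ is written out in the proof of \cref{thm:utility_preserving_no_free_lunch}, and $\beta=\mathrm{TV}$ comes from \cref{thm:backend_identifiability_reduces_to_hypothesis_testing}. Your explicit requirement that $\psi$ be measurable is a sensible addition that the paper leaves implicit.
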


We can easily see that this result already implies that raw observations are informationally maximal for backend identification, since any deterministic post-processing map $\phi$ can clearly be called a `further' post-processing of the identity map.

We should now place increased scrutiny on the choice of post-processing map; of course, the provider could choose e.g. a trivial constant function so that $P_1=P_2$ and thus achieve indefinite \textit{anonymity}, but there is obviously a lack of \textit{utility} in this `service'. Our next natural progression is then a question of trade-off between anonymity and utility through the choice of post-processing map: the provider should look to remove route-specific information as much as possible without also removing task-relevant information. In this work, we model the promised service via the following \textit{utility map}---the promise of a service is the promise that a particular computation is being performed on the bitstrings, and this utility should be preserved in that its codomain remains accessible.

\begin{definition}[Utility Map / Utility Law]
	Let $\mathcal{U}$ be a measurable utility space. A \textit{utility map} is a measurable deterministic map $u:\mathcal{Y}\to\mathcal{U}$, where $\mathcal{Y}=\mathcal{C}\times\Delta_m(\Omega_n)$ is the raw observation space.
	
	For a given backend/route $D_i$, the \textit{utility law} is the distribution $P_i(\dots,u):=u_\# Q_i$ induced by the utility map $u$, as in \cref{def:distributions}, as the forward pass of the raw distribution $Q_i$ under $u$.
\end{definition}

\begin{definition}[Utility Preservation] \label{def:exact_utility_preservation}
	Let $\phi:\mathcal{Y}\to\mathcal{X}$ be any deterministic post-processing map, and let $u:\mathcal{Y}\to\mathcal{U}$ be any utility map. We say that $\phi$ \textit{preserves the utility} $u$ exactly if there exists a (measurable) decoder $\psi:\mathcal{X}\to\mathcal{U}$ such that $u=\psi\circ\phi$.
\end{definition}

Under \cref{def:exact_utility_preservation}, the provider is then constrained (perhaps contractually obligated) to choose a class $\Phi$ of post-processing maps which exactly preserve the promised utility $u$. In the following theorem, we formalise a kind of no-free-lunch evident under this restriction: that if the requested utility is itself route-sensitive, then hiding the route necessarily means changing, coarsening, randomising, or otherwise degrading the utility.

\begin{theorem}[Utility-Preserving No-Free-Lunch] \label{thm:utility_preserving_no_free_lunch}
    Let $u:\mathcal{Y}\to\mathcal{U}$ and $\phi:\mathcal{Y}\to\mathcal{X}$ be deterministic utility and post-processing maps respectively. If $\phi$ preserves the utility $u$ exactly, then for every pair of routes $\{D_1,D_2\}$, we have,
	\[
	\mathrm{TV}\Big( P_1(\dots,\phi),P_2(\dots,\phi) \Big) \geq \mathrm{TV}\Big( P_1(\dots,u),P_2(\dots,u) \Big) \quad ,
	\]
	and also,
	\[
	D_\mathrm{Ch}\Big( P_1(\dots,\phi),P_2(\dots,\phi) \Big) \geq D_\mathrm{Ch}\Big( P_1(\dots,u),P_2(\dots,u) \Big) \quad ,
	\]
	Hence, $\phi$ may only remove route-specific information which is extraneous to the promised utility from $u$; it cannot remove route-information already encoded in the utility output. 
\end{theorem}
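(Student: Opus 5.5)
The plan is to reduce both inequalities to a data-processing argument. The key observation is that exact utility preservation gives a measurable decoder $\psi:\mathcal{X}\to\mathcal{U}$ with $u=\psi\circ\phi$. Pushforwards compose, so for each route $i$ we get
\[
P_i(\dots,u)=u_\# Q_i=(\psi\circ\phi)_\# Q_i=\psi_\#\big(\phi_\# Q_i\big)=\psi_\# P_i(\dots,\phi) \quad .
\]
In other words, the utility laws are a deterministic post-processing of the induced laws under $\phi$. Both claimed inequalities are therefore instances of "processing cannot increase distinguishability", now applied with $\phi$-transcripts playing the role of raw data.

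For the TV inequality, I would apply \cref{prop:post_processing_cannot_increase_information} verbatim, replacing $(\mathcal{Y},Q_i,\phi)$ by $(\mathcal{X},P_i(\dots,\phi),\psi)$. The proof of that proposition uses only that the map is measurable, so that $\{\psi^{-1}(A)\}$ is a sub-collection of measurable events, and this holds for $\psi$ by assumption. This step is equivalent to \cref{cor:post_processing_doesnt_increase_distinguishability} with $\phi'=u$. Combined with \cref{thm:backend_identifiability_reduces_to_hypothesis_testing}, it also says that the distinguishing bias under $\phi$ is at least that under $u$.

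For the Chernoff inequality, I would write the Chernoff information in its variational form,
\[
D_{\mathrm{Ch}}(P,P')=\sup_{\lambda\in[0,1]}\Big(-\log\int \big(\tfrac{dP}{d\nu}\big)^{\lambda}\big(\tfrac{dP'}{d\nu}\big)^{1-\lambda}d\nu\Big) \quad .
\]
The next step is to show, for each fixed $\lambda$, that the Hellinger-type affinity $A_\lambda(P,P')=\int p^\lambda p'^{1-\lambda}\,d\nu$ can only increase under a pushforward by $\psi$. This is the step I expect to be the main obstacle, since the paper has so far proved monotonicity only for TV. I would handle it by joint concavity. The function $(a,b)\mapsto a^\lambda b^{1-\lambda}$ is jointly concave and positively homogeneous, so conditioning on $\psi$ (or, in the discrete/finite case that covers histograms, summing over the fibres $\psi^{-1}(u)$ and applying the reverse Hölder/weighted AM–GM inequality $\sum_x p(x)^\lambda p'(x)^{1-\lambda}\le(\sum_x p(x))^\lambda(\sum_x p'(x))^{1-\lambda}$ on each fibre) gives $A_\lambda(\psi_\#P,\psi_\#P')\ge A_\lambda(P,P')$. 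Taking $-\log$ reverses this inequality, and taking the supremum over $\lambda$ preserves it, yielding the claimed bound.

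A cleaner alternative for the Chernoff step is operational. By \cref{thm:persistent_routing_reduces_to_chernoff_testing}, $D_{\mathrm{Ch}}$ is the exponent of the optimal error $1-p_s^\star(T)$ for the i.i.d. test. Any $T$-round test on utility outputs $(\psi(X_1),\dots,\psi(X_T))$ is itself a test on $(X_1,\dots,X_T)$, because $\psi^{\otimes T}$ maps $P_i(\dots,\phi)^{\otimes T}$ to $P_i(\dots,u)^{\otimes T}$. So the optimal error under $\phi$ is at most that under $u$ for every $T$, and comparing the limits gives the inequality. This route avoids measure-theoretic care but needs the limits to exist, including the degenerate cases where $D_{\mathrm{Ch}}$ is $0$ or $\infty$; I would note these cases explicitly, or fall back on the affinity argument. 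The final interpretive sentence of the theorem then follows directly from the two inequalities.
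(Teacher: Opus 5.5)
Your proposal is correct and follows the paper's proof: both use the decoder $\psi$ to write $P_i(\dots,u)=\psi_\# P_i(\dots,\phi)$ and then appeal to data processing, with the TV inequality obtained from \cref{prop:post_processing_cannot_increase_information} applied to $\psi$. For the Chernoff inequality the paper only says that ``a similar argument'' works, and its event-preimage argument for TV does not carry over verbatim; your fibrewise bound on the affinities $A_\lambda$ (which is the ordinary H\"older inequality, not a reverse one), or alternatively your operational comparison of $T$-round tests, supplies the justification the paper leaves implicit.
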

\begin{proof}
    By the exact utility preservation of $\phi$ (\cref{def:exact_utility_preservation}), there exists a decoder $\psi:\mathcal{X}\to\mathcal{U}$ such that $u=\psi\circ\phi$. Hence, for any route $D_i$, we can write
	\[
	P_i(\dots,u) = u_\# Q_i = (\psi\circ\phi)_\# Q_i = \psi_\#(\phi_\# Q_i) = \psi_\# P_i(\dots,\phi) \quad ,
	\]
	which is to say that the utility law is itself a deterministic post-processing of the induced transcript law. Therefore, by the same argument as in \cref{cor:post_processing_doesnt_increase_distinguishability}, deterministic post-processing cannot increase distinguishability, and we are left with,
	\[
	\mathrm{TV}\Big( P_1(\dots,\phi),P_2(\dots,\phi) \Big) \geq \mathrm{TV}\Big( \psi_\# P_1(\dots,\phi),\psi_\# P_2(\dots,\phi) \Big) = \mathrm{TV}\Big( P_1(\dots,u),P_2(\dots,u) \Big) \quad .
	\]
	The second inequality can be obtained by a similar argument with Chernoff information replacing total variation, following the same idea from \cref{cor:post_processing_doesnt_increase_distinguishability}.
\end{proof}

The above \cref{thm:utility_preserving_no_free_lunch} tells us that there is no choice of utility-preserving post-processing map that can reduce the total variation distance between the induced distributions of any two routes beyond that of the utility laws without degradation. Equivalently, by \cref{thm:backend_identifiability_reduces_to_hypothesis_testing}, we can say that distinguishability cannot be reduced without also losing utility via post-processing.

It is conceivable that the user will agree to a small loss of utility if it makes the provider happier about his ability to anonymise routes. It is nice to make people happy. To that end, consider approximate versions to the above ideas for utility preservation. \cref{def:approximate_preservation} is perhaps the simplest and most straightforward formulation.

\begin{definition}[$\alpha$-Approximate Utility Preservation] \label{def:approximate_preservation}
	Let $\phi:\mathcal{Y}\to\mathcal{X}$ be any deterministic post-processing map, and let $u:\mathcal{Y}\to\mathcal{U}$ be any utility map. For $\alpha>0$, we say that $\phi$ \textit{preserves the utility $u$ to within accuracy $\alpha$} if there exists a (measurable) decoder $\psi:\mathcal{X}\to\mathcal{U}$ such that, for any route $D_i$, we have $\mathrm{TV}( P_i(\dots,\psi\circ\phi),P_i(\dots,u)) \leq \alpha$.
\end{definition}

\begin{theorem}[Approximate Utility-Preserving No-Free-Lunch] \label{thm:approx_utility_preserving_no_free_lunch}
	Let $u:\mathcal{Y}\to\mathcal{U}$ and $\phi:\mathcal{Y}\to\mathcal{X}$ be deterministic utility and post-processing maps respectively. If $\phi$ preserves the utility $u$ to within accuracy $\alpha>0$, then for every pair of routes $\{D_1,D_2\}$, we have,
	\[
	\mathrm{TV}\Big( P_1(\dots,\phi),P_2(\dots,\phi) \Big) \geq \mathrm{TV}\Big( P_1(\dots,u),P_2(\dots,u) \Big) - 2\alpha \quad .
	\]
\end{theorem}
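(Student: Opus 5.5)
The plan is to reduce to the exact no-free-lunch argument of \cref{thm:utility_preserving_no_free_lunch}, combined with the triangle inequality for total variation. Let $\psi:\mathcal{X}\to\mathcal{U}$ be the decoder guaranteed by \cref{def:approximate_preservation}. Write $\tilde P_i := P_i(\dots,\psi\circ\phi)$ for the decoded laws and $U_i := P_i(\dots,u)$ for the utility laws, $i=1,2$. By hypothesis, $\mathrm{TV}(\tilde P_i,U_i)\leq\alpha$ for each route.

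The first step mirrors the proof of \cref{thm:utility_preserving_no_free_lunch}. Since $(\psi\circ\phi)_\# Q_i = \psi_\#(\phi_\# Q_i) = \psi_\# P_i(\dots,\phi)$, the decoded law $\tilde P_i$ is a deterministic post-processing of the induced transcript law. Then \cref{prop:post_processing_cannot_increase_information}, applied with $\psi$ as the map and with $P_i(\dots,\phi)$ in the role of the raw laws (equivalently, \cref{cor:post_processing_doesnt_increase_distinguishability}), gives
\[
\mathrm{TV}(\tilde P_1,\tilde P_2)\leq \mathrm{TV}\big(P_1(\dots,\phi),P_2(\dots,\phi)\big)\quad .
\]
Measurability of $\psi$ is part of \cref{def:approximate_preservation}, so the proposition applies directly.

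The second step uses that $\mathrm{TV}$ is a metric on $\Delta(\mathcal{U})$. Apply the triangle inequality along the chain $U_1\to\tilde P_1\to\tilde P_2\to U_2$:
\[
\mathrm{TV}(U_1,U_2)\leq \mathrm{TV}(U_1,\tilde P_1)+\mathrm{TV}(\tilde P_1,\tilde P_2)+\mathrm{TV}(\tilde P_2,U_2)\leq 2\alpha + \mathrm{TV}(\tilde P_1,\tilde P_2)\quad .
\]
Combining this with the first step and rearranging yields the claimed bound $\mathrm{TV}(P_1(\dots,\phi),P_2(\dots,\phi))\geq \mathrm{TV}(U_1,U_2)-2\alpha$.

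No step here is a real obstacle. The only points needing care are that all four laws live on the same measurable space $\mathcal{U}$, so the triangle inequality applies, and that the same decoder $\psi$ works for both routes, which \cref{def:approximate_preservation} guarantees by quantifying over all $D_i$ with a single $\psi$. If the decoder were allowed to depend on the route, the argument would fail, since $\tilde P_1$ and $\tilde P_2$ would no longer be pushforwards of the $\phi$-laws under a common map. I would add a remark noting this. I would also note that, unlike the exact case, no analogous Chernoff statement follows, because Chernoff information does not satisfy a triangle inequality.
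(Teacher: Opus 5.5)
Your proposal is correct and matches the paper's proof: both apply the triangle inequality through the decoded laws $P_i(\dots,\psi\circ\phi)$ to pick up the $2\alpha$, then use data processing, $\mathrm{TV}(\psi_\# P_1(\dots,\phi),\psi_\# P_2(\dots,\phi))\leq\mathrm{TV}(P_1(\dots,\phi),P_2(\dots,\phi))$. You are more explicit than the paper about justifying that last step and about needing a single route-independent decoder $\psi$.
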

\begin{proof}
	By the approximate utility preservation of $\phi$, there exists a decoder $\psi:\mathcal{X}\to\mathcal{U}$ such that, for any route $D_i$, we have $\mathrm{TV}( P_i(\dots,\psi\circ\phi),P_i(\dots,u)) \leq \alpha$. Hence, by a triangle inequality,
	\[
	\begin{split}
		\mathrm{TV}\Big( P_1(\dots,u), P_2(\dots,u) \Big)
		&\leq \mathrm{TV}\Big( P_1(\dots,\psi\circ\phi), P_2(\dots,\psi\circ\phi) \Big) \\
		&\phantom{\leq\ } + \mathrm{TV}\Big( P_1(\dots,\psi\circ\phi), P_1(\dots,u) \Big) \\
		&\phantom{\leq\ } + \mathrm{TV}\Big( P_2(\dots,\psi\circ\phi), P_2(\dots,u) \Big) \\
		&\leq \mathrm{TV}\Big( P_1(\dots,\psi\circ\phi), P_2(\dots,\psi\circ\phi) \Big) + 2\alpha \\ 
        &\leq \mathrm{TV}\Big( P_1(\dots,\phi),P_2(\dots,\phi) \Big) + 2\alpha \quad ,
	\end{split}
	\]
	and a simple rearrangement completes the proof.
\end{proof}

In the exact setting, \cref{thm:utility_preserving_no_free_lunch} makes it clear that the utility map itself is the most anonymising post-processing that the provider can apply---any further post-processing that preserves this utility exactly will necessarily increase distinguishability, as per \cref{cor:post_processing_doesnt_increase_distinguishability}. So, we might like to argue that the approximate setting in \cref{thm:approx_utility_preserving_no_free_lunch} is more interesting, and introduces the possibility for an accuracy parameter $\alpha$ of an acceptably small scale which dominates the term $\mathrm{TV}(P_1(\dots,u),P_2(\dots,u))$. Indeed, indefinite anonymity can be obtained with any $\alpha \geq \frac{1}{2}\mathrm{TV}(P_1(\dots,u),P_2(\dots,u))$.

Note that an approximately utility-preserving map defined as in \cref{def:approximate_preservation} could be poorly considered for particular utilities. Take, for example, a binary decision rule to be the utility, and suppose that the number of positive outcomes for the rule is exponentially small in the number of possible inputs. Then a constant post-processing which takes every input to a negative output has only an exponentially small loss of accuracy under the approximate preservation of \cref{def:approximate_preservation}. Clearly though, the meaningful utility---defining a small set of acceptable inputs---is destroyed here. To address this concern, it may be appropriate to define a slightly different notion of preservation for decision utilities. For example:

\begin{definition}[$\alpha$-Approximate Decision-Utility Preservation]
	Let $v=\psi\circ\phi$ be the decision decoded from a transcript $X=\phi(\cdots)$. Then we can call $v$ close to a utility $u$ if,
	\[
	Q_i(v=1\ |\ u=1) \geq 1-\alpha \quad ,
	\]
	for all $D_i$. Hence, among positive decision instances, the post-processed decision retains the decision with arbitrarily high probability.
\end{definition}

With such a definition for preservation, it may be possible to show that rare decision utilities are intrinsically anonymous. That is, if the probability $Q_i(u=1)$ for a positive outcome is exponentially small, then any approximately-preserving post-processing map can improve the anonymity exponentially in $m$. Extended over multiple rounds, this then looks like a polynomial increase in in the number of rounds we can play until anonymity is lost. For now, this is only conjecture; we encourage interesting future work in this direction.

\subsection{Noise-Channel Bounds and Relation to Noise Characterisation} \label{appendix:noise_channel_bounds}

Now we explore the relationship between successful identification in the uniform-binary setting and the underlying effective channel noise. Our basic operational quantity for this is the single-shot, raw distinguishing bias, as give in the following \cref{prop:distinguishing_bias_is_an_average_case_notion}.

\begin{prop}[Distinguishing Bias is an Average-Case Notion] \label{prop:distinguishing_bias_is_an_average_case_notion}
    The single-shot ($m=1$) distinguishing bias on raw observations ($\phi=\mathrm{id}$) over a circuit ensemble $\mu\in\Delta(\mathcal{C})$ is given by,
	\[
	\beta_{D_1,D_2}(\mu,1,\mathrm{id}) = \mathbb{E}_{C\sim\mu}\Big[\mathrm{TV}(p_{1,C}, p_{2,C})\Big] \quad .
	\]
\end{prop}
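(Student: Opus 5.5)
By \cref{thm:backend_identifiability_reduces_to_hypothesis_testing}, it suffices to compute $\mathrm{TV}(Q_1,Q_2)$ for the raw laws in the case $m=1$, $\phi=\mathrm{id}$. The first step is to write these laws explicitly. With one shot, the empirical histogram $\hat p_{i,C}$ is the point mass $\delta_{Y}$ at the single outcome $Y\sim p_{i,C}$, so $\Delta_1(\Omega_n)$ can be identified with $\Omega_n$. The raw observation is therefore the pair $(C,Y)$. Its joint law under backend $D_i$ is $Q_i(dC,y)=\mu(dC)\,p_{i,C}(y)$, a mixture over the \emph{common} circuit marginal $\mu$ with backend-dependent conditionals $p_{i,C}$. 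The key point is that the user's workload ensemble does not depend on the route, so the two laws share the same $C$-marginal.

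The second step is to compute the total variation of two joint laws that share a marginal. I will use the density form $\mathrm{TV}(Q_1,Q_2)=\frac12\int |dQ_1-dQ_2|$ with respect to the dominating measure $\mu\otimes(\text{counting on }\Omega_n)$. The Radon--Nikodym derivatives are $p_{1,C}(y)$ and $p_{2,C}(y)$, which gives
\[
\mathrm{TV}(Q_1,Q_2)=\frac12\int_{\mathcal C}\sum_{y\in\Omega_n}\big|p_{1,C}(y)-p_{2,C}(y)\big|\,\mu(dC)=\mathbb{E}_{C\sim\mu}\big[\mathrm{TV}(p_{1,C},p_{2,C})\big].
\]
Alternatively, I can argue through the supremum-over-events form by taking the event $A=\{(C,y): p_{1,C}(y)>p_{2,C}(y)\}$. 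Fubini shows this event attains the expectation, and for any other event the same bound holds section by section, so it is also an upper bound.

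The only delicate step is measurability. For the density argument, or equivalently the optimal event $A$, to make sense, the map $C\mapsto p_{i,C}(y)$ must be measurable. This would follow from assuming the family $\{\mathcal N_{i,C}\}$ depends measurably on $C$, which is implicit in \cref{def:distributions}, since that definition already integrates $\Pr[\phi(C,\hat p_{i,C})\in A]$ against $\mu$. I will state this assumption explicitly. After that, the proposition follows from the definition of the bias as $\beta=\mathrm{TV}(Q_1,Q_2)$ together with the identity above. For countable or finite circuit families, the sum form makes the result immediate.
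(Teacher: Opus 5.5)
Your proposal is correct and follows essentially the same route as the paper. You identify the single-shot raw law as $Q_i(dC,y)=\mu(dC)\,p_{i,C}(y)$, invoke \cref{thm:backend_identifiability_reduces_to_hypothesis_testing} with $\phi=\mathrm{id}$ (so $P_i=Q_i$) to get $\beta_{D_1,D_2}=\mathrm{TV}(Q_1,Q_2)$, and expand the total variation over $\mathcal{C}\times\Omega_n$; your explicit dominating measure, the Fubini-based optimal-event check, and the measurability caveat are extra rigour the paper leaves implicit rather than a different argument.
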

\begin{proof}
    With only a single shot, the raw observation is a pair $(C,Y)$, with a circuit $C\sim\mu$ and conditionally-sampled outcome $Y\sim p_{i,C}$. Hence,
	\[
	Q_i(dC,y)=\mu(dC)\ p_{i,C}(y) \quad .
	\]
	Then, since $\phi=\mathrm{id}$, we have by \cref{thm:backend_identifiability_reduces_to_hypothesis_testing} and \cref{cor:post_processing_doesnt_increase_distinguishability},
	\[
	\beta_{D_1,D_2}(\mu,1,\mathrm{id}) = \mathrm{TV}(Q_1,Q_2) \quad ,
	\]
	and expanding the total variation over $\mathcal{Y}=\mathcal{C}\times \Omega_n$ completes the proof;
	\[
    \mathrm{TV}(Q_1,Q_2)
    = \frac{1}{2}\int_C\sum_{y\in\Omega_n}\Big|p_{1,C}-p_{2,C}\Big|\ \mu(dC) 
    = \mathbb{E}_{C\sim\mu}\Big[\mathrm{TV}(p_{1,C},p_{2,C})\Big] \quad .
	\]
\end{proof}

This motivates a very practical distance measure. We can see that what matters is \textit{not} any worst-case separation of the full channels over all possible inputs and ancillas, but actually the average separation on the states actually induced by the probing protocol. Hence, we introduce a workload-probed channel (pseudo-)distance accordingly, to give us a very natural distance measure for our setting, tailored to the workload ensemble $\mu$ used in any experiment.

\begin{definition}[Workload-Probed Channel Pseudo-Distance] \label{def:mu_probed_channel_distnace}
	Let $\mu\in\Delta(\mathcal{C})$ be any ensemble of circuits. Then define the \textit{workload-probed channel pseudo-distance} by,
	\[
	\delta_\mu(\mathcal{N}_i, \mathcal{N}_j) := \mathbb{E}_{C\sim\mu}\ \frac{1}{2}\Big\|\Big( \mathcal{N}_{i,C}-\mathcal{N}_{j,C}\Big)(\rho_C) \Big\|_1 \quad ,
	\]
	where each $\mathcal{N}_i=\{\mathcal{N}_{i,C}\}_{C\in\mathcal{C}}$ is a family of circuit-dependent noise channels. 
\end{definition}

This is a \textit{pseudo}-distance since it does not generally satisfy an identity of indiscernibles, though note that is is non-negative, symmetric, and satisfies a triangle inequality. Experimentally, it can also be seen to be a more useful measure; a poorly designed ensemble (from an adversarial user's perspective) may have a very small workload-probed channel pseudo-distance and hence fail to distinguish backends accurately, but there may exist other workloads that have a much easier time of it. Distinguishability of backends is thus relative to the workload, in the most practical sense.

\begin{lemma} \label{lemma:tv_bounded_by_trace_of_channels}
	For any two induced noisy distributions $p_{i,C}=\mathcal{M}(\mathcal{N}_{i,C}(\rho_C))$ (resp. $p_{j,C}$),
	\[
	\mathrm{TV}(p_{i,C},p_{j,C}) \leq \frac{1}{2}\Big\| \mathcal{N}_{1,C}(\rho_C)-\mathcal{N}_{2,C}(\rho_C) \Big\| \quad .
	\]
\end{lemma}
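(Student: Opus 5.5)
The plan is to view computational-basis measurement $\mathcal{M}$ as a quantum-to-classical channel and apply contractivity of the trace norm under CPTP maps. The norm on the right-hand side of the statement is the trace norm $\|\cdot\|_1$. The statement writes the indices as $1,2$ on the right; the argument is identical for any pair $i,j$.

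Set $\sigma_i:=\mathcal{N}_{i,C}(\rho_C)$ and $\sigma_j:=\mathcal{N}_{j,C}(\rho_C)$. Then for each $x\in\Omega_n$ we have $p_{i,C}(x)=\langle x|\sigma_i|x\rangle$, and likewise for $p_{j,C}$. The total variation distance can therefore be written as
\[
\mathrm{TV}(p_{i,C},p_{j,C})=\frac{1}{2}\sum_{x\in\Omega_n}\Big|\langle x|(\sigma_i-\sigma_j)|x\rangle\Big| .
\]

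The key step is to bound this diagonal sum by the trace norm of $\Delta:=\sigma_i-\sigma_j$. I will use a variational argument. Let $s_x:=\mathrm{sign}\,\langle x|\Delta|x\rangle$ and define the diagonal operator $S:=\sum_x s_x|x\rangle\langle x|$, which satisfies $\|S\|_\infty\le 1$. Then
\[
\sum_x\big|\langle x|\Delta|x\rangle\big|=\mathrm{tr}(S\Delta)\le\|S\|_\infty\|\Delta\|_1\le\|\Delta\|_1 ,
\]
where the middle inequality is Hölder's inequality for Schatten norms. Dividing by two gives the claimed bound. An equivalent route is to note that $\mathrm{diag}(\Delta)$ is the image of $\Delta$ under the completely dephasing channel. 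Trace distance is monotone under CPTP maps, and for diagonal operators the trace norm equals the $\ell_1$ norm of the diagonal, so the same bound follows.

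I do not expect a genuine obstacle. The only point needing care is justifying the Hölder step $|\mathrm{tr}(S\Delta)|\le\|S\|_\infty\|\Delta\|_1$. To make it self-contained, I can use the Hermiticity of $\Delta$. Take its Jordan decomposition $\Delta=\Delta_+-\Delta_-$, so that $\|\Delta\|_1=\mathrm{tr}\Delta_++\mathrm{tr}\Delta_-$. Then $|\mathrm{tr}(S\Delta_\pm)|\le\mathrm{tr}\Delta_\pm$, because $-I\le S\le I$ and $\Delta_\pm\ge 0$.
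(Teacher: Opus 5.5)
Your proposal is correct and takes the same route as the paper, which simply invokes contractivity of the trace distance under the measurement map $\mathcal{M}$; your sign-operator and H\"older argument (or equivalently your dephasing-channel remark) just spells out that one-line monotonicity step. You also read the unlabelled norm as the trace norm, and the indices $1,2$ on the right-hand side as a typo for $i,j$; both readings match how the paper uses the lemma in its proof of the workload-probed bound.
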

\begin{proof}
	Immediately follows by the contractivity of trace distance under measurement $\mathcal{M}$.
\end{proof}

\begin{prop}[Workload-Probed Distances are Tighter than Worst-Case Distances] \label{prop:workload_probed_distance_tighter_than_diamon_norm}
    For any fixed ensemble $\mu\in\Delta(\mathcal{C})$ ,
	\[
	\beta_{D_1,D_2}(\mu,1,\phi) \leq \delta_\mu(\mathcal{N}_1,\mathcal{N}_2) \leq \frac{1}{2}\sup_{C\in\mathcal{C}}\Big\| \mathcal{N}_{1,C}-\mathcal{N}_{2,C} \Big\|_\diamond \quad .
	\]
\end{prop}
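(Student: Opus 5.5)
The plan is to prove the two inequalities separately. The left one goes through the data-processing results already established; the right one is a pointwise comparison between trace distance on a specific input and the diamond norm.

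\textbf{Left inequality.} Since $\phi$ is deterministic, \cref{cor:post_processing_doesnt_increase_distinguishability} (with $\phi=\phi\circ\mathrm{id}$) gives $\beta_{D_1,D_2}(\mu,1,\phi)\leq\beta_{D_1,D_2}(\mu,1,\mathrm{id})$. \Cref{prop:distinguishing_bias_is_an_average_case_notion} then identifies the right-hand side as $\mathbb{E}_{C\sim\mu}\,\mathrm{TV}(p_{1,C},p_{2,C})$. For each fixed $C$, \cref{lemma:tv_bounded_by_trace_of_channels} bounds the integrand by $\frac12\|(\mathcal{N}_{1,C}-\mathcal{N}_{2,C})(\rho_C)\|_1$. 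That lemma rests on contractivity of the trace norm under the measurement channel $\mathcal{M}$, which is CPTP once its output is viewed as a diagonal state. Because the inequality holds pointwise in $C$, integrating against $\mu$ preserves it and yields exactly $\delta_\mu(\mathcal{N}_1,\mathcal{N}_2)$ by \cref{def:mu_probed_channel_distnace}.

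\textbf{Right inequality.} For each $C$, the ideal state $\rho_C$ is a valid input, and tensoring with a trivial (one-dimensional) ancilla shows $\|(\mathcal{N}_{1,C}-\mathcal{N}_{2,C})(\rho_C)\|_1\leq\|\mathcal{N}_{1,C}-\mathcal{N}_{2,C}\|_\diamond$. Here I use the standard fact that the diamond norm dominates the induced trace norm, because it is a supremum over a larger set of inputs. Next I bound the per-$C$ quantity by $\sup_{C'\in\mathcal{C}}\|\mathcal{N}_{1,C'}-\mathcal{N}_{2,C'}\|_\diamond$. The expectation of a quantity uniformly bounded by a constant is at most that constant, so the result follows after multiplying by $\frac12$.

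The only delicate point is measurability and integrability of $C\mapsto\frac12\|(\mathcal{N}_{1,C}-\mathcal{N}_{2,C})(\rho_C)\|_1$. This function is bounded in $[0,1]$, so it is integrable whenever it is measurable, which I will assume is implicit in the framework's use of $\mu(dC)$. No other step poses a real obstacle, since each is either a previously proven result or a one-line norm comparison.
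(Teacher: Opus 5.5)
Your proof is correct and follows the paper's chain of inequalities: the average-case identity for the bias, measurement contractivity pointwise in $C$, the trace norm on $\rho_C$ bounded by the diamond norm, and the expectation bounded by the supremum. Your one addition, invoking \cref{cor:post_processing_doesnt_increase_distinguishability} to pass from $\beta_{D_1,D_2}(\mu,1,\phi)$ to $\beta_{D_1,D_2}(\mu,1,\mathrm{id})$, is genuinely needed: the paper's first step is an equality taken from \cref{prop:distinguishing_bias_is_an_average_case_notion}, which holds only for $\phi=\mathrm{id}$, and for general $\phi$ the correct step is the inequality you supply.
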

\begin{proof}
    These bounds can easily be seen via elementary facts and above results;
	\begin{align}
		\beta_{D_1,D_2} &\overset{\ref{prop:distinguishing_bias_is_an_average_case_notion}}{=} \mathbb{E}_{C\sim\mu}\Big[ \mathrm{TV}(p_{1,C},p_{2,C}) \Big] \phantom{\frac{1}{2}} \nonumber \\
		&\leq \mathbb{E}_{C\sim\mu}\ \frac{1}{2}\Big\| \mathcal{N}_{1,C}(\rho_C)-\mathcal{N}_{2,C}(\rho_C) \Big\|_1 \label{eq:def_of_tv} \\
		&\overset{\ref{def:mu_probed_channel_distnace}}{=} \delta_\mu(\mathcal{N}_i, \mathcal{N}_j) \phantom{\frac{1}{2}} \nonumber \\
		&\leq \mathbb{E}_{C\sim\mu}\ \frac{1}{2}\Big\| \mathcal{N}_{1,C}-\mathcal{N}_{2,C} \Big\|_\diamond \label{eq:diamond_is_worse_than_tv} \\
		&\leq \frac{1}{2}\ \sup_{C\in\mathcal{C}}\ \Big\| \mathcal{N}_{1,C}-\mathcal{N}_{2,C} \Big\|_\diamond \quad . \label{eq:sup_is_worse_than_expect}
	\end{align}
	In the above, Eq. (\ref{eq:def_of_tv}) follows by \cref{lemma:tv_bounded_by_trace_of_channels}; then Eq. (\ref{eq:diamond_is_worse_than_tv}) follows by definition of the diamond norm as a supremum over input states and thus a bound on the trace distance of any $\rho_C$; and finally Eq. (\ref{eq:sup_is_worse_than_expect}) is a simple use of the fact that the expectation over a set is trivially bounded by the supremum over that set.
\end{proof}

Establishing the above relationship in \cref{prop:workload_probed_distance_tighter_than_diamon_norm} allows us to immediately suggest some practically useful sufficient conditions to achieve a desired degree of routing anonymity.

\begin{corollary}
	Under any deterministic post-processing map, $\mathrm{Adv}\leq m \cdot\max_{i\neq j}\delta_\mu(\mathcal{N}_i,\mathcal{N}_j)$, therefore a sufficient condition for $\varepsilon$-anonymity with respect to any map class is,
	\[
	\max_{i\neq j}\ \delta_\mu(\mathcal{N}_i,\mathcal{N}_j) \leq \frac{\varepsilon}{m} \quad .
	\]
\end{corollary}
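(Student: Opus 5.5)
The plan is to reduce everything to the single-shot raw bias, which is controlled by $\delta_\mu$, and then pay a factor $m$ for the shots. Fix a pair $i\neq j$ and a deterministic map $\phi$. By \cref{cor:post_processing_doesnt_increase_distinguishability}, viewing $\phi$ as a further post-processing of the identity, we have $\mathrm{TV}(P_i,P_j)\leq \mathrm{TV}(Q_i,Q_j)$. Here $Q_i=Q_i(\mu,m)$ is the raw law on $\mathcal{C}\times\Delta_m(\Omega_n)$. It therefore suffices to bound the raw $m$-shot total variation.

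For the shots, I would pass to an even finer observation than the histogram. The histogram $\hat p_{i,C}$ is a deterministic function of the ordered outcome sequence $(Y_1,\dots,Y_m)$, so applying \cref{prop:post_processing_cannot_increase_information} once more lets us replace $Q_i$ by the law $\tilde Q_i(dC,y_{1:m})=\mu(dC)\,p_{i,C}^{\otimes m}(y_{1:m})$. Conditioning on $C$ gives
\[
\mathrm{TV}(\tilde Q_i,\tilde Q_j)=\mathbb{E}_{C\sim\mu}\,\mathrm{TV}\big(p_{i,C}^{\otimes m},p_{j,C}^{\otimes m}\big),
\]
by the same computation as in \cref{prop:distinguishing_bias_is_an_average_case_notion}, since the circuit marginal is common to both laws. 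Subadditivity of total variation over product measures, proved by a hybrid/telescoping triangle inequality that swaps one factor at a time, then gives $\mathrm{TV}(p_{i,C}^{\otimes m},p_{j,C}^{\otimes m})\leq m\,\mathrm{TV}(p_{i,C},p_{j,C})$. Combining this with \cref{lemma:tv_bounded_by_trace_of_channels} and \cref{def:mu_probed_channel_distnace} yields $\mathrm{TV}(P_i,P_j)\leq m\,\delta_\mu(\mathcal{N}_i,\mathcal{N}_j)$.

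Finally, \cref{prop:pairwise_indistinguishability_is_sufficient_for_global_anonymity} gives $\mathrm{Adv}\leq\max_{i\neq j}\mathrm{TV}(P_i,P_j)\leq m\max_{i\neq j}\delta_\mu(\mathcal{N}_i,\mathcal{N}_j)$. This bound holds uniformly in $\phi$, so the supremum over any class $\Phi$ inherits it, and the stated condition makes the right-hand side at most $\varepsilon$, which gives $\varepsilon$-anonymity by \cref{def:anonymity}.

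I expect the main obstacle to be keeping the common circuit randomness outside the tensor power. The product is taken over shots for a fixed $C$, and it is not $Q_i^{\otimes m}$; if one tensored over circuits the $m$ factor would still appear, but the identification with $\delta_\mu$ would then need care. The workload depth $d$ is also implicitly fixed inside $\mu$, and I would state this explicitly.
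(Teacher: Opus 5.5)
Your proof is correct and is essentially the argument the paper intends. The paper states this corollary without proof, as an immediate consequence of \cref{prop:workload_probed_distance_tighter_than_diamon_norm} and \cref{prop:pairwise_indistinguishability_is_sufficient_for_global_anonymity}, and your chain (post-processing monotonicity, averaging over the common circuit marginal, measurement contractivity, then the pairwise bound on $\mathrm{Adv}$) is exactly that combination. The one ingredient you add is the step the paper asserts but never proves, since its proposition only covers $m=1$: you lift to ordered shot sequences and use the hybrid bound $\mathrm{TV}(p_{i,C}^{\otimes m},p_{j,C}^{\otimes m})\le m\,\mathrm{TV}(p_{i,C},p_{j,C})$ with $C$ held fixed, which is what supplies the factor $m$.
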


\begin{corollary}
	Suppose that there exist channels $\Lambda_i$ such that $\mathcal{N}_{i,C}=\Lambda_i$ for every backend $D_i$ and circuit $C$. Then, for any post-processing map, $\mathrm{Adv} \leq \frac{m}{2}\max_{i\neq j}\big\| \Lambda_i-\Lambda_j \big\|_\diamond $ hence a sufficient condition for $\varepsilon$-anonymity is,
	\[
	\max_{i\neq j}\big\|\Lambda_i-\Lambda_j\big\|_\diamond  \leq \frac{2\varepsilon}{m} \quad .
	\]
\end{corollary}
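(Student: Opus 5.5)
The plan is to chain three earlier results: the pairwise reduction of \cref{prop:pairwise_indistinguishability_is_sufficient_for_global_anonymity}, the data-processing inequality of \cref{prop:post_processing_cannot_increase_information}, and the channel bound of \cref{lemma:tv_bounded_by_trace_of_channels}. The one new ingredient is an $m$-shot tensorisation step.

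\emph{Step 1 (reduction to pairs and to raw laws).} By \cref{prop:pairwise_indistinguishability_is_sufficient_for_global_anonymity}, $\mathrm{Adv}\leq\max_{i\neq j}\mathrm{TV}(P_i,P_j)$ for any deterministic $\phi$. By \cref{prop:post_processing_cannot_increase_information}, each term satisfies $\mathrm{TV}(P_i,P_j)\leq\mathrm{TV}(Q_i,Q_j)$. It therefore suffices to bound the raw $m$-shot laws $Q_i(\mu,m)$ on $\mathcal{Y}=\mathcal{C}\times\Delta_m(\Omega_n)$, and this bound will then hold uniformly over every post-processing map.

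\emph{Step 2 (the $m$-shot step, which I expect to be the main obstacle).} \cref{prop:distinguishing_bias_is_an_average_case_notion} is only stated for $m=1$, so I extend it as follows.
\begin{itemize}
\item The raw observation $(C,\hat p_{i,C})$ is a deterministic function of $(C,Y_1,\dots,Y_m)$, whose law is $\mu(dC)\,p_{i,C}^{\otimes m}$. Applying \cref{prop:post_processing_cannot_increase_information} again (the histogram map is deterministic) gives
\[
\mathrm{TV}(Q_i,Q_j)\leq\mathrm{TV}\big(\mu\otimes p_{i,\cdot}^{\otimes m},\,\mu\otimes p_{j,\cdot}^{\otimes m}\big).
\]
\item Both joint laws share the same $C$-marginal $\mu$. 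The same computation as in the proof of \cref{prop:distinguishing_bias_is_an_average_case_notion} therefore shows that this TV equals $\mathbb{E}_{C\sim\mu}\,\mathrm{TV}(p_{i,C}^{\otimes m},p_{j,C}^{\otimes m})$.
\item Subadditivity of total variation over product measures then gives $\mathrm{TV}(p^{\otimes m},q^{\otimes m})\leq m\,\mathrm{TV}(p,q)$. I would prove this by a hybrid argument, swapping one factor at a time and applying the triangle inequality $m$ times, with each swap costing exactly $\mathrm{TV}(p,q)$.
\end{itemize}
Combining these yields $\mathrm{TV}(Q_i,Q_j)\leq m\,\mathbb{E}_{C\sim\mu}\mathrm{TV}(p_{i,C},p_{j,C})$. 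This is also exactly the missing justification for the preceding corollary.

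\emph{Step 3 (circuit-independent channels).} By \cref{lemma:tv_bounded_by_trace_of_channels}, $\mathrm{TV}(p_{i,C},p_{j,C})\leq\frac12\|\mathcal{N}_{i,C}(\rho_C)-\mathcal{N}_{j,C}(\rho_C)\|_1$. Under the hypothesis $\mathcal{N}_{i,C}=\Lambda_i$, this becomes $\frac12\|(\Lambda_i-\Lambda_j)(\rho_C)\|_1\leq\frac12\|\Lambda_i-\Lambda_j\|_\diamond$, since the diamond norm is a supremum over inputs (with trivial ancilla) of the output trace norm, exactly as in \cref{prop:workload_probed_distance_tighter_than_diamon_norm}. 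This bound no longer depends on $C$, so the expectation over $\mu$ drops out.

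Assembling Steps 1–3 gives $\mathrm{Adv}\leq\frac{m}{2}\max_{i\neq j}\|\Lambda_i-\Lambda_j\|_\diamond$ for every $\phi$. Imposing $\max_{i\neq j}\|\Lambda_i-\Lambda_j\|_\diamond\leq 2\varepsilon/m$ then forces $\sup_{\phi\in\Phi}\mathrm{Adv}\leq\varepsilon$, which is $\varepsilon$-anonymity by \cref{def:anonymity} for any class $\Phi$.
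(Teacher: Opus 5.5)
Your proof is correct and follows the route the paper intends: the paper gives no separate argument, treating the corollary as immediate from the pairwise reduction (\cref{prop:pairwise_indistinguishability_is_sufficient_for_global_anonymity}), data processing, and \cref{prop:workload_probed_distance_tighter_than_diamon_norm}, with the supremum over $C$ collapsing to $\frac12\|\Lambda_i-\Lambda_j\|_\diamond$ for circuit-independent channels. The one place you add something is Step 2. The formal \cref{prop:workload_probed_distance_tighter_than_diamon_norm} is stated only for $m=1$, and your tensorisation argument supplies the justification for the factor $m$ that the paper asserts only in its informal statement: you write the histogram as a deterministic function of i.i.d.\ shots, use the common $C$-marginal, and apply $\mathrm{TV}(p^{\otimes m},q^{\otimes m})\le m\,\mathrm{TV}(p,q)$ via hybrids.
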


\section{Intermediate-Depth Principle in the PTM Model} \label{appendix:depth_window_principle}
Let $\mathcal{P}_n$ denote the $n$-qubit Pauli basis. Every $n$-qubit density operator $\rho$ can be expanded like,
\[
\rho = \frac{1}{2^n}\left( I + \hspace{-7px}\sum_{P\in\mathcal{P}_n\backslash\{I\}}\hspace{-7px} \mathrm{Tr}(P\rho) P \right) \quad ,
\]
and we call the vector $\vec{r}(\rho):=(\mathrm{Tr}(P\rho))_{P\neq I}\in\mathbb{R}^{4^n-1}$ the \textit{traceless Pauli vector} of $\rho$, recording its non-identity Pauli expectation values. 

A quantum channel $\mathcal{N}$ can be represented in the same basis by its \textit{Pauli Transfer Matrix (PTM)}, which, for a trace-preserving channel, takes the block form,
\[
\begin{pmatrix}
	1 & 0 \\
	t_\mathcal{N} & N_\mathcal{N}
\end{pmatrix} \quad ,
\]
where $N_\mathcal{N}$ is the block acting on the traceless Pauli coordinates, and $t_\mathcal{N}$ is an affine, non-unital drift. Therefore, the traceless part transforms as $\vec{r}(\rho)\mapsto t_\mathcal{N}+N_\mathcal{N}\vec{r}(\rho)$, and if the channel is unital, then $t_\mathcal{N}=0$ and evolution cleans up to $\vec{r}(\rho)\mapsto N_\mathcal{N}\vec{r}(\rho)$.

PTMs have been widely used across the literature because they give a concrete linear representation of quantum processes in the Pauli basis. We see them in gate-set tomography and quantum process characterisation, where they provide a useful diagnostic representation of gate errors \cite{greenbaum2015introduction}; in learning problems, wherein it can often be easier to learn entries and expectation values associated with the PTM of unknown quantum processes \cite{caro2022learning}; and spectral quantum tomography, where eigenvalues of the PTM of a noisy gate can act as useful gate diagnostics \cite{helsen2019spectral}. These applications motivate the use of PTMs to yield a simple coordinate system wherein differences between noisy processes can be represented, learned, and compared.

\subsection{Output Distributions from Traceless Pauli Signals}

Let $\vec{r}_C:=\vec{r}(\rho_C)$ denote the traceless Pauli vector of the ideal output state $\rho_C$ for a given depth-$d$ circuit $C$. Under the execution of a backend $D_i$, the computational-basis output distribution $p_{i,C}$ can be modelled like,
\begin{equation} \label{eq:output_distribution_from_pauli_signal}
p_{i,C} = q + MN_i^d\vec{r}_C \in \Delta(\Omega_n) \quad ,
\end{equation} 
where $q\in\Delta(\Omega_n)$ is a common asymptotic fixed-point distribution, $M:\mathbb{R}^{4^n-1}\to\mathbb{R}^{2^n}$ is a fixed linear map transforming traceless Pauli coordinates into computational-basis probability deviations, and $N_i$ is the effective traceless PTM block describing the noisy action of $D_i$ per circuit layer. We can thus view $N_i^d\vec{r}_C$ as the residual circuit-dependent traceless Pauli signal after $d$ noisy layers.

There are several simplifying assumptions present in Eq. (\ref{eq:output_distribution_from_pauli_signal}): (1) noisy evolution is treated as Markovian and depth-homogeneous at the level of an effective layer noise map, applying the same traceless PTM block $N_i$ after each layer; (2) $N_i$ is assumed diagonal, or effectively diagonal after Pauli twirling; (3) affine non-unital effects are either negligible or common among compared backends and thus absorbed into $q$; (4) the measurement map is assumed to be backend-independent so that backend-specific readout effects are either negligible or common and thus absorbed into $q$; and (5) we assume either that the effective layer noise is unital, or that all compared backends share a common fixed point and we work in coordinates centred at this fixed point (in the latter case, $q$ denotes the corresponding computational-basis fixed-point distribution and $\vec{r}_C$ should be interpreted as the centred ideal signal).

Sufficiently mixing noisy dynamics are often modelled by traceless PTM blocks whose relevant eigenvalues lie strictly inside the unit circle; e.g. simple depolarising channels act as $\vec{r}(\rho)\to\lambda\vec{r}(\rho)$ with $\lambda\in(0,1)$ and thus have corresponding traceless PTM block $\lambda I$, and amplitude-damping noise has subunit eigenvalues like $\sqrt{1-p}$ and $1-p$ with damping rate $p>0$ \cite{helsen2019spectral}. \citet{raginsky2002strictly} define these kinds of \textit{strictly contractive} channel as those which reduce trace distance by a factor less than $1$, interpreting them as making quantum states less distinguishable.  

This motivates a key dominant-mixing assumption for the intermediate-depth principle: decompose the traceless PTM block $N_i$ of a backend $D_i$ like,
\begin{equation} \label{eq:decomposed_ptm_block}
N_i = \lambda I + E_i \quad ,
\end{equation}
where $\lambda I$ is a common depolarising-like contraction with $\lambda\in(0,1)$, and $E_i$ is a backend-specific perturbation with $\|E_i\|_{1\to 1} \leq \varepsilon$ such that $\lambda+\varepsilon<1$. That is, we will assume that the backend-specific perturbation $E_i$ is sufficiently small so as not to destroy the common contraction on the traceless PTM subspace. 

This is a standard form of modelling assumption for analysing noisy quantum dynamics. Strictly contractive quantum channels have long been used as mathematical models of non-ideal quantum evolution, especially for settings wherein finite experimental precision and repeated noise make perfectly reversible dynamics unrealistic \cite{raginsky2002strictly}. The broader theory of quantum contraction coefficients similarly studies how noisy channels contract operational distances and divergences between states \cite{hiai2016contraction}. Moreover, randomised benchmarking and randomised compiling provide both theoretical and experimental support for replacing complicated coherent hardware errors by effective stochastic Pauli or depolarising-like noise models: randomised compiling was introduced to tailor coherent errors into stochastic Pauli errors \cite{wallman2016noise}; and Pauli-frame randomisation has been experimentally demonstrated on superconducting hardware to suppress non-Markian and non-Pauli signatures while leaving the computation unchanged \cite{ware2021experimental}. Recent noisy-circuit analyses also use strict contractivity as a depth-dependent mixing hypothesis, with amplitude damping and suitable Pauli channels appearing as canonical examples of strictly contractive noise maps \cite{schumann2024emergence}.

The following \cref{lemma:dominant_ptm_contraction} formally presents the corresponding strict contractivity statement in our PTM model's $\ell_1$-coordinate norm as a consequence of our assumptions about the norm gap $\|E_i\|_{1\to 1}<1-\lambda$ in Eq. (\ref{eq:decomposed_ptm_block}). Note that this is a coordinate-level contraction assumption in the traceless Pauli representation. It should not be confused with trace-norm contractivity unless additional norm-comparison constants are introduced.

\begin{lemma} \label{lemma:dominant_ptm_contraction}
	Let $N_i=\lambda I + E_i$ be a linear map on a traceless PTM coordinate space equipped with an $\ell_1$-norm. If $\lambda\in(0,1)$ and $\|E_i\|_{1\to 1}\leq\varepsilon$ such that $\lambda+\varepsilon<1$, then $\|N_i\|_{1\to 1}\leq\lambda+\varepsilon<1$.
	
	Consequently, for any traceless PTM vector $\vec{r}$ and $d\geq 0$, $\|N_i^d \vec{r}\ \|_1 \leq (\lambda+\varepsilon)^d \|\vec{r}\ \|_1$. Hence, we have that $N_i^d \vec{r}\to 0$ in $\ell_1$-norm as $d\to\infty$.
\end{lemma}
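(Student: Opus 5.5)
The plan is to prove the three claims of \cref{lemma:dominant_ptm_contraction} in order: a one-step operator-norm bound, then a $d$-step bound, then the limit. Throughout, $\|\cdot\|_{1\to 1}$ is the operator norm induced by the $\ell_1$-norm on the traceless coordinate space $\mathbb{R}^{4^n-1}$, that is, $\|A\|_{1\to1}=\sup_{\vec r\neq 0}\|A\vec r\|_1/\|\vec r\|_1$.

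For the one-step bound, I will use only two properties of this induced norm: it is a norm on linear maps, so the triangle inequality holds, and it is absolutely homogeneous. Since the identity has $\|I\|_{1\to1}=1$ and $\lambda>0$,
\[
\|N_i\|_{1\to1}=\|\lambda I+E_i\|_{1\to1}\leq \lambda\|I\|_{1\to1}+\|E_i\|_{1\to1}\leq \lambda+\varepsilon ,
\]
and the hypothesis $\lambda+\varepsilon<1$ gives the strict inequality.

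For the $d$-step bound, I will use submultiplicativity of induced norms, $\|AB\|_{1\to1}\leq\|A\|_{1\to1}\|B\|_{1\to1}$. The cleanest route is induction on $d$ applied directly to vectors. The base case $d=0$ is trivial, since $N_i^0=I$. For the inductive step,
\[
\|N_i^{d+1}\vec r\|_1=\|N_i(N_i^d\vec r)\|_1\leq \|N_i\|_{1\to1}\|N_i^d\vec r\|_1\leq(\lambda+\varepsilon)^{d+1}\|\vec r\|_1 .
\]
Here the first inequality is just the definition of the induced norm.

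Convergence then follows because $0\leq\lambda+\varepsilon<1$, so $(\lambda+\varepsilon)^d\to0$. Since $\|\vec r\|_1$ is a fixed finite number, $\|N_i^d\vec r\|_1\to0$.

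I expect no genuine obstacle. The only point needing care is to state explicitly that $\|\cdot\|_{1\to1}$ is the induced operator norm, so that the triangle inequality and submultiplicativity are both available. I also want to make clear that the argument requires neither diagonality of $N_i$ nor commutation of $E_i$ with $I$; the bound rests purely on the norm properties.
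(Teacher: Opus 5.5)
Your proof is correct and follows essentially the same route as the paper: a triangle-inequality bound giving $\|N_i\|_{1\to1}\le\lambda+\varepsilon$, iteration via the induced-norm inequality to obtain the factor $(\lambda+\varepsilon)^d$, and then the geometric limit. The only cosmetic differences are that you apply the triangle inequality at the operator level (using $\|I\|_{1\to1}=1$) rather than on vectors followed by a supremum, and that you induct directly on $\|N_i^d\vec r\|_1$ instead of first bounding $\|N_i^d\|_{1\to1}$ by submultiplicativity.
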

\begin{proof}
	For any $\vec{r}$, write $N_i \vec{r}=\lambda\vec{r}+E_i\vec{r}$. By a triangle inequality and the definition of the induced operator norm, we then have,
	\[
	\|N_i \vec{r}\ \|_1 \leq \lambda \|\vec{r}\ \|_1 + \|E_i\vec{r}\ \|_1 \leq (\lambda + \|E_i\|_{1\to 1})\|\vec{r}\ \|_1 \leq (\lambda + \varepsilon)\|\vec{r}\ \|_1 \quad .
	\]
	Taking the supremum over $\vec{r}\neq 0$ yields the first result; $\|N_i\|_{1\to 1}\leq\lambda+\varepsilon<1$. Then, by submultiplicativity of induced norms, $\|N_i^d\|_{1\to 1} \leq \|N_i\|_{1\to 1}^d \leq (\lambda+\varepsilon)^d$. Hence, the second result;
	\[
	\|N_i^d\vec{r}\ \|_{1} \leq \|N_i^d\|_{1\to 1}\|\vec{r}\ \|_1 \leq \|N_i\|_{1\to 1}^d \|\vec{r}\ \|_1 \leq (\lambda+\varepsilon)^d \|\vec{r}\ \|_1 \quad .
	\]
	Finally, since $0<\lambda+\varepsilon<1$, the factor $(\lambda+\varepsilon)^d\to 0$ as $d\to \infty$, hence $\|N_i^d \vec{r}\ \|_1\to 0$.
\end{proof}

\subsection{Intermediate-Depth Identifiability Principle}

We now show that the contraction statement of \cref{lemma:dominant_ptm_contraction}, following our standard strict contractivity assumption, leads to a useful practical intuition about the relationship between depth and backend identifiability. The quantity of interest is the \textit{expected total variation distance} between output distributions produced by two backends over an ensemble $\mu_d$ of depth-$d$ circuits, as the average single-shot distinguishing behaviour constrained to a particular depth. 

The following theorem formalises the idea that, under the dominant-mixing PTM model, the backend-specific contribution to distinguishing bias is necessarily suppressed at large depths, leading to a degradation of backend identifiability.

\begin{theorem}[Identifiability Degrades at Large Depths] \label{thm:identifiability_degrades_at_large_depths}
	Fix any two backends $D_i$ and $D_j$, and let $\mu_d$ be an ensemble of depth-$d$ circuits. Suppose that, for every $C\sim\mu_d$, the computational-basis output distributions for $D_i$ (similarly for $D_j$) satisfy,
	\[
	p_{i,C} = q + MN_i^d\vec{r}_C \quad ,
	\]
	where $q$ is common to both backends, $M$ is a fixed linear map, and we assume that,
	\[
	N_i = \lambda I + E_i\ ,
	\qquad
	\text{with}\quad
	\|E_i\|_{1\to 1}\leq\varepsilon
	\quad\text{s.t.}\quad
	\lambda+\varepsilon<1 \quad ,
	\]
	and that the ideal traceless Pauli vectors $\vec{r}_C$ are bounded on average; i.e. $\mathbb E_{C\sim\mu_d}\|\vec{r}_C\|_1\leq R$.
	
	Denoting by $\beta_{D_i,D_j}(d):=\mathbb E_{C\sim\mu_d}\mathrm{TV}(p_{i,C},p_{j,C})$ the expected distinguishing bias between $D_i$ and $D_j$ over depth-$d$ circuit ensemble $\mu_d$, we have, for every $d\geq 1$, 
	\[
	\beta_{D_i,D_j}(d) \leq \frac{1}{2} \|M\|_{1\to1} R\, d(\lambda+\varepsilon)^{d-1} \|N_i-N_j\|_{1\to1} = \mathcal{O}(d(\lambda+\varepsilon)^{d-1}) \quad .
	\]
	In particular, $\beta_{D_i,D_j}(d)\to0$ as $d\to\infty$.
\end{theorem}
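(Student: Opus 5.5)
The plan is to reduce the bound to a telescoping estimate on the difference of matrix powers $N_i^d - N_j^d$, using the uniform contraction supplied by \cref{lemma:dominant_ptm_contraction}. Since $q$ is common to both backends, for each fixed circuit $C$ we have $p_{i,C}-p_{j,C} = M(N_i^d - N_j^d)\vec{r}_C$. Hence
\[
\mathrm{TV}(p_{i,C},p_{j,C}) = \tfrac12\|M(N_i^d-N_j^d)\vec{r}_C\|_1 \leq \tfrac12\|M\|_{1\to1}\,\|N_i^d-N_j^d\|_{1\to1}\,\|\vec{r}_C\|_1 .
\]
Here I treat the $\ell_1$ norm on $\mathbb{R}^{2^n}$ as the one defining total variation, and the $\ell_1$ norm on traceless Pauli coordinates as in the lemma.

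The key algebraic step is the standard telescoping identity
\[
N_i^d - N_j^d = \sum_{k=0}^{d-1} N_i^{k}\,(N_i - N_j)\,N_j^{d-1-k}.
\]
It can be checked by expanding, since consecutive terms cancel. Taking induced $\ell_1\to\ell_1$ norms, I will use submultiplicativity together with \cref{lemma:dominant_ptm_contraction}, which gives $\|N_i\|_{1\to1},\|N_j\|_{1\to1}\le\lambda+\varepsilon$. Each summand is therefore bounded by $(\lambda+\varepsilon)^{k}(\lambda+\varepsilon)^{d-1-k}\|N_i-N_j\|_{1\to1} = (\lambda+\varepsilon)^{d-1}\|N_i-N_j\|_{1\to1}$. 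Summing the $d$ terms yields
\[
\|N_i^d-N_j^d\|_{1\to1}\le d(\lambda+\varepsilon)^{d-1}\|N_i-N_j\|_{1\to1}.
\]

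Next, I take the expectation over $C\sim\mu_d$. The only $C$-dependence is in $\|\vec{r}_C\|_1$, whose mean is at most $R$ by hypothesis. This gives exactly the stated inequality. For the big-$\mathcal{O}$ form, I note that $\|N_i-N_j\|_{1\to1}=\|E_i-E_j\|_{1\to1}\le 2\varepsilon$ is a constant independent of $d$. Finally, $d(\lambda+\varepsilon)^{d-1}\to0$ because $\lambda+\varepsilon<1$, since exponential decay dominates linear growth (for instance by the ratio test). This gives $\beta_{D_i,D_j}(d)\to0$.

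None of the steps is deep; the main point needing care is the telescoping identity and making sure the powers are split so that each term carries exactly $d-1$ contraction factors. A minor secondary subtlety is that $R$ is only an average bound and may depend on $d$ through $\mu_d$. The statement treats it as a fixed constant, so I would note that uniformity of $R$ in $d$ is needed for the asymptotic conclusion. I would also note that the identification of $\beta_{D_i,D_j}(d)$ with expected TV is taken as a definition here, consistent with \cref{prop:distinguishing_bias_is_an_average_case_notion}.
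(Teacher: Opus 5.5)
Your proposal is correct and follows essentially the same route as the paper: write $p_{i,C}-p_{j,C}=M(N_i^d-N_j^d)\vec{r}_C$, apply the telescoping identity with \cref{lemma:dominant_ptm_contraction} so each of the $d$ summands carries $(\lambda+\varepsilon)^{d-1}$, then average over $C\sim\mu_d$ using the bound $R$. Your added remarks, that $\|N_i-N_j\|_{1\to1}\le 2\varepsilon$ and that $R$ must be uniform in $d$ for the limit, are sound refinements that the paper does not spell out.
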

\begin{proof}
	Fix a circuit $C\sim\mu_d$. Under our PTM model, after $d$ noisy layers,
	\[
	p_{i,C}-p_{j,C} = q+M N_i^d\vec{r}_C - q-M N_j^d\vec{r}_C, = M(N_i^d-N_j^d)\vec{r}_C \quad ,
	\]
	hence the distinguishing bias for $C$ can be written,
	\[
	\mathrm{TV}(p_{i,C},p_{j,C}) = \frac{1}{2} \|M(N_i^d-N_j^d)\vec{r}_C\|_1 \leq \frac{1}{2}\|M\|_{1\to 1} \|N_i^d-N_j^d\|_{1\to1} \|\vec{r}_C\|_1 \quad .
	\]
	
	By the telescoping identity,
	\[
	N_i^d-N_j^d = \sum_{\ell=0}^{d-1} N_i^\ell (N_i-N_j) N_j^{d-1-\ell} \quad ,
	\]
	thus, by then taking induced norms and using submultiplicity, we get,
	\[
	\|N_i^d-N_j^d\|_{1\to1} \leq \sum_{\ell=0}^{d-1} \|N_i^\ell\|_{1\to1} \|N_i-N_j\|_{1\to1} \|N_j^{d-1-\ell}\|_{1\to1} \quad .
	\]
	
	Now, we can utilise our assumptions; by \cref{lemma:dominant_ptm_contraction}, $\|N_i^k\|_{1\to1}\leq(\lambda+\varepsilon)^k$ for any $k\geq 0$ (similarly for $N_j$). Hence, each summand in the above satisfies,
	\[
	\begin{split}
	\|N_i^\ell\|_{1\to1} \|N_i-N_j\|_{1\to1} \|N_j^{d-1-\ell}\|_{1\to1} 
	&\leq (\lambda+\varepsilon)^\ell \|N_i-N_j\|_{1\to1} (\lambda+\varepsilon)^{d-1-\ell} \\
	&= (\lambda+\varepsilon)^{d-1} \|N_i-N_j\|_{1\to1} \quad ,
	\end{split}
	\]
	yielding,
	\[
	\|N_i^d-N_j^d\|_{1\to1} \leq \sum_{\ell=0}^{d-1} (\lambda+\varepsilon)^{d-1} \|N_i-N_j\|_{1\to1} \leq d (\lambda+\varepsilon)^{d-1} \|N_i-N_j\|_{1\to1} \quad .
	\]
	
	Substituting this into the distinguishing bias at depth $d$ and taking its expectation over $C\sim\mu_d$, 
	\[
	\beta_{D_i,D_j}(d) := \mathbb E_{C\sim\mu_d} \mathrm{TV}(p_{i,C},p_{j,C}) \leq \frac{1}{2}\|M\|_{1\to1} d(\lambda+\varepsilon)^{d-1} \|N_i-N_j\|_{1\to1} \mathbb E_{C\sim\mu_d}\|\vec r_C\|_1 \quad ,
	\]
	then using the assumed bound $\mathbb{E}_{C\sim\mu_d}\|\vec r_C\|_1\leq R$ gives the targeted expression. And finally, since $0<\lambda+\varepsilon<1$, the term $d(\lambda+\varepsilon)^{d-1}\to0$ as $d\to\infty$, hence $\beta_{D_i,D_j}(d)\to0$. 
\end{proof}

This establishes that distinguishability is not a high-depth phenomenon. Next, we formalise the shallow-depth side of the picture by showing that, over a perturbative range, the leading distinguishable signal builds up with depth and thus is not a low-depth phenomenon either (relatively speaking). We do this in the following theorem by presenting a lower bound for the same expected distinguishability object that holds over a perturbative range of shallow depths. Some work is offloaded into the preceding lemma.

\begin{lemma} \label{lemma:identifiability_grows_at_small_depths_residual_bound}
	Let $N_i=\lambda I+E_i$ with $\|E_i\|_{1\to1}\leq \varepsilon$ (similarly for $N_j$). Then,
	\[
	\|\mathcal{R}_d\|_{1\to1}:=
	\left\|\ \sum_{k=2}^d \binom{d}{k} \lambda^{d-k}(E_i^k-E_j^k)\ \right\|_{1\to 1} \hspace{-5px} \leq\  d\lambda^{d-1} \left[ \left(1+\frac{\varepsilon}{\lambda}\right)^{d-1}-1 \right] \|N_i-N_j\|_{1\to1} \quad .
	\]
\end{lemma}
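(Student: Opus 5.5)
The plan is to bound $\mathcal{R}_d$ term by term, reusing the telescoping trick from the proof of \cref{thm:identifiability_degrades_at_large_depths}. The perturbations $E_i,E_j$ will play the role that $N_i,N_j$ played there. The key observation is that the common contraction cancels in the difference:
\[
E_i-E_j=(N_i-\lambda I)-(N_j-\lambda I)=N_i-N_j ,
\]
so every difference of powers of the $E$'s can be expressed through $\|N_i-N_j\|_{1\to1}$.

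First, by the triangle inequality for the induced $1\to1$ norm,
\[
\|\mathcal{R}_d\|_{1\to1}\leq\sum_{k=2}^d\binom{d}{k}\lambda^{d-k}\,\|E_i^k-E_j^k\|_{1\to1},
\]
using $\lambda>0$. Second, for each $k$ we use the telescoping identity
\[
E_i^k-E_j^k=\sum_{\ell=0}^{k-1}E_i^\ell(E_i-E_j)E_j^{k-1-\ell},
\]
which is valid without any commutativity assumption. Submultiplicativity and $\|E_i\|_{1\to1},\|E_j\|_{1\to1}\leq\varepsilon$ then give
\[
\|E_i^k-E_j^k\|_{1\to1}\leq k\,\varepsilon^{k-1}\,\|N_i-N_j\|_{1\to1}.
\]

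Third, it remains to evaluate the scalar sum $\sum_{k=2}^d\binom{d}{k}k\,\varepsilon^{k-1}\lambda^{d-k}$. Using the absorption identity $k\binom{d}{k}=d\binom{d-1}{k-1}$ and reindexing with $j=k-1$, this equals
\[
d\sum_{j=1}^{d-1}\binom{d-1}{j}\varepsilon^{j}\lambda^{d-1-j}.
\]
This is the binomial expansion of $(\lambda+\varepsilon)^{d-1}$ with the $j=0$ term removed, so it equals
\[
d\big[(\lambda+\varepsilon)^{d-1}-\lambda^{d-1}\big]=d\lambda^{d-1}\Big[\big(1+\tfrac{\varepsilon}{\lambda}\big)^{d-1}-1\Big].
\]
Factoring out $\lambda^{d-1}$ here is legitimate because $\lambda\in(0,1)$. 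Combining the three steps gives exactly the claimed bound.

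None of these steps is genuinely hard. The only place needing care is the combinatorial reindexing in the third step: the sum starts at $k=2$, so the missing $j=0$ term is precisely what produces the ``$-1$'' in the bracket. The cases $d\leq1$, where $\mathcal{R}_d$ is an empty sum, are consistent with the bound vanishing.
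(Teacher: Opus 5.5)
Your proposal is correct and follows essentially the same route as the paper: the triangle inequality, the telescoping identity with submultiplicativity giving $k\varepsilon^{k-1}\|N_i-N_j\|_{1\to1}$, and then the binomial sum with the $k=1$ term removed, which produces the $-1$ in the bracket. The only difference is that your absorption-identity reindexing derives explicitly the identity $\sum_{k=1}^{d}\binom{d}{k}k\lambda^{d-k}\varepsilon^{k-1}=d(\lambda+\varepsilon)^{d-1}$, which the paper simply invokes.
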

\begin{proof}
	For each $k\geq2$, the telescoping identity gives that,
	\[
	E_i^k-E_j^k = \sum_{\ell=0}^{k-1} E_i^\ell(E_i-E_j)E_j^{k-1-\ell} \quad ,
	\]
	then taking induced $\ell_1\to\ell_1$ norms and using submultiplicativity,
	\[
	\|E_i^k-E_j^k\|_{1\to1} 
	\leq \sum_{\ell=0}^{k-1} \|E_i\|_{1\to1}^{\ell} \|E_i-E_j\|_{1\to1} \|E_j\|_{1\to1}^{k-1-\ell} 
	\leq k \varepsilon^{k-1} \|E_i-E_j\|_{1\to1} \quad ,
	\]
	thus we can write,
	\[
	\|\mathcal{R}_d\|_{1\to1} \leq \sum_{k=2}^{d}\binom{d}{k}\lambda^{d-k}\|E_i^k-E_j^k\|_{1\to1} \leq \sum_{k=2}^{d} \binom{d}{k}\lambda^{d-k} k \varepsilon^{k-1} \|E_i-E_j\|_{1\to1} \quad .
	\]
	Now we can use $\sum_{k=1}^{d}\binom{d}{k}k\lambda^{d-k}\varepsilon^{k-1}=d(\lambda+\varepsilon)^{d-1}$ and remove the $k=1$ summand to yield,
	\[
	\|\mathcal R_d\|_{1\to1} \leq d((\lambda+\varepsilon)^{d-1}-\lambda^{d-1}) \|N_i-N_j\|_{1\to1} \quad ,
	\]
	and finally factor out $d\lambda^{d-1}$ to leave the stated bound.
\end{proof}

\begin{theorem}[Identifiability Grows at Small Depths] \label{thm:identifiability_grows_at_small_depths}
	Assume the same hypotheses as in \cref{thm:identifiability_degrades_at_large_depths}, and further suppose that for some perturbative range of depths $1\leq d\leq d_0$ satisfying, 
	\[
	\|M\|_{1\to1} \left[\left(1+\frac{\varepsilon}{\lambda}\right)^{d-1}-1\right] R \leq \frac{\kappa}{2} \quad ,
	\]	
	the ensemble $\mu_d$ of depth-$d$ circuits is $\kappa$-nondegenerate, in the sense that,
	\[
	\mathbb{E}_{C\sim\mu_d} \| M (N_i-N_j) \vec{r}_C \|_1 \geq \kappa \|N_i-N_j\|_{1\to 1} \quad .
	\]
	Then, for every $1\leq d\leq d_0$,
	\[
	\beta_{D_i,D_j}(d) \geq \frac{\kappa}{4} d\lambda^{d-1} \|N_i-N_j\|_{1\to 1} = \Omega(d\lambda^{d-1}) \quad .
	\]
\end{theorem}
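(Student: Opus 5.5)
Expand $N_i^d$ and $N_j^d$ binomially. Since $\lambda I$ commutes with everything, we have $N_i^d=\sum_{k=0}^d\binom{d}{k}\lambda^{d-k}E_i^k$. Subtracting, the $k=0$ terms cancel, and the $k=1$ term is $d\lambda^{d-1}(E_i-E_j)=d\lambda^{d-1}(N_i-N_j)$. What remains is the residual $\mathcal{R}_d$ of \cref{lemma:identifiability_grows_at_small_depths_residual_bound}. Thus for each circuit $C$,
\[
p_{i,C}-p_{j,C} = M(N_i^d-N_j^d)\vec{r}_C = d\lambda^{d-1}M(N_i-N_j)\vec{r}_C + M\mathcal{R}_d\vec{r}_C \quad .
\]
The plan is to lower-bound the leading term with the $\kappa$-nondegeneracy hypothesis, and upper-bound the residual with the lemma combined with the perturbative-range condition.

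For the lower bound, apply the reverse triangle inequality in $\ell_1$ and use $\mathrm{TV}=\frac12\|\cdot\|_1$. This gives
\[
\mathrm{TV}(p_{i,C},p_{j,C}) \geq \tfrac12 d\lambda^{d-1}\|M(N_i-N_j)\vec{r}_C\|_1 - \tfrac12\|M\|_{1\to1}\|\mathcal{R}_d\|_{1\to1}\|\vec{r}_C\|_1 \quad .
\]
Take expectations over $C\sim\mu_d$. Nondegeneracy bounds the first term below by $\frac12 d\lambda^{d-1}\kappa\|N_i-N_j\|_{1\to1}$. For the second term, $\mathbb{E}\|\vec{r}_C\|_1\leq R$, and \cref{lemma:identifiability_grows_at_small_depths_residual_bound} gives $\|\mathcal{R}_d\|_{1\to1}\leq d\lambda^{d-1}[(1+\varepsilon/\lambda)^{d-1}-1]\|N_i-N_j\|_{1\to1}$. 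Together these bound the subtracted term by $\frac12 d\lambda^{d-1}\|N_i-N_j\|_{1\to1}\cdot\|M\|_{1\to1}[(1+\varepsilon/\lambda)^{d-1}-1]R$.

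On the range $1\leq d\leq d_0$, the perturbative condition bounds this bracketed product by $\kappa/2$. The subtracted term is therefore at most $\frac{\kappa}{4}d\lambda^{d-1}\|N_i-N_j\|_{1\to1}$, which leaves
\[
\beta_{D_i,D_j}(d)\geq \left(\tfrac{\kappa}{2}-\tfrac{\kappa}{4}\right)d\lambda^{d-1}\|N_i-N_j\|_{1\to1} = \tfrac{\kappa}{4}d\lambda^{d-1}\|N_i-N_j\|_{1\to1} \quad ,
\]
as claimed. (At $d=1$ the residual vanishes and the bound holds trivially.)

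The main obstacle is bookkeeping in the residual, specifically that the lemma's bound is stated in terms of $\|N_i-N_j\|$ rather than $\|E_i-E_j\|$. These are equal because $N_i-N_j=E_i-E_j$, so this needs only to be noted. The other point to check is the order of operations: the reverse triangle inequality must be applied pointwise in $C$ before taking expectations, so that nondegeneracy, which is itself an averaged statement, can be applied to the leading term. Everything else is the routine binomial expansion. Commutativity of $\lambda I$ with $E_i$ is what makes that expansion valid, even though $E_i$ and $E_j$ themselves need not commute.
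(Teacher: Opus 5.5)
Your proposal is correct and follows the paper's proof essentially step for step. Both isolate the $d\lambda^{d-1}(N_i-N_j)$ term by binomial expansion and apply the reverse triangle inequality pointwise in $C$. Both then use nondegeneracy, $\mathbb{E}\|\vec r_C\|_1\le R$, \cref{lemma:identifiability_grows_at_small_depths_residual_bound}, and the perturbative condition to bound the remainder by half the leading term.
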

\begin{proof}
	Fix $1\leq d\leq d_0$ and a circuit $C\sim\mu_d$. Under our PTM model, after $d$ noisy layers,
	\[
	p_{i,C}-p_{j,C} = q+M N_i^d\vec{r}_C - q-M N_j^d\vec{r}_C = M(N_i^d-N_j^d)\vec{r}_C \quad .
	\]
	Since $I$ commutes with $E_i$, a binomial expansion of $N_i=\lambda I+E_i$ gives,
	\[
	N_i^d = \lambda^d I + d\lambda^{d-1}E_i + \sum_{k=2}^d\binom{d}{k}\lambda^{d-k}E_i^k \quad ,
	\]
	and similarly for $N_j$. Taking the difference of the two expansions gives,
	\[
	N_i^d - N_j^d = d\lambda^{d-1}(E_i-E_j) + \underbrace{\ \sum_{k=2}^d\binom{d}{k}\lambda^{d-k}(E_i^k-E_j^k)}_{\text{Higher-order remainder $=:\mathcal{R}_d$}\ } \quad .
	\]
	Note that $E_i-E_j=N_i-N_j$. Now write the distinguishing bias for $C$, 
	\[
	\begin{split}
		\frac{1}{2}\|p_{i,C}-p_{j,C}\|_1
		&= \frac{1}{2}\|M(N_i^d-N_j^d)\vec{r}_C\|_1 \\
		&\geq \frac{1}{2}d\lambda^{d-1}\|M(N_i-N_j)\vec{r}_C\|_1
		- \frac{1}{2}\|M\mathcal{R}_d\vec{r}_C\|_1 \\
		&\geq \frac{1}{2}d\lambda^{d-1}\|M(N_i-N_j)\vec{r}_C\|_1
		- \frac{1}{2}\|M\|_{1\to1}\|\mathcal{R}_d\|_{1\to1}\|\vec{r}_C\|_1
	\end{split}
	\]
	using a reverse triangle inequality. Then, taking the expectation over $C\sim\mu_d$,
	\[
	\begin{split}
		\beta_{D_i,D_j}(d) 
		&\geq \frac{1}{2}d\lambda^{d-1}\ \mathbb{E}_{C\sim\mu_d} \|M(N_i-N_j)\vec{r}_C\|_1
		- \frac{1}{2}\|M\|_{1\to1}\|\mathcal{R}_d\|_{1\to1}\ \mathbb{E}_{C\sim\mu_d} \|\vec{r}_C\|_1 \\
		&\geq \frac{\kappa}{2} d\lambda^{d-1} \|N_i-N_j\|_{1\to1}
		- \frac{1}{2}\|M\|_{1\to1}\|\mathcal{R}_d\|_{1\to1} R \\
		&\overset{\ref{lemma:identifiability_grows_at_small_depths_residual_bound}}{\geq} \frac{\kappa}{2} d\lambda^{d-1} \|N_i-N_j\|_{1\to1}
		- \frac{1}{2}d\lambda^{d-1} \|M\|_{1\to1} \left[ \left(1+\frac{\varepsilon}{\lambda}\right)^{d-1}-1 \right] R \|N_i-N_j\|_{1\to1} \\
		&\geq \frac{\kappa}{2} d\lambda^{d-1} \|N_i-N_j\|_{1\to1}
		- \frac{\kappa}{4} d\lambda^{d-1} \|N_i-N_j\|_{1\to1} \\
		&= \frac{\kappa}{4} d\lambda^{d-1} \|N_i-N_j\|_{1\to 1} \quad .
	\end{split}
	\]
\end{proof}

Together, \cref{thm:identifiability_degrades_at_large_depths} and \cref{thm:identifiability_grows_at_small_depths} present our formal intuition for the intermediate-depth principle; shallow circuits may not accumulate enough backend-specific noise to separate, while very deep circuits are dominated by common mixing, hence optimal distinguishability lies between these extrema. Recall \cref{fig:intermedaite_depth_phenomenon_illustration} from the main text, which illustrates this mechanism.

\appendixpart{Implementation Details and Experimental Results} 

\section{Probe Circuits} \label{appendix:experiment_circuits}
Our experiments are run remotely on three different quantum processors (`\textit{devices}') via AWS Braket: two superconducting QPUs (Ankaa-3 and Garnet) and one ion-trap QPU (Aria-1). These devices are deployed by Rigetti, IQM, and IonQ respectively. Throughout this section, we refer to comparisons between the superconducting devices as \textit{like-type}, and to comparisons between superconducting and ion-trap devices as \textit{differing-type}.

At the high level, each individual \textit{call} to a device takes the following simple form: (1) a circuit of depth $d$ is defined; (2) the device is remotely prepared with the circuit; (3) the circuit is executed numerous times, specified by the number of \textit{shots}; then (4) a histogram of final measurements is returned, which we read as an outcome probability distribution over discrete bitstrings. 

Our experiments in this work use circuits at a fixed width of $n=5$ qubits, and hence lead to outcome distributions over $2^5=32$ possible bitstrings. A sensible choice for the number of shots is much larger than this. For example, consider that each bitstring has an equal probability of $1/32\approx{0}.03125$ to be measured. Then $2^6=64$ shots is expected to give us only 2 measured outcomes per bitstring and so is extremely vulnerable to random noise greatly skewing our estimated probabilities; e.g. if only a single measurement error is made, then one outcome will falsely be perceived to have triple probability of another. On the other hand, with $2^8=256$ shots, we have 8 measurements per outcome, and so errors in measurement can be tolerated more comfortably. The number of shots thus dictates the \textit{resolution} with which we describe noise apparent in each call, and subsequently influences how precisely we may learn that description. Following preliminary experimentation, we have used our best judgment to select an appropriate number of shots.

In an effort to explore many different questions about device noise, it becomes necessary to run different forms of experiment. We categorise our device calls into two suites of experiments: \textit{depth-varied} and \textit{time-varied}.

\begin{widefigure}
    \centering
    \begin{subfigure}[t]{0.48\linewidth}
        \centering
        \includegraphics[width=\linewidth]{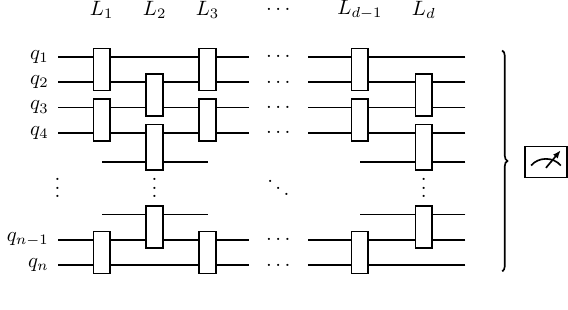}
        \caption{Brickwork architecture (depth-varied).}
        \label{fig:brickwork_circuit}
    \end{subfigure}\hfill
    \begin{subfigure}[t]{0.48\linewidth}
        \centering
        \includegraphics[width=\linewidth]{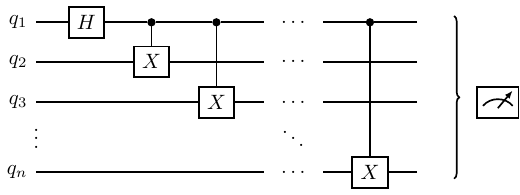}
        \caption{GHZ circuit (time-varied).}
        \label{fig:ghz_circuit}
    \end{subfigure}
    \caption{Probe circuit structures used in our cloud experiments. We have: (a) \textit{brickwork circuit architecture}, wherein each layer $L_i$ comprises (roughly) $n/2$ two-qubit unitaries, which offset-interlock with those in layers $L_{i-1}$ and $L_{i+1}$, randomly sampled from the unitary Haar measure, and we call such a circuit ``depth $d$'' if it has $d$ layers $L_1,\dots,L_d$; and (b) \textit{GHZ circuit} which prepares the state $|\mathrm{GHZ}_n\rangle=\frac{1}{\sqrt{2}}(|0\overset{n}{\dots}0\rangle+|1\overset{n}{\dots}1\rangle)$ over $n$ qubits.}
    \label{fig:circuits}
\end{widefigure}

\subsection{Depth-Varied Experiments}

This suite of experiments explores questions about circuit depth, and whether the relationship between noise and depth depends strongly on the device. 

In each call to a device, we define a depth-$d$ circuit by sampling Haar-random unitaries to be stacked together in an alternating brickwork pattern of $d$ layers (see \cref{fig:brickwork_circuit}). Importantly, note that the circuit seen in each individual call to a device is independently sampled (and almost certainly unique), thus in any given run it cannot easily be inferred how much noise is present---or how it presents---in the outcome distribution until we provide knowledge about the circuit. Designing the experiments in this way allows us the opportunity to ask more nuanced questions about the noise fingerprints both implicitly (without circuit knowledge) and explicitly (accounting for the circuit).

Another motivation for the randomness of our circuits is to remove bias. With sufficiently many experiment calls, and increasingly with deeper circuits, we can expect that our probability distributions approach uniformly random. This opens the possibility to ask whether well-spread distributions, as we increasingly expect from deeper and deeper circuits, dilute the influence of a device's unique character. And perhaps most usefully for the name-sake objective of this suite of experiments, we can roughly expect the same true outcome distribution (i.e. in the noise-free setting) beyond a sufficiently large depth, and thus we can better understand how noise fingerprints change and become harder to effectively describe with increased depth. 

On each of the two superconducting devices (Ankaa-3 and Garnet), we make calls with 500 random depth-$d$ circuits, with $2^9=512$ shots of resolution, at depths $d=5,10,15,20$. On the ion-trap device (Aria-1), we make 350 calls, with the same resolution, at depths $d=5,10$. This choice of minimum depth is sufficient to allow for roughly uniformly random outcome distributions at our chosen resolution, and further allows the possibility for each qubit to interact with every other. 

\subsection{Time-Varied Experiments}

This suite of experiments explores questions about the dynamics of device noise, both in the short(er)-term and long(er)-term. Being frugal, we can---and will---look toward classification tasks with data from these experiments, but they are primarily run with the purpose of understanding how well we can reason about the progression of noise in and between devices.

Every call to a device is made with the same circuit preparing a GHZ state over our $n=5$ qubits (illustrated in \cref{fig:ghz_circuit}); the true probability distribution is known to be an equal weighting between the all-0s bitstring and the all-1s bitstring. These circuits are constrained to a linear depth in $n$, plus measurement, and so (since our $n$ is reasonably small) we should not expect much deviation from the true distribution. For this reason, we judge it necessary to increase our resolution to $2^{10}=1024$ shots to capture the far smaller probabilities leaking into the remaining 30 possible bitstrings. Note that this identical circuit preparation has the contrasting setup to our Haar-random circuits from the depth-varied suite of experiments; we can ask contrasting questions about whether highly concentrated distributions are easier to learn, as well as tie in temporal questions to understand whether the concentration of the distribution spills out in a predictable way, with respect to the device or its physical platform.

Experiments in this suite are grouped into three \textit{batches}; a batch of calls are made in a series of contiguous \textit{time steps} within the same day, with reasonably small and similar delays between successive steps. Calls between batches are separated by days or more, with the interval between batches being significantly larger than any interval between calls within the same batch.

In the first two batches, each superconducting device (Ankaa-3 and Garnet) is called in 100 contemporaneous time steps, which we sort chronologically according to completion time. As a post-processing step, we can take each completion time relative to some fixed time to assign to each call a single numerical quantity. The final batch is similar, and further includes 45 time steps of calls to the ion-trap device (Aria-1) at the same resolution. For clarity, note that each time step is an independent call, with freshly prepared initial states and GHZ circuit.

\section{Data Processing}
Suppose we have run, on some quantum device, an $n$-qubit circuit whose theoretical output distribution should be given by $\mathbf{y}=(y_{1},\dots,y_{N})$ for $N=2^n$, but for which we actually measure an output distribution of $\mathbf{y}'=(y_{1}',\dots,y'_{N})$. There are two things we want to do with these data: in the \textit{implicit} case, we would like to directly apply ML techniques to $\mathbf{y}$; and in the \textit{explicit} case, we would like to find a suitable function $f(\mathbf{y},\mathbf{y}')$ which produces an input feature vector effectively isolating and describing some aspect of noise particular to that device.

The implicit case is specified by the trivial function $f(\mathbf{y},\mathbf{y}')=\mathbf{y}$, so we should ask the more interesting question: \textit{which functions best isolate ``noise'' in the explicit case?} In particular, it would be useful to discover how best we can specify and/or enhance the characteristics of a given device's noise fingerprint that make it unique/distinctive. The nature of $f$ then helps us to decorate any learned characteristics, and representations thereof, with useful theoretical interpretations.

To structure the discussion, we will discern three categories for the form of function $f$, which increase in the strictness of the imposed structure, though do so with an increasing interpretability: \textit{neurally-guided}, \textit{per-outcome}, and \textit{per-qubit}. 

\subsection{Neurally-Guided Features}

Sitting at the lenient end, we can use any arbitrary neural network architecture to transform some combination of $\mathbf{y}$ and $\mathbf{y'}$ (ideally, retaining most information from each) directly into a feature vector without any premonition for its interpretation. In fact, the only `strictness' involved in this approach is to enforce a dimensionality, call it $N'$, for the desired latent representation.

At its simplest, we might like to feed, as input to our ``neural network'' $f$, the concatenation $\mathbf{x}:=(y_{1},\dots,y_{N},y'_{1},\dots,y'_{N})$. Any neural network is suitable; e.g. in this work, we will remain general and use a standard multilayer perceptron (MLP). At the output of $f$ is an $N'$-dim vector representation of $\mathbf{x}$, which we can then pass through a final fully-connected layer $g$ (or a secondary model accepting a latent representation for input) to perform a final classification task.

Then we have some ``full-model'' $g\circ f$ which, once trained to classify between devices given these input $\mathbf{x}$, contains a ``sub-model'' $f$ transforming the concatenation of two probability distributions (true and obtained) into an $N'$-dim feature vector which can be used to distinguish two devices. In this way, the feature vector obtained from $f$ is an optimal description of a noise fingerprint for some device \textit{with respect to other devices}. Moreover, $g$ can be understood to be reading the character of a given fingerprint's representation for backend identification. \cref{fig:latent_representations} in the main text presents these kinds of neurally-guided feature from our later experiments.

A less task-dependent alternative to produce a representation of the noise fingerprint in arbitrary dimension is to use a compressing architecture, such as a variational autoencoder (VAE). These models do not use a secondary process for contextualising how the representations should be learned, but instead work by composing an ``encoder'' with a ``decoder''. The encoder works to reduce the input to some latent space, and the decoder works in reverse to reconstruct the original input from the latent space. Hence, the latent representation we learn is specifically designed to retain as much information as possible in lower dimensions. Such approaches can also be combined with a downstream task (e.g. classifying between devices) by re-purposing the encoder in a new model once trained (i.e. used as $f$ in our above discussion). We leave the use of VAE-type noise fingerprint representation to future work.

As a final note before moving on to other approaches: the interpretation we can draw from $f$ is thus inexplicably tied to which devices are used for training the full-model, as well as the nature of the training data. Especially, the set of circuits $\mathcal{C}$ used to generate the data is of extreme importance, and allows us to enforce an interpretation on our representations. For example, if we choose $\mathcal{C}$ to be a suitably general set of circuits, then we might expect that our latent representations will also be very general in nature---most likely learned to be as distinguishable per-device as possible without bias. On the other hand, we could choose $\mathcal{C}$ to be very specific (e.g. stabiliser circuits, quantum Fourier transforms, low-depth random circuits, etc.) to hone in on some precise aspect of noise \textit{with respect to a specific task/computation}. If we have two devices performing a particular task in some controlled context, then learning how their noise fingerprints differ \textit{with respect to this context} could be extremely useful. The beauty of this neurally-guided approach is that it affords us the flexibility to change the set of circuits $\mathcal{C}$, the downstream task, the latent dimension $N'$, etc.

\subsection{Per-Outcome Features}

Increasing the strictness somewhat, we can instead opt to act directly on the probability distributions, retaining their shape and ordering. The most straightforward things we could want to do are to take residuals. We can then weight them in interesting ways to come up with different interpretations on the data. 

A per-outcome feature $\mathbf{x}=f(\mathbf{y},\mathbf{y}')$ works pairwise on elements as $x_{i}=f(y_{i},y_{i}')$. For example,
\begin{itemize}
    \item Standard residuals;
    \[
    x_{i}=y'_{i}-y_{i}\quad\text{or}\quad x_{i}=|y_{i}'-y_{i}|\quad ;
    \]
    \item Relative residuals; for some small $\varepsilon>0$,
    \[
    x_{i}=\frac{y'_{i}-y_{i}}{y_{i}+\varepsilon}\quad\text{or}\quad x_{i}=\frac{|y'_{i}-y_{i}|}{y_{i}+\varepsilon}\quad ;
    \]
    \item Log-ratios; again, for some small $\varepsilon>0$,
    \[
    x_{i}=\log\frac{y'_{i}+\varepsilon}{y_{i}+\varepsilon}\quad;
    \]
    \item $Z$-scores, which may be useful for isolating noise beyond sampling fluctuations: for $m$ the number of shots, compute the variance as $\sigma_{i}^2=y'_{i}(1-y_{i}')/m$, then for small $\varepsilon>0$,
    \[
    x_{i}=\frac{y'_{i}-y_{i}}{\sqrt{ \sigma^2+\varepsilon }}\quad.
    \]
\end{itemize}

We use the first three (together with their absolute versions) in this work, and leave other transformations, such as $Z$-scores, to future work. 

\subsection{Per-Qubit Features}

Now we restrict $f$ to considerations on the qubit-scale, operating on qubit indices to define marginal distributions over our bitstrings for computing features. We can, as before, look toward residual-based features, taking sums with respect to each qubit index to produce $n$-dim representations.

Assume that outcomes are ordered as bitstrings $b=b_{n-1},\dots,b_{0}$. Then we can operate directly on these bits in e.g. the following ways:
\begin{itemize}
    \item Single-qubit marginals; for a qubit $i$,
    \[
    x_{i}=\sum_{b:b_{i}=0}y'_{b}-y_{b}\quad\text{or}\quad x_{i}=\sum_{b:b_{i}=0}|y'_{b}-y_{b}|\quad;
    \]
    \item Pauli-$Z$ expectation per qubit;
    \[
    x_{i}=\langle Z_{i}\rangle'-\langle Z_{i}\rangle=\sum_{b}(-1)^{b_{i}}(y_{b}'-y_{b})\quad;
    \]
    \item Two qubit correlators; producing $n(n-1)/2$ features, with the $i$-th and $j$-th qubits giving,
    \[
    \begin{split}
        x_{i,j}
        &=\langle Z_{i}Z_{j}\rangle'-\langle Z_{i}Z_{j}\rangle\\
        &=\sum_{b}(-1)^{b_{i}\oplus b_{j}}(y'_{b}-y_{b})\quad ;
    \end{split}
    \]
    \item And so on into higher orders;
    \[
    \begin{split}
        x_{i_{1},\dots,i_{m}}
        &=\langle Z_{i_{1}}\cdots Z_{i_{m}}\rangle'-\langle Z_{i_{1}}\cdots Z_{i_{m}}\rangle\\
        &=\sum_{b}(-1)^{b_{i_{1}}\oplus\dots \oplus b_{i_{m}}}(y'_{b}-y_{b})\quad .
    \end{split}
    \]
\end{itemize}

We only extend into two-qubit correlator features in our experimentation, leaving higher orders to future work; with just $n=5$ qubits, two-qubit correlations are sufficient.

\section{Dimensionality Reduction Visualisation}
We now visualise embeddings of our experimental data in two dimensions using two common techniques: \textit{principal component analysis (PCA)} and \textit{$t$-distributed stochastic neighbour embedding ($t$-SNE)}. The former is a linear dimensionality reduction technique, which will allow us to understand the variance of our data projected into principal components and gauge the linear separability of the data classes (device membership, batch, etc.); the latter is a non-linear reduction which will much better allow us to understand the clustering of data classes as an indication for trainability. Such clustering patterns simply may not present in linear spaces, but can be exploited via the nonlinearity of deeply-layered neural networks with chained activation functions.


\begin{widefigure}
	\centering
	\begin{subfigure}[t]{0.48\linewidth}
		\centering
		\includegraphics[width=\columnwidth]{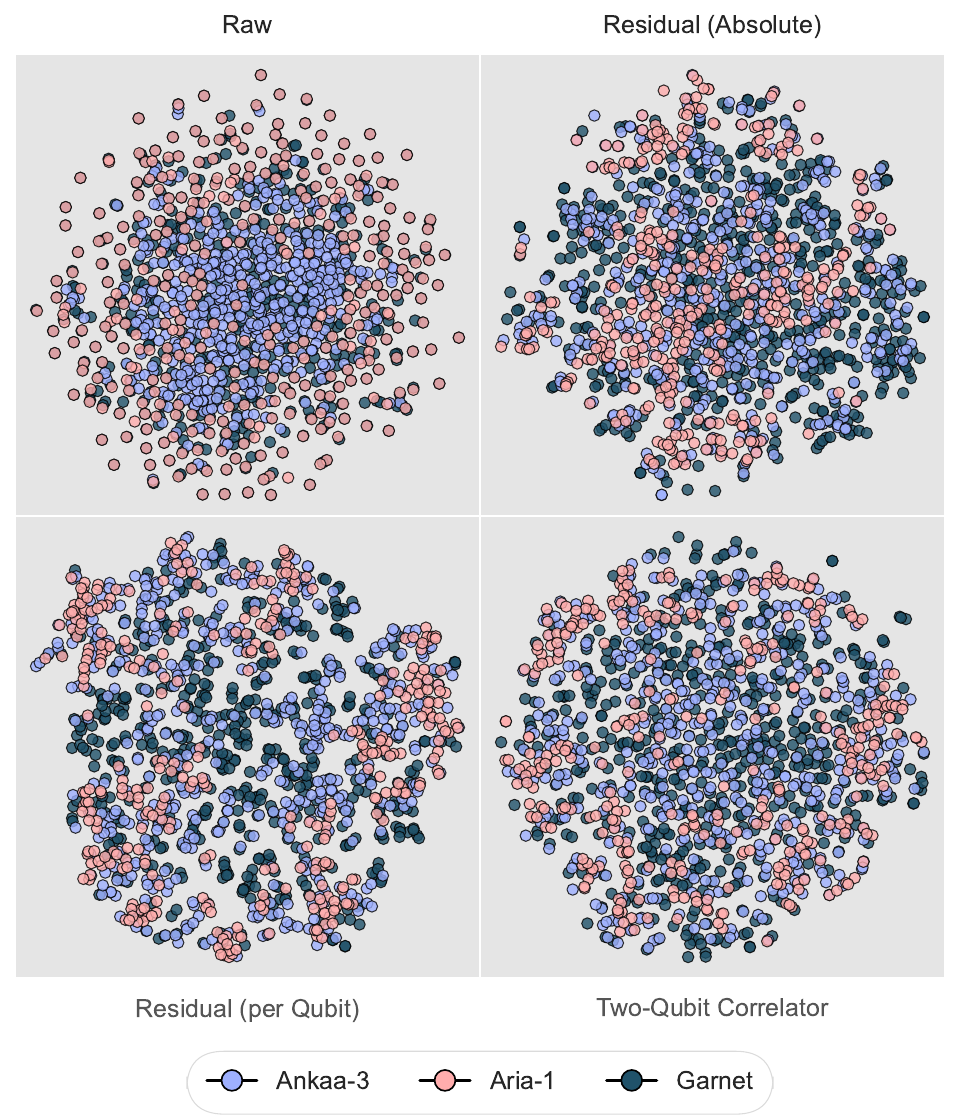}
		\caption{$t$-SNE plots for data produced on Haar-random circuits at a depth $d=5$.}
		\label{fig:tsne_examples}
	\end{subfigure}\hfill
	\begin{subfigure}[t]{0.48\linewidth}
		\centering
		\includegraphics[width=\columnwidth]{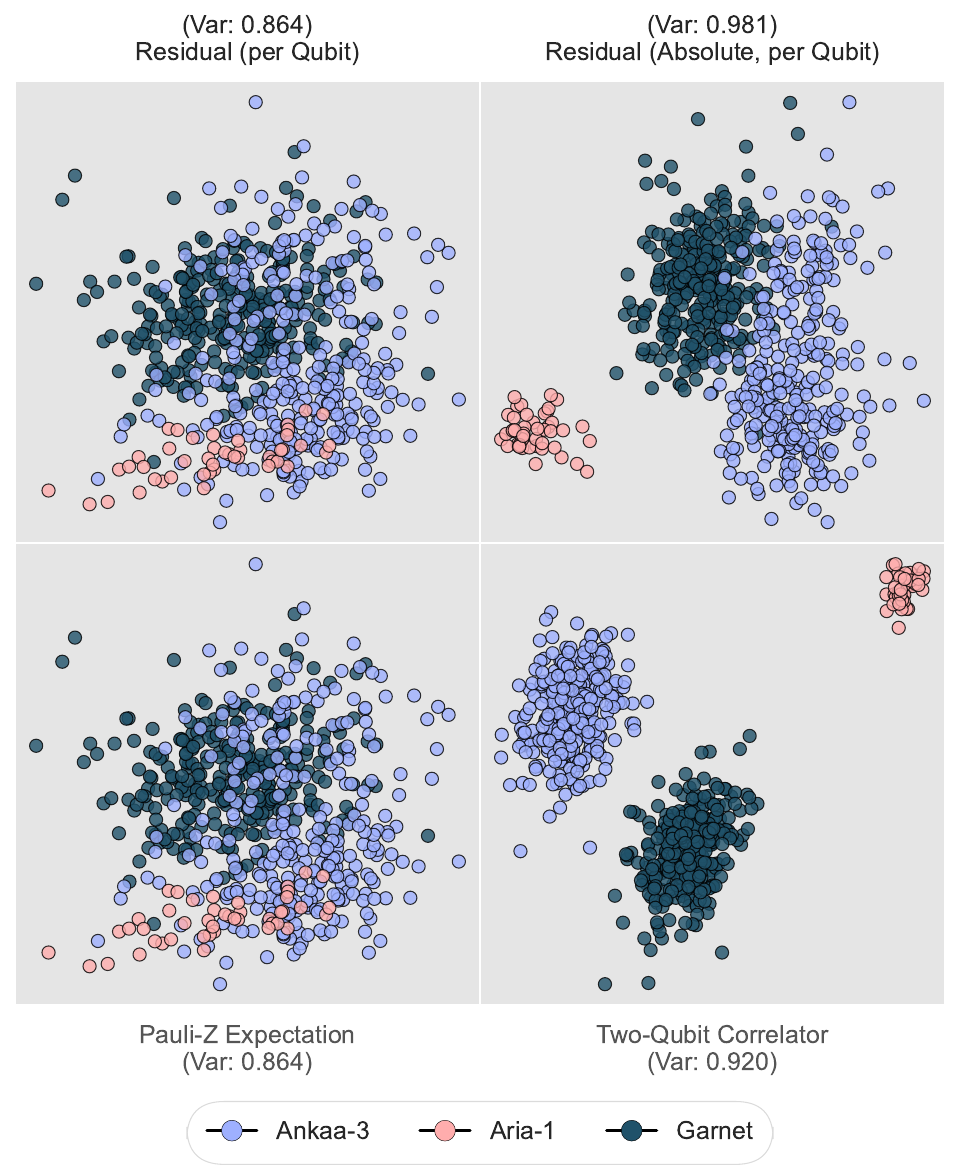}
		\caption{PCA plots for data produced on GHZ circuits over all batches.}
		\label{fig:pca_bell_state_examples}
	\end{subfigure}
	\caption{Example dimensionality reduction plots of quantum device feature vectors, hue by device. All features are shown via out repository, for both $t$-SNE and PCA reductions. In our PCA plots, the variance explained by the first two principal components (plotted) is given in parentheses.}
\end{widefigure}

\cref{fig:tsne_examples} demonstrates higher-dimensional clustering via $t$-SNE for the shallowest of the Haar-random circuit experiments, for a select few example features. It is immediately evident that like-type data points occupy similar neighbourhoods over many features, whereas differing-type data points indicate that at least a partial separation can be observed. Similarly, see \cref{fig:pca_bell_state_examples} and \cref{fig:pca_bell_state_by_batch_examples}.

\begin{figure}[t]
	\centering
	\includegraphics[width=\textwidth]{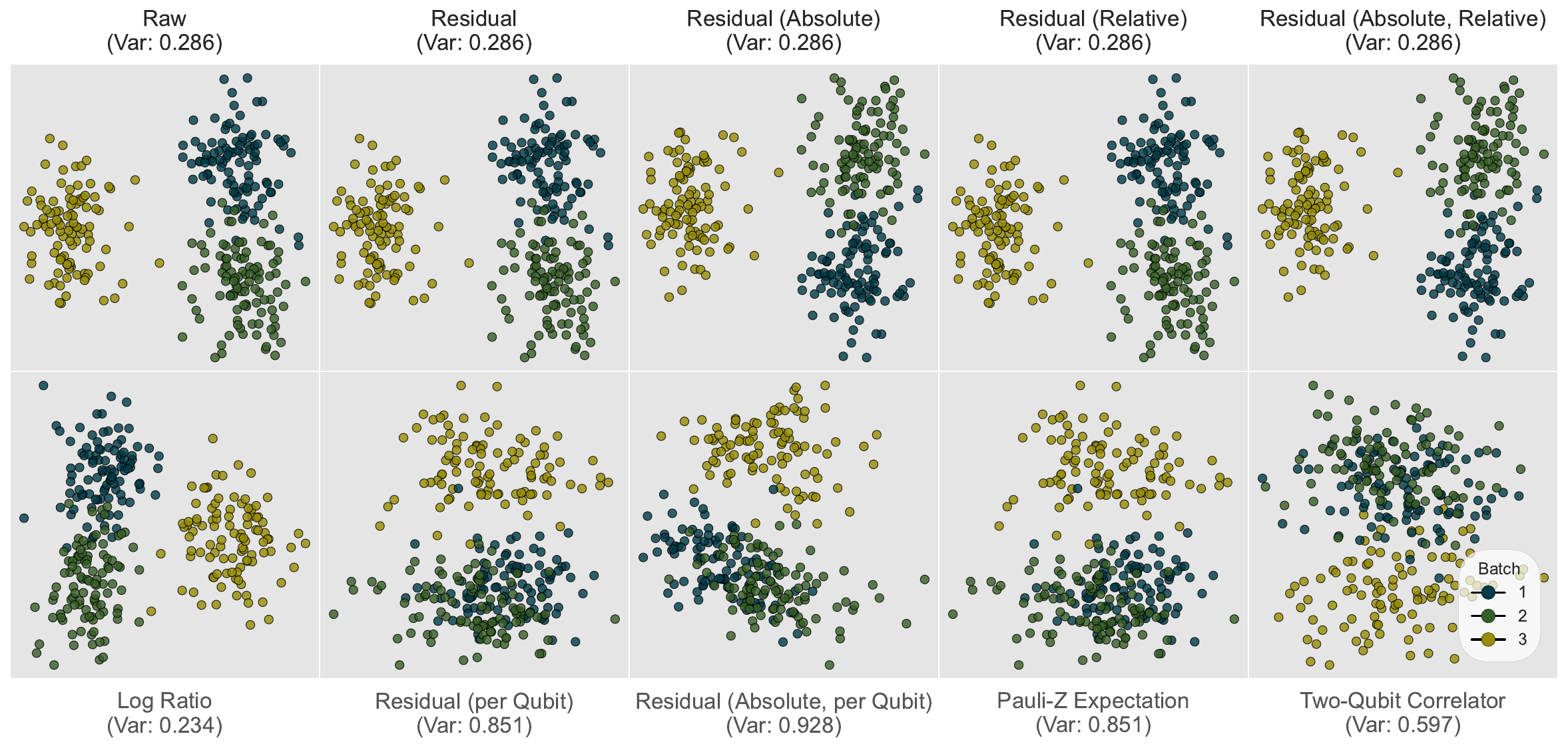}
	\caption{Dimensionality reduction visualisation (PCA) of quantum device feature vectors, hue by batch, for data produced on GHZ circuits.}
	\label{fig:pca_bell_state_by_batch_examples}
\end{figure}

\section{Classification Results} \label{appendix:classification_results}
We now consider some experimental classification tasks designed to empirically demonstrate that output distributions carry learnable route information. In the context of the backend identification game of \cref{def:formal_identification_game}, a classifier is a concrete decision rule for guessing the route that produced a given feature vector (representing a transcript derived from a finite-shot histogram). The test accuracies that we report here are then an empirical lower bound on the success probability achievable from the corresponding transcript class.

\subsection{Device Classification}

Device classification experiments use a multilayer perceptron with two hidden layers of widths 256 and 64 respectively, using batch normalisation, ReLU activation, and dropout after each layer, followed by a final classification layer. 

\begin{table*}[t]
	\centering
	\textbf{Experimental Backend Identification Success Probability}\vspace{1em}
	\small
	\renewcommand{\arraystretch}{1.25}
	\begin{tabular*}{\textwidth}{@{\extracolsep{\fill}}r|ccccc|cc}
		\toprule
		& \multicolumn{5}{c|}{\textbf{Like-Type}}
		& \multicolumn{2}{c}{\textbf{Differing-Type}} \\
		\cline{2-8}
		\textbf{Feature}
		& {$d=5$}
		& {$d=10$}
		& {$d=15$}
		& {$d=20$}
		& {GHZ}
		& {$d=5$}
		& {$d=10$} \\
		\midrule
		
		Two-Qubit Correlator   & ---/.69 & .52/.63 & .50/.60 & .51/.59
		& 1. 
		& ---/.79 & .70/.76 \\
		Pauli-Z Expectation    & ---/.68 & .53/.63 & .56/.65 & .59/.61
		& 1. 
		& ---/.78 & .60/.74 \\
		Residual (Abs., Qubit) & ---/.60 & .56/.60 & .58/.57 & .59/.62 
		& 1.
		& ---/.66 & .61/.60 \\
		Residual (Qubit)       & ---/.68 & .51/.64 & .56/.65 & .59/.61 
		& 1.
		& ---/.79 & .67/.71 \\
		
		\midrule
		
		\textbf{Log Ratio}     & \textbf{---/.77} & \textbf{.59/.73} & \textbf{.51/.67} & \textbf{.53/.66}
		& \textbf{1.}
		& \textbf{---/.96} & \textbf{.86/.90} \\
		Residual (Abs., Rel.)  & ---/.55 & .50/.54 & .50/.52 & .54/.52 
		& 1.
		& ---/.56 & .54/.58 \\
		Residual (Rel.)        & ---/.54 & .49/.51 & .51/.55 & .54/.52 
		& 1.
		& ---/.56 & .49/.65 \\
		Residual (Abs.)        & ---/.65 & .61/.67 & .53/.64 & .49/.62 
		& 1.
		& ---/.79 & .79/.81 \\
		Residual               & ---/.73 & .51/.67 & .60/.62 & .53/.63 
		& 1.
		& ---/.78 & .81/.83 \\
		
		\midrule
		
		\textbf{Raw}           & \textbf{---/.89} & \textbf{.90/.84} & \textbf{.89/.82} & \textbf{.87/.80}
		& \textbf{1.}
		& \textbf{---/.96} & \textbf{1./.98} \\
		
		\bottomrule
	\end{tabular*}
	\renewcommand{\arraystretch}{1.0}
	\caption{Binary classification test accuracies. Columns 1--4 and 6--7 are depth-varied (i.e. with $d$-layer brickwork circuit) experiments, while column 5 is the GHZ circuit (selecting best batch). Columns 1--5 are classification between like-type devices (Ankaa-3 vs Garnet); columns 6--7 are classification between differing-type (Ankaa-3 vs Aria-1). A column with depth label $d$ indicates accuracy obtained by a classifier trained on data \textit{at} depth $d$ / \textit{up to} depth $d$.}
	\label{tab:device_classification}
\end{table*}

\cref{tab:device_classification} shows that device identity can be recovered well above chance from finite-shot output data, most accurately from raw histogram features (as indicated by \cref{prop:post_processing_cannot_increase_information}). High like-type classification accuracies are particularly interesting, as they indicate that the distinguishability is not only a platform-level effect and can exist between even physically very similar backends. Differing-type classification is expectedly near-perfect in the raw case; distinguishing differing physical platforms is likely made relatively easy due to different native noise mechanisms etc. It is also experimentally event that backend information is able to survive several lower-dimensional or more interpretable post-processing transformations, most notably log ratio features. 

While the quantity and range of depths probed in these experiments is modest, the intermediate-depth principle formalised in Appendix \ref{appendix:depth_window_principle} is supported by observations in \cref{tab:device_classification}. Several features exhibit a non-negligible improvement in performance as we jump from depth $d=5$ to $10$, particularly residual-based features in differing-type classification, and a general trend of decaying distinguishability thereafter can be observed across all features in the deeper like-type experiments. In fact, for almost all post-processing features computed in this work (excluding raw transcripts), we can observe convergence on near-zero distinguishing bias (i.e. test accuracy exceeding $1/2$) in increasing depth. We have trained classifiers with both a single probed depth (``at depth'') and a cumulative pooling across depths (``up to depth'') to demonstrate more clearly this additional variance introduced that is not route-specific---incorporating deeper observations into the training set often degrades the classification performance exactly due to this intermediate-depth principle punishing the addition of less-distinguishable, high-depth transcripts. In further support of the principle, some features are aided/hindered by pooling across depths at different rates, owing to their specific optima for the most distinguishable depth (their peak in \cref{fig:intermedaite_depth_phenomenon_illustration}).

It is worth briefly noting the perfect classification we observe when distinguishing devices via the GHZ circuit (column 5 in \cref{tab:device_classification}). Here we have a fixed and highly structured circuit with a particularly non-uniform ideal outcome distribution, all of which makes the task of distinguishing as direct as possible. This is an unsurprising result, but it should not be overlooked; in the context of our theoretical discussions of pseudo channel distances, this emphasises the role of the workload ensemble in allowing users to violate routing anonymity, and the ease with which they may do so.

Recall \cref{fig:latent_representations}, which visualises the latent representations learned by the above classifiers (on raw transcripts). Similar clusterings across train, validation, and test data splits serves as a comforting visual marker of the generality with which backend-dependent noise signals can be learned. \cref{fig:all_devices_raw_5} gives some particularly interesting insight about how separable our classifiers understand the like-type superconducting devices to be from the ion-trap device; Ankaa-3 and Aria-1 (of differing type) separate their representations very strongly in the latent space (strengthened in the differing-type-only classifier's embeddings in \cref{fig:differ_type_raw_10}), with the other superconducting device occupying the middle ground between the two, and densely mixing with its like-type counterpart while only sparsely mixing with the differing type.

\subsection{Temporal (Batch) Classification}

Now we ask whether time itself---the dynamics of backend-specific noise---imprints a learnable fingerprint in output distributions. Rather than asking \textit{which} backend produced a given transcript, we fix the backend and ask \textit{when} the transcript was produced (in a relative sense). Specifically, we classify a device's transcripts by batch; since batches are separated by substantially longer intervals than calls within a batch, successful classification in the context of these experiments implies a long-range dynamic character to backend-specific noise signals.

These GHZ experiments are supplemented with an additional feature: \textit{bell summary}, consisting of the two GHZ peak probabilities, total leakage into the remaining outcome states, peak imbalance, $\ell_1$-distance from the ideal GHZ distribution, and entropy. The intent here is to introduce a feature that understands broadly how much noise there is with respect to leakage away from the bell state's support, but which is agnostic to specific detail about where that noise has ended up. If we see more specific features classifying better than bell summaries, then we can infer that our classifiers are identifying device-specific noise signals in shape and character, rather than simply learning that one device is `noisier' than another.

\begin{table*}[t]
	\centering
	\textbf{Experimental Batch Classification (and Negative Control)}\vspace{1em}
	\small
	\renewcommand{\arraystretch}{1.25}
	\begin{tabular*}{\textwidth}{@{\extracolsep{\fill}}r|ccc|cc}
		\toprule
		
		& \multicolumn{3}{c|}{\textbf{Batch Classification}}
		& \multicolumn{2}{c}{\textbf{Control}} \\
		\cline{2-6}
		\textbf{Feature}
		& \textbf{Ankaa-3}
		& \textbf{Garnet}
		& \textbf{Combined}
		& \textbf{Ankaa-3}
		&\textbf{Garnet} \\
		\midrule
		
		Bell Summary & .61 $\pm$ .048 & .61 $\pm$ .036 & .46 $\pm$ .034 & .28 $\pm$ .022 & .36 $\pm$ .049 \\
		
		\midrule
		
		Two-Qubit Correlator    & .92 $\pm$ .021 & .78 $\pm$ .051 & .74 $\pm$ .030 & .33 $\pm$ .059 & .30 $\pm$ .056 \\
		Pauli-Z Expectation     & .92 $\pm$ .024 & .59 $\pm$ .085 & .62 $\pm$ .043 & .33 $\pm$ .035 & .30 $\pm$ .073 \\
		Residual (Abs., Qubit)  & .89 $\pm$ .027 & .43 $\pm$ .032 & .57 $\pm$ .044 & .36 $\pm$ .069 & .36 $\pm$ .052 \\
		Residual (Qubit)        & .92 $\pm$ .024 & .59 $\pm$ .085 & .62 $\pm$ .043 & .33 $\pm$ .035 & .30 $\pm$ .073 \\
		
		\midrule
		
		Log Ratio    & .98 $\pm$ .015 & .81 $\pm$ .039 & .84 $\pm$ .012 & .32 $\pm$ .036 & .33 $\pm$ .022 \\
		Residual (Abs., Rel.)   & .99 $\pm$ .013 & .80 $\pm$ .047 & .83 $\pm$ .007 & .29 $\pm$ .045 & .34 $\pm$ .020 \\
		Residual (Rel.)         & .99 $\pm$ .013 & .80 $\pm$ .047 & .83 $\pm$ .007 & .29 $\pm$ .045 & .34 $\pm$ .020 \\
		Residual (Abs.)         & .99 $\pm$ .013 & .80 $\pm$ .047 & .83 $\pm$ .007 & .29 $\pm$ .045 & .34 $\pm$ .020 \\
		Residual                & .99 $\pm$ .013 & .80 $\pm$ .047 & .83 $\pm$ .007 & .29 $\pm$ .045 & .34 $\pm$ .020 \\
		
		\midrule
		
		\textbf{Raw}          & \textbf{.99} $\pm$ \textbf{.013} & \textbf{.80} $\pm$ \textbf{.047} & \textbf{.83} $\pm$ \textbf{.007} & \textbf{.29} $\pm$ \textbf{.045} & \textbf{.34} $\pm$ \textbf{.020} \\
		
		\bottomrule
	\end{tabular*}
	\renewcommand{\arraystretch}{1.0}
	\caption{Three-way temporal (batch) classification test accuracies for the time-varied GHZ experiments, together with a randomly permuted-label negative control experiment. We report the mean test accuracy over 5 runs $\pm$ one standard deviation. Column 3 is a combined and balanced dataset on both Ankaa-3 and Garnet.}
	\label{tab:batch_classification}
\end{table*}

\cref{tab:batch_classification} presents logistic-regression classifiers on balanced device-batch groups, with standardised input features evaluated over five random stratified train-test splits. Note that we have three distinct batches (for the two superconducting devices), so performance is measured against a majority baseline $1/3$. We observe an extremely strong temporal structure. Retaining high classification accuracies in the combined dataset indicates that this temporal signal is not merely an artefact of training separate per-device classifiers; batch-dependent structure persists largely independently of apparent device signal. We include a control experiment in \cref{tab:batch_classification} wherein the dataset's labels are randomly permuted. The same logistic-regression models fall to no better than majority guessing, supporting the conclusion that our high batch-classification accuracies are not caused by data leakage, class imbalance, or overparameterised classifiers.

These results naturally follow the persistent routing setting of \cref{def:persistent_routing}, in which we repeatedly probe a fixed backend. In practice, we expect the induced law associated with a backend to drift over time, which can then produce the kind of temporal signal that we show above can be extremely identifying about the route. Great care should be taken with persistent routing to ensure that such identifying temporal structure is not also leaked via transcripts.

We also consider a shorter time-scale control by splitting each batch into an `early' section and `late' section, then perform a binary classification between the two. This gives us an empirical view for the learnability of the perhaps more erratic yet muted short-term drift over the lifespan of a contiguous batch. The intra-batch results presented in \cref{tab:early_vs_late_classification} yield a better prospect for short-term routing anonymity than the long-term identification in \cref{tab:batch_classification}. It is clear, then, that the dominant temporal effect in these data are batch-scale effects rather than any drift within batches. This is useful insight for the difficulty of \textit{forecasting} backend-specific noise in the short-term, as we do next.

\begin{table*}[t]
	\centering
	\textbf{Experimental Intra-Batch (Early vs Late) Classification}\vspace{1em}
	\small
	\renewcommand{\arraystretch}{1.25}
	\begin{tabular*}{\textwidth}{@{\extracolsep{\fill}}r|ccc|ccc|c}
		\toprule
		
		& \multicolumn{3}{c|}{\textbf{Ankaa-3}}
		& \multicolumn{3}{c|}{\textbf{Garnet}}
		& \textbf{Aria-1} \\
		\cline{2-8}
		\textbf{Feature}
		& $B=1$
		& $B=2$
		& $B=3$
		& $B=1$
		& $B=2$
		& $B=3$
		& $B=3$ \\
		\midrule
		
		Bell Summary & .54 & .53 & .49 & .70 & .45 & .47 & .41 \\
		
		\midrule
		
		Two-Qubit Correlator    & .51 & .54 & .49 & .64 & .44 & .56 & .50 \\
		Pauli-Z Expectation     & .46 & .55 & .54 & .58 & .50 & .53 & .55 \\
		Residual (Abs., Qubit)  & .47 & .52 & .41 & .52 & .50 & .58 & .53 \\
		Residual (Qubit)        & .46 & .55 & .54 & .58 & .50 & .53 & .55 \\
		
		\midrule
		
		Log Ratio    & .53 & .57 & .46 & .55 & .46 & .58 & .56 \\
		Residual (Abs., Rel.)   & .49 & .61 & .45 & .54 & .48 & .57 & .53 \\
		Residual (Rel.)         & .49 & .61 & .45 & .54 & .48 & .57 & .54 \\
		Residual (Abs.)         & .49 & .61 & .45 & .54 & .48 & .57 & .53 \\
		Residual                & .49 & .61 & .45 & .54 & .48 & .57 & .54 \\
		
		\midrule
		
		Raw          & \textbf{.49} & \textbf{.61} & \textbf{.45} & \textbf{.54} & \textbf{.48} & \textbf{.57} & \textbf{.54} \\
		
		\bottomrule
	\end{tabular*}
		\renewcommand{\arraystretch}{1.0}
	\caption{Within each batch, early vs late classification for time-varied GHZ experiments. We order data chronologically by completion time, then split batch data into two groups and classify membership between them. Similar to \cref{tab:batch_classification}, these are logistic-regression classifiers over 5 random stratified samples, with mean test accuracy reported (standard deviations omitted for readability). Aria-1's $B=3$ means that its runs were scheduled alongside Ankaa-3 and Garnet's $B=3$ batch.}
	\label{tab:early_vs_late_classification}
\end{table*}

\section{Distribution Forecasting} \label{appendix:forecasting}
Our final experiment is exploratory and not intended to be a central contribution of the present work. The purpose of this section is to illustrate a natural continuation of the persistent routing formulation: if backend-specific noise signals have a strong (long-term) temporal structure, then is it possible to estimate precisely how noise will present in the future (that is, to ``forecast'' the noise fingerprint)? There are some intriguing possibilities surrounding this question; e.g. can we pre-emptively correct noisy outcome distributions consistently, or can we reverse-forecast to an early time to predict ideal distributions? Such questions make for interesting future work---we only indicate at the possibility for forecasting in this section.

We take the same time-ordered GHZ data and train a long short-term memory (LSTM) recurrent model to predict the next empirical outcome distribution from a short prior of previous distributions. Concretely, for a sequence of length $L$, an input prior takes the form $(\mathbf{y}'_{t-L},\dots,\mathbf{y}'_{t-1})$ and ask for a prediction of $\mathbf{y}'_t$ (note, we forecast the \textit{measured} outcome distribution, not the true distribution). Our exploratory model uses 10 hidden layers of 512 dimensions, dropout, and a softmax normalised output layer, using a KL-divergence-based loss function.

An example prediction, alongside the actual measured distribution, is presented in \cref{fig:forecasting}, taking a prior window of $L=10$ previous successive measurement steps \textit{in batch}. Perhaps unsurprisingly, we are able to near-perfectly capture the precise leakage away from the all-0s and all-1s states, but more impressively, the rough localisation of the leakage across the remaining states can be broadly understood even at this short-term window scale.

\begin{figure}[t]
	\centering
	\includegraphics[width=\textwidth]{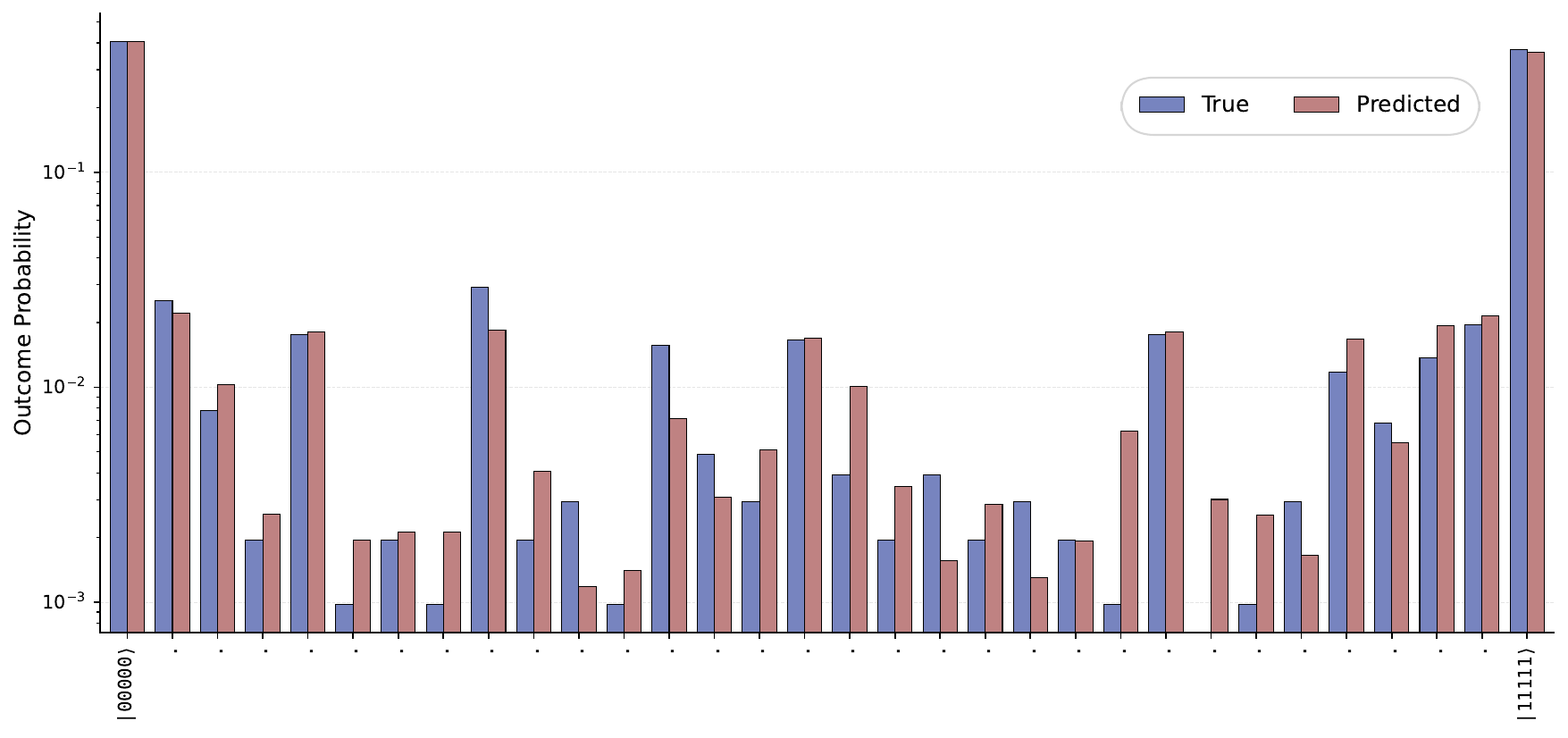}
	\caption{Forecasting time-varied experimental data for Ankaa-3 using an LSTM model; given a prior of sequential measurement steps (here, 10), predict the next chronological step.}
	\label{fig:forecasting}
\end{figure}

At this stage, the result of \cref{fig:forecasting} is merely a proof-of-concept. More complete study would ideally evaluate the correctness of forecasted distributions over many more metrics (we only use KL divergence here), and account for the irregularity of cloud-based computation times. In the context of routing anonymity, the forecasting problem presents an additional security vulnerability---that users may be able to strategically adapt their expectations about backend signals over time, even without an adaptive access model. Conversely, a provider looking to retain anonymity of their routing choices should not only be concerned about the distinguishability of fixed transcript laws, but also the temporal stability and predictability of their released transcripts over repeated service calls (particularly over longer time intervals).

\end{document}